\crefname{section}{Section}{Sections}
\crefname{appendix}{Appendix}{Appendices}
\newtheorem{theorem}{Theorem}\crefname{theorem}{Theorem}{Theorems}
\newtheorem{lemma}[theorem]{Lemma}\crefname{lemma}{Lemma}{Lemmas}
\crefname{corollary}{Corollary}{Corollaries}
\newtheorem{proposition}[theorem]{Proposition}\crefname{proposition}{Proposition}{Propositions}
\theoremstyle{definition}
\crefname{definition}{Definition}{Definitions}
\crefname{remark}{Remark}{Remarks}
\DeclareMathOperator{\tr}{tr}
\DeclareMathOperator{\Prob}{Prob}
\newcommand{\EE}{\mathbb E}
\newcommand{\Hmin}{H_{\min}}
\newcommand{\Hmax}{H_{\max}}
\newcommand{\eps}{\varepsilon}
\begin{document}
%
%
%

\title{Entanglement-assisted capacities of compound quantum channels}
\author{Mario~Berta,
        Hrant~Gharibyan,
        and~Michael~Walter
\thanks{%
M.~Berta acknowledges funding by the SNSF through a fellowship, funding by the Institute for Quantum Information and Matter (IQIM), an NSF Physics Frontiers Center (NFS Grant PHY-1125565) with support of the Gordon and Betty Moore Foundation (GBMF-12500028), and funding support form the ARO grant for Research on Quantum Algorithms at the IQIM (W911NF-12-1-0521).
H.~Gharibyan and M.~Walter gratefully acknowledge support from FQXI and the Simons Foundation.
M.~Walter also gratefully acknowledges support from AFOSR grant FA9550-16-1-0082.}
\thanks{M.~Berta is with the Institute for Quantum Information and Matter, California Institute of Technology, Pasadena, CA 91125 USA (e-mail: berta@caltech.edu).}%
\thanks{H.~Gharibyan is with the Stanford Institute for Theoretical Physics, Stanford University, CA 94305 USA (email: hrant@stanford.edu).}%
\thanks{M.~Walter is with the Stanford Institute for Theoretical Physics, Stanford University, CA 94305 USA (email: michael.walter@stanford.edu).}%
\thanks{\smallskip}
}
\hypersetup{pdftitle={Entanglement-assisted capacities of compound quantum channels},pdfauthor={Mario Berta, Hrant Gharibyan, and Michael Walter}}


\maketitle

\begin{abstract}
We study universal quantum codes for entanglement-assisted quantum communication over compound quantum channels. In this setting, sender and receiver do not know the specific channel that will be used for communication, but only know the set that the channel is selected from. We investigate different variations of the problem: uninformed users, informed receiver, informed sender, and feedback assistance.
We derive single-letter formulas for all corresponding channel capacities. Our proofs are based on one-shot decoupling bounds and properties of smooth entropies.
\end{abstract}


\begin{IEEEkeywords}
Quantum Mechanics, Quantum Entanglement, Compound Channels, Decoupling.
\end{IEEEkeywords}



\section{Introduction}\label{sec:introduction}

\IEEEPARstart{S}{hannon's} noisy channel coding theorem establishes that the asymptotic capacity of a fixed independent and identically distributed (IID) channel $\mathcal{J}_{X\to Y}$ is given by the mutual information between the input and the output of the channel~\cite{shannon48},
\begin{align}\label{eq:shannon}
C(\mathcal{J}_{X\to Y})=\sup_P I(X:Y)_{\mathcal{J}(P)}\,,
\end{align}
optimized over all channel inputs. Moreover, it is known that the capacity is neither increased by feedback assistance from the receiver to the sender nor by shared randomness assistance between the sender and the receiver~\cite{shannon48}.\footnote{In fact, not even post-classical assistance like entanglement or general non-signaling correlations increase the capacity~\cite{bennett02,matthews2014finite}.} Extending Shannon's seminal result~\eqref{eq:shannon}, there has been lots of work concerning scenarios where the channel is not fully known but there is some channel uncertainty, see, e.g., the review article~\cite{lapidoth98}. Here, we are interested in the so-called compound setting when the sender and receiver do not know the specific channel that will be used for communication, but only know the set that the channel is selected from. For compound channels $\Lambda_{X \to Y}=\{\mathcal{J}^i_{ X\to Y}\}_{i\in I}$ with index set $I$, the asymptotic IID capacity was determined to~\cite{blackwell59,wolfowitz59},
\begin{align}\label{eq:classical_plain}
C(\Lambda_{X \to Y})=\sup_{P_X}\inf_{i \in I}I(X:Y)_{\mathcal J^i(P)}\,.
\end{align}
Furthermore, and in contrast to the case of a single channel, feedback from the receiver to the sender improves this to~\cite{wolfowitz78,Shrader09},
\begin{align}\label{eq:classical_IS}
C_{F}(\Lambda_{X\to Y})=\inf_{i \in I} C(\mathcal{J}^i_{X\to Y})\,.
\end{align}

Now, unlike in the classical case, IID quantum channels have various different asymptotic capacities and only in special cases formulas to compute these are known~\cite{holevo98,bennett02,schumacher97,devetak05b,lloyd97,shor02}. For the entanglement-assisted quantum capacity, however, a formula similar to~\eqref{eq:shannon} is available and for that reason it is considered to be the natural quantum analogue of the capacity of classical channels. For a fixed quantum channel $\mathcal{N}_{A\to B}$, the entanglement-assisted quantum capacity is given by the quantum mutual information~\cite{bennett02},
\begin{align}\label{eq:Q_entanglement}
Q_E(\mathcal{N}_{A\to B})=\sup_\rho \frac{1}{2}I(A':B)_{\mathcal{N}(\rho)}\,,
\end{align}
optimized over all purified input distributions $\rho_{A'A}$. Like in the classical case, feedback assistance from the receiver to the sender does not increase this capacity~\cite{bowen02}.\footnote{As shown in~\cite{matthews14}, the quantum capacity assisted by non-signaling correlations is also given by~\eqref{eq:Q_entanglement}.} By quantum teleportation~\cite{PhysRevLett.70.1895} and superdense coding~\cite{PhysRevLett.69.2881}, the corresponding entanglement-assisted classical capacity is the same up to a factor of two.

In this paper we study the entanglement-assisted capacities of compound quantum channels $\Pi_{A\to B}=\{\mathcal{N}^i_{A\to B}\}_{i \in I}$, where each $\mathcal{N}^i_{A\to B}$ is a quantum channel between finite-dimensional input and output spaces $A$ and $B$, respectively, and where $I$ is an arbitrary index set. In full analogy to the classical case~\eqref{eq:classical_plain}--\eqref{eq:classical_IS}, we show that the compound quantum capacity without feedback is given by
\begin{align}\label{eq:compound_plain}
Q_{E}(\Pi_{A\to B})=\sup_\rho\inf_{i \in I}\frac{1}{2}I(A':B)_{\mathcal{N}^i(\rho)}\,,
\end{align}
and that feedback improves the capacity to
\begin{align}\label{eq:compound_feedback}
Q_{E,F}(\Pi_{A\to B})=\inf_{i \in I}Q_E(\mathcal{N}^i_{A\to B})\,.
\end{align}
Here, the first formula holds for general compounds whereas we can only show the second formula for finite compound channels, $\lvert I \rvert<\infty$.\footnote{The feedback capacity of general classical compound channels has only been determined recently~\cite{Shrader09}.} In the process of deriving~\eqref{eq:compound_plain}--\eqref{eq:compound_feedback} we also determine the entanglement-assisted compound capacities when either the sender or the receiver is informed about the channel used for communication (this is again in full analogy to the classical case):
\begin{align*}
Q_{E,IR}(\Pi_{A\to B})&=Q_{E}(\Pi_{A\to B})\quad\text{as well as} \\
\quad Q_{E,IS}(\Pi_{A\to B})&=Q_{E,F}(\Pi_{A\to B})\,.
\end{align*}

\paragraph*{Prior work}
The plain classical and quantum capacities of compound quantum channels were studied in~\cite{boche09,hayashi09b,Boche13} and~\cite{bjelakovic2008quantum,bjelakovic09,bjelakovic2009entanglement}, respectively. In~\cite{1751-8121-40-28-S20,Datta2008} the plain and entanglement-assisted classical capacities of a class of quantum channels with long-term memory were determined. As we will see the latter result is equivalent to formula~\eqref{eq:compound_plain} for finite compounds.

\paragraph*{Techniques}
For our proofs we follow the decoupling approach to quantum information theory~\cite{horodecki05,horodecki06,hayden08,abeyesinghe09,dupuis2009decoupling}. In particular we make use of one-shot decoupling bounds~\cite{dupuis2014one} in terms of smooth entropies~\cite{renner2005security,tomamichel2015quantum}. We also utilize various techniques developed in aforementioned references about compound quantum channels.

\section{Overview of results}\label{sec:results}

Here we introduce the setup more formally and give a summary of our results and methods. Let $\Pi_{A\to B} = \{ \mathcal N^i_{A\to B} \}_{i \in I}$ be a finite-dimensional compound quantum channel (i.e., a collection of quantum channels with fixed and finite-dimensional input and output systems, labeled by some index set $I$). Our goal is the quantification of the information-theoretic power of such channels. In particular, we determine their asymptotic IID capacities when free entanglement-assistance between the sender and the receiver in the form of maximally entangled states $\Phi^+_{A_1B_1}$ is available.
\smallskip

\paragraph{Uninformed users} An entanglement transmission code for $\Pi_{A \rightarrow B}$ is given by a quadruple $\{M_0,M_1,\mathcal{E},\mathcal{D}\}$, where $M_0$ is the local dimension of a maximally entangled state $\Phi^+_{A_0R}$ that is to be transmitted, $M_1$ denotes the local dimension of the entanglement-assistance $\Phi^+_{A_1B_1}$, and the quantum channels $\mathcal{E},\mathcal{D}$ are the encoding and decoding operations, respectively.
We say that a triple $(R,n,\delta)$ is achievable for $\Pi_{A \rightarrow B}$ if there exists an entanglement transmission code with
\begin{equation}\label{eq:coding}
\begin{aligned}
&\frac{1}{n}\log M_0\geq R\quad\mathrm{and}\quad\inf_{i\in I}F(\Phi^+_{A_0R},\mathcal{D}_{BB_1\to A_0}\circ\\
&\qquad\mathcal{N}^i_{A\rightarrow B}\circ\mathcal{E}_{A_0A_1\to A}(\Phi^+_{A_0R}\otimes\Phi^+_{A_1 B_1}))\geq1-\delta.
\end{aligned}
\end{equation}
Here, $R$ is the rate of the code, $n\in\mathbb{N}$ the number of channel uses, and $\delta>0$ the tolerated error measured in terms of Uhlmann's fidelity $F(\rho,\sigma):=\lVert\sqrt{\rho}\sqrt{\sigma}\rVert_1^2$. This means that for every channel $\mathcal{N}^i_{A\to B}$ in the compound, the fidelity of transmission should be high, averaged over a uniform message of size $M_0$. In the asymptotic IID limit the entanglement-assisted capacity of the compound quantum channel $\Pi_{A \rightarrow B}$ is then defined as
\begin{equation}\label{eq:capacity}
\begin{aligned}
&Q_E(\Pi_{A \rightarrow B}):= \lim_{\delta\to0}\liminf_{n\to\infty}\\
&\quad \sup\left\{R:\text{$(R,n,\delta)$ is achievable for $\Pi_{A \rightarrow B}$}\right\}.
\end{aligned}
\end{equation}
Slightly different definitions of entanglement transmission codes~\eqref{eq:coding} and, correspondingly, channel capacities~\eqref{eq:capacity} are possible as well and we point to~\cite{Kretschmann04} for an in-depth discussion. We prove the following formula for $Q_E(\Pi_{A \rightarrow B})$:

\begin{theorem}\label{thm:uninformed}
Let $\Pi_{A \rightarrow B}  = \{ \mathcal N^i_{A\rightarrow B} \}_{i \in I}$ be a compound channel with arbitrary index set $I$.
Then, we have
\begin{align}\label{eq:Q_E}
Q_E(\Pi_{A \rightarrow B}) = \frac12 \sup_{\rho} \inf_{i\in I}I(A':B)_{\mathcal{N}^i(\rho)},
\end{align}
where the supremum is over all purified input distributions $\rho_{A'A}$ with $A \cong A'$.
\end{theorem}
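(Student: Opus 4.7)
My plan is to prove the two directions $Q_E\ge\ldots$ and $Q_E\le\ldots$ separately.

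\emph{Converse.} Given any $(R,n,\delta)$-code for $\Pi$, let $\sigma^{(n)}_{RA^n}$ denote the state produced by the encoder on $A^n$, together with a purifying reference $R$ comprising the message register and the sender's half of the entanglement. The standard single-channel entanglement-assisted converse, applied to each $\mathcal{N}^i$ individually, gives
\begin{align*}
nR \leq \tfrac{1}{2}\, I(R:B^n)_{\mathcal{N}^{i\otimes n}(\sigma^{(n)})} + nf(\delta),
\end{align*}
with $f(\delta)\to 0$, uniformly in $i\in I$. Since the encoder cannot depend on $i$, the same $\sigma^{(n)}$ must work for every $i$, so I may take $\inf_{i\in I}$ outside. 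To single-letterize the resulting multi-letter quantity, I would apply the chain rule $I(R:B^n)\le\sum_k I(RB_{<k}:B_k)$, then use data processing to bound each term by $I(A'_k:B_k)_{\mathcal{N}^i(\rho_k)}$, where $\rho_k$ is the reduced input on $A_k$ and $A'_k$ is a canonical purification of $\rho_k$ (the inequality comes from extending $RB_{<k}$ to a full purification of $\rho_k$). Finally, concavity of $\rho\mapsto I(A':B)_{\mathcal{N}^i(\rho)}$ in the input lets me replace the average $\frac{1}{n}\sum_k\rho_k$ by a single input state $\bar\rho$ that is crucially \emph{independent of $i$}. This yields $\inf_i\frac{1}{n}I(R:B^n)\le\sup_\rho\inf_i I(A':B)_{\mathcal{N}^i(\rho)}$, from which the converse follows after $\delta\to 0$ and $n\to\infty$.

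\emph{Achievability.} I would follow the one-shot decoupling approach of~\cite{dupuis2014one}. Fix $\rho_A$ nearly attaining the supremum, with purification $\psi_{A'A}$, and dilate each $\mathcal{N}^i$ to $U^i:A\to BE_i$. For a single $\mathcal{N}^i$, a Haar-random unitary encoder applied to $\psi^{\otimes n}$ decouples the reference $A'^n$ from the environment $E_i^n$ at any rate below $\tfrac{1}{2n}\Hmin^\eps(A^n|E_i^n)$, which converges to $\tfrac{1}{2}I(A':B)_{\mathcal{N}^i(\rho)}$ by the smooth-entropy AEP; Uhlmann's theorem then supplies a decoder $\mathcal{D}_i$. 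For finite $|I|$, the union bound across $i$ delivers a single encoder that decouples for all $i$ simultaneously, and the $i$-dependence of the decoders is removed in the uninformed-user setting by spending a sublinear $o(n)$ block of channel uses on asymptotic channel identification, after which the receiver applies the correct $\mathcal{D}_i$ (costing only $o(1)$ in rate and error). For an arbitrary index set, a diamond-norm $\eps$-net argument (using compactness of the set of finite-dimensional channels and continuity of $\mathcal{N}\mapsto I(A':B)_{\mathcal{N}(\rho)}$) reduces to the finite-compound case.

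\emph{Expected main obstacles.} I expect the single-letterization of the converse to be the most delicate step: the averaged input $\bar\rho$ must genuinely be $i$-independent to survive the infimum, which in turn forces chain rule, data-processing, and concavity to be combined in the right order and with the right choice of reference. On the achievability side, the principal challenge is maintaining uniform control of the decoupling error and the final fidelity across a (possibly infinite) compound via a diamond-norm $\eps$-net, together with the conversion of the channel-dependent Uhlmann decoders into a truly uninformed-user decoder via asymptotic channel discrimination.
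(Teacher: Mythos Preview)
Your converse is essentially the paper's argument: both single-letterize via sub-additivity/chain rule and data processing, and both hinge on the observation that the averaged input state $\bar\rho=\frac1n\sum_k\rho_k$ is independent of $i$, so the infimum survives. (The paper in fact proves the stronger informed-receiver converse and deduces the uninformed one from $Q_{E,IR}\ge Q_E$, but the mechanics are the same.)

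Your achievability differs from the paper in one structural point. You decouple for each $\mathcal N^i$ separately, use a union bound to get a common encoder, and then---because Uhlmann hands you channel-dependent decoders $\mathcal D_i$---you patch this by spending an $o(n)$ block on channel identification so that the receiver can pick the right $\mathcal D_i$. This works, but the paper avoids the identification step entirely: it applies the single-channel one-shot coding theorem directly to the \emph{average channel} $\overline\Pi=\frac1{|I|}\sum_i\mathcal N^i$. This yields a genuinely universal encoder--decoder pair in one stroke, and the worst-case fidelity over the compound follows from the trivial bound $1-F_i\le |I|\,(1-F_{\mathrm{avg}})$. The only extra ingredient is a smooth max-entropy bound for mixtures, $\Hmax^\eps(A'|B)_{\overline\Pi(\rho)}\le\max_i\Hmax^\eps(A'|B)_{\mathcal N^i(\rho)}+2\log|I|$, which costs nothing asymptotically. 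Both routes then reduce from arbitrary $I$ to finite $I$ via a diamond-norm net, exactly as you propose. The net effect is that the paper's proof is shorter and needs no channel discrimination; your route is correct but imports machinery (Proposition~10 in the paper) that is really only needed for the feedback capacity.
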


To prove the converse, we rely on the quantum generalization of the meta converse for channel coding from~\cite{matthews2014finite}, together with some extensions which were inspired by the work~\cite{Datta2008}.
To prove achievability, our starting point is a one-shot coding theorem for fixed channels~\cite{dupuis2009decoupling} that is formulated in terms of smooth entropies. These entropies were introduced to study structureless (non-IID) resources~\cite{renner2005security} and have many properties similar to the von Neumann entropy~\cite{tomamichel2015quantum}. By applying the one-shot result to the average channel $\overline\Pi_{A\to B}:= \frac 1 {\lvert I \rvert} \sum_{i\in I} \mathcal{N}^i_{A\to B}$ associated with a finite compound channel (cf.~the work~\cite{bjelakovic2008quantum})
and exploiting basic properties of smooth entropies, we obtain a one-shot coding theorem for finite compound channels. In the asymptotic limit, this establishes \cref{thm:uninformed} for finite compound channels. In light of this proof and the equivalence between the entanglement-assisted transmission of classical and quantum information, as discussed below, this result can be seen to be equivalent to the work~\cite{Datta2008} where the entanglement assisted classical capacity of a class of quantum channels of the form $\hat{\mathcal N}_{A^n\to B^n}=\sum_{i \in I} p_i \, (\mathcal{N}^i_{A\to B})^{\otimes n}$ with $\{p_i\}_{i \in I}$ a finite probability distribution was determined (using different techniques). To extend the proof to arbitrary compound channels we use a discretization argument via net techniques (similarly to what is done in, e.g., \cite{bjelakovic2009entanglement}). The basic reason why this works is that the random coding schemes that we make use of have a well behaved error dependence, together with the mutual information being nicely continuous.

\paragraph{Informed receiver} When the receiver but not the sender knows which one of the channels $\mathcal{N}_{A\to B}^i$ from the compound will be used for the information transmission, the decoder $\mathcal D^i_{BB_1\to A_0}$ can depend on the channel and thus~\eqref{eq:coding} becomes
\begin{align*}
\frac{1}{n}\log M_0\geq R\quad\mathrm{and}\quad\inf_{i\in I}F(\Phi^+_{A_0R},\mathcal{D}^i_{BB_1\to A_0}\circ \\
\quad \mathcal{N}^i_{A\rightarrow B}\circ\mathcal{E}_{A_0A_1\to A}(\Phi^+_{A_0R}\otimes\Phi^+_{A_1 B_1}))\geq1-\delta,
\end{align*}
The corresponding capacity $Q_{E,IR}$ is then defined as in~\eqref{eq:capacity} and we find that it does not increase compared to the uninformed case.

\begin{theorem}\label{thm:IR}
Let $\Pi_{A \rightarrow B}  = \{ \mathcal N^i_{A\rightarrow B} \}_{i \in I}$ be a compound channel with arbitrary index set $I$. Then, we have
\begin{align*}
Q_{E, IR}(\Pi_{A \rightarrow B}) = Q_E(\Pi_{A \rightarrow B}).
\end{align*}
\end{theorem}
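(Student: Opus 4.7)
The achievability direction $Q_{E,IR}(\Pi_{A\to B}) \ge Q_E(\Pi_{A\to B})$ is immediate: every entanglement transmission code $\{M_0,M_1,\mathcal{E},\mathcal{D}\}$ for the uninformed setting is a valid code for the informed receiver by choosing $\mathcal{D}^i := \mathcal{D}$ for every $i$. So the content of the theorem is the converse, namely
\begin{align*}
Q_{E,IR}(\Pi_{A\to B}) \le \tfrac12 \sup_{\rho} \inf_{i\in I} I(A':B)_{\mathcal{N}^i(\rho)},
\end{align*}
which by \cref{thm:uninformed} equals $Q_E(\Pi_{A\to B})$.

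The plan is to run the same quantum meta-converse that the authors announce for \cref{thm:uninformed}, and exploit the simple but crucial observation that the encoder $\mathcal{E}_{A_0A_1\to A^n}$ is \emph{not} allowed to depend on $i$, even when the receiver is informed. Concretely, fix an informed-receiver code $\{M_0,M_1,\mathcal{E},\{\mathcal{D}^i\}_{i\in I}\}$ of block length $n$, rate $R$, and error $\delta$. Let
\begin{align*}
\rho_{RB_1A^n} := \mathcal{E}_{A_0A_1\to A^n}\bigl(\Phi^+_{A_0R}\otimes\Phi^+_{A_1B_1}\bigr),
\end{align*}
and set $A' := RB_1$ as the purifying system of the channel input $A^n$. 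This state is \emph{the same for every $i\in I$}. For each fixed $i$, the triple $(\mathcal{E},\mathcal{N}^i,\mathcal{D}^i)$ is a genuine entanglement-assisted entanglement transmission code for the single channel $\mathcal{N}^i_{A\to B}$ with input $\rho$, rate $R$, and error $\delta$.

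Now apply the quantum meta-converse of \cite{matthews2014finite} (with the extensions the authors use for \cref{thm:uninformed}) channel by channel: there exist states $\sigma^{B^n}_i$ such that, for every $i\in I$,
\begin{align*}
\log M_0 \le D_H^{\sqrt{\delta}}\bigl(\mathcal{N}^{i,\otimes n}(\rho_{A'A^n})\,\big\|\,\rho_{A'}\otimes \sigma^{B^n}_i\bigr).
\end{align*}
Since the left-hand side does not depend on $i$, we may take the infimum over $i$ on the right, and then the supremum over the single input state $\rho$:
\begin{align*}
\log M_0 \le \sup_{\rho_{A'A^n}} \inf_{i\in I} D_H^{\sqrt{\delta}}\bigl(\mathcal{N}^{i,\otimes n}(\rho)\,\big\|\,\rho_{A'}\otimes \sigma^{B^n}_i\bigr).
\end{align*}
Passing to the asymptotic IID limit, the quantum asymptotic equipartition property converts each $\frac1n D_H^{\sqrt{\delta}}$ on IID states into the corresponding quantum mutual information, so in the limit $n\to\infty$ and then $\delta\to 0$ the bound becomes $R \le \tfrac12 \sup_\rho \inf_{i\in I} I(A':B)_{\mathcal{N}^i(\rho)}$, which is the desired converse.

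The one genuine subtlety, and the step I expect to need the most care, is justifying the interchange of $\inf_i$ with the asymptotic/AEP limit when the index set $I$ is arbitrary (possibly infinite). For finite $I$ this is automatic, and for general $I$ it can be handled by the same net/discretization argument the authors indicate for \cref{thm:uninformed}: the mutual information $\frac12 I(A':B)_{\mathcal{N}^i(\rho)}$ depends continuously on $\mathcal{N}^i$ in the diamond norm (with $\rho$ fixed in a finite-dimensional space), so replacing $\Pi$ by a sufficiently fine finite net reduces the problem to the finite-$I$ case with vanishing loss in rate. Other than this, every remaining ingredient is either a direct citation or a straightforward adaptation of the uninformed converse, since the whole argument hinges only on the fact that the encoder, and hence the channel input state, is $i$-independent.
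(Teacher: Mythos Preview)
Your high-level strategy is right: the only content is the converse, and the key observation is that the encoder---hence the channel input state---is $i$-independent. But the execution has a genuine gap at the step where you invoke the AEP.

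The state $\rho_{A'A^n}=\mathcal E_{A_0A_1\to A^n}(\Phi^+_{A_0R}\otimes\Phi^+_{A_1B_1})$ is \emph{not} an IID state; the encoder is an arbitrary map into $A^{\otimes n}$. Consequently $\mathcal N^{i,\otimes n}(\rho_{A'A^n})$ is not a tensor power either, and the asymptotic equipartition property for $D_H^\eps$ simply does not apply. What you obtain from the meta-converse is a bound of the form $\log M_0 \lesssim D_H^{\sqrt\delta}\bigl((\mathcal N^i)^{\otimes n}(\rho)\,\big\|\,\rho_{A'}\otimes\sigma_i^{B^n}\bigr)$ on $n$ systems, and turning this into the single-letter quantity $\inf_i I(A':B)_{\mathcal N^i(\rho)}$ for a \emph{single-copy} input $\rho_{AA'}$ requires two further steps that your proposal does not supply: (i) a single-letterization of the multi-copy mutual information, and (ii) the construction of one $i$-independent single-copy input state. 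The paper does exactly these two things: it first passes from $D_H$ to the von Neumann mutual information (via \cite{matthews2014finite}, obtaining $\log M_0\leq \tfrac1{1-\delta}(I(\rho_{A^n},(\mathcal N^i)^{\otimes n})+h(\delta))$), then uses the Adami--Cerf sub-additivity $I(\rho_{A^n},(\mathcal N^i)^{\otimes n})\le\sum_k I(\rho_A^k,\mathcal N^i)$, and finally builds a flag state $\sigma_{AA'XY}$ (independent of $i$) so that a chain-rule computation gives $\tfrac1n\sum_k I(\rho_A^k,\mathcal N^i)\le I(\sigma_A,\mathcal N^i)$. Only after this does the infimum over $i$ make sense with a fixed input.

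A secondary point: your worry about interchanging $\inf_i$ with the asymptotic limit, and the proposed discretization fix, is misplaced for the converse. Once the above single-letterization produces an $i$-independent state $\sigma_A$, the bound $\tfrac1n\log M_0\le \tfrac1{1-\delta}\bigl(I(\sigma_A,\mathcal N^i)+\tfrac{h(\delta)}{n}\bigr)$ holds for every $i\in I$ simultaneously, so taking $\inf_i$ is immediate with no approximation needed, even for arbitrary $I$. Discretization in this paper is used only on the achievability side.
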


For the proof we simply note that the converse established for the uninformed case does not rely on the decoder being uninformed and thus holds verbatim for informed receivers.

\smallskip

\paragraph{Informed sender} When the sender but not the receiver is aware of which channel $\mathcal{N}_{A\to B}^i$ from the compound will be used for the information transmission, the encoder $\mathcal{E}^i_{A_0A_1\to A}$ may now depend on $i\in I$ and thus~\eqref{eq:coding} becomes
\begin{align*}
&\frac{1}{n}\log M_0\geq R\quad\mathrm{and}\quad\inf_{i\in I}F(\Phi^+_{A_0R},\mathcal{D}_{BB_1\to A_0}\circ \\
&\qquad \mathcal{N}^i_{A\rightarrow B}\circ\mathcal{E}^i_{A_0A_1\to A}(\Phi^+_{A_0R}\otimes\Phi^+_{A_1 B_1}))\geq1-\delta.
\end{align*}
The corresponding capacity $Q_{E,IS}$ is then defined as in~\eqref{eq:capacity} and we find that in general the capacity increases. More precisely, we show that infimum and supremum in the formula~\eqref{eq:Q_E} can be interchanged, so that the entanglement-assisted quantum capacity of a compound equals the infimum of the capacities of its constituents:

\begin{theorem}\label{thm:IS}
Let $\Pi_{A\to B}=\{\mathcal{N}^i_{A\to B}\}_{i \in I}$ be a compound channel with arbitrary index set $I$. Then, we have
\begin{align*}
Q_{E,IS}(\Pi_{A\to B})&=\inf_{i\in I}Q_E(\mathcal{N}^i_{A\to B}),
\intertext{where}
Q_E(\mathcal{N}^i_{A \to B})&=\frac12\max_{\rho}I(A':B)_{\mathcal{N}^i(\rho)}
\end{align*}
is the entanglement-assisted capacity of the $i$-th channel as determined in~\cite{bennett02}.
\end{theorem}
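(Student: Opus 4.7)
The plan is to prove the two inequalities separately. The converse $Q_{E,IS}(\Pi_{A\to B})\le\inf_{i\in I}Q_E(\mathcal{N}^i_{A\to B})$ is essentially automatic: any informed-sender code $(M_0,M_1,\{\mathcal{E}^i\}_{i\in I},\mathcal{D})$ whose transmission fidelity is uniformly at least $1-\delta$ in $i$ restricts, upon fixing any $i_0\in I$, to an ordinary entanglement-assisted code $(M_0,M_1,\mathcal{E}^{i_0},\mathcal{D})$ for the single channel $\mathcal{N}^{i_0}_{A\to B}$ with the same fidelity guarantee. Invoking the single-channel converse of~\cite{bennett02} then yields $R\le Q_E(\mathcal{N}^{i_0}_{A\to B})$, and infimizing over $i_0$ closes this direction.

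For the matching achievability bound, the plan is a two-block coding strategy in which the informed sender first spends a sublinear number of channel uses to communicate its index to the receiver, after which both parties execute a tailored single-channel code. We may assume $\inf_{i\in I}Q_E(\mathcal{N}^i_{A\to B})>0$, as there is otherwise nothing to prove; in particular every $\mathcal{N}^i$ is non-constant (non-replacer). For finite $I$, I would split $n$ uses into $n_1=\lceil\sqrt{n}\rceil$ and $n_2=n-n_1$. In the first block the sender transmits the $\lceil\log\lvert I\rvert\rceil$ classical bits identifying $i$ via a universal entanglement-assisted classical code for the compound $\Pi_{A\to B}$; such a code exists because already on the maximally entangled input $\inf_{i\in I}I(A':B)_{\mathcal{N}^i(\rho)}>0$ (non-constancy gives a strictly positive quantum mutual information on the Choi state, and the infimum of finitely many positive numbers is positive), so by \cref{thm:uninformed} (doubled via superdense coding) the compound entanglement-assisted classical capacity is strictly positive and $n_1$ uses suffice. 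In the second block the receiver---now jointly aware of $i$---executes with the sender the optimal single-channel entanglement-assisted code of~\cite{bennett02} for $\mathcal{N}^i_{A\to B}$, achieving rate arbitrarily close to $Q_E(\mathcal{N}^i_{A\to B})$. The overall rate tends to $\inf_{i\in I}Q_E(\mathcal{N}^i_{A\to B})-\eps$ and the total error is the sum of two vanishing block errors.

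For arbitrary (possibly infinite) $I$, I would discretize. Since $A$ and $B$ are finite-dimensional, the set of CPTP maps $A\to B$ is compact in the diamond norm, so for each $\eta>0$ there is a finite subset $I_\eta\subseteq I$ such that every $\mathcal{N}^i$ lies within diamond distance $\eta$ of some $\mathcal{N}^{\tilde i(i)}$ with $\tilde i(i)\in I_\eta$. The informed sender maps $i\mapsto\tilde i(i)$ and runs the finite-compound protocol above with respect to $\{\mathcal{N}^{\tilde i}\}_{\tilde i\in I_\eta}$. Continuity of the quantum mutual information in the channel together with stability of the one-shot decoupling-based code under diamond-norm perturbations then imply that the achievable rate is at least $\inf_{\tilde i\in I_\eta}Q_E(\mathcal{N}^{\tilde i}_{A\to B})-O(\eta)\ge\inf_{i\in I}Q_E(\mathcal{N}^i_{A\to B})-O(\eta)$, with only an extra $O(\eta)$ decoding error from running a $\tilde i(i)$-designed decoder against $\mathcal{N}^i$; letting $\eta\to 0$ after $n\to\infty$ closes the remaining gap. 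The main obstacle is precisely this continuity and stability step: one needs quantitative bounds showing that the decoder's error degrades only by $O(\eta)$ under $\eta$-close channel substitutions, uniformly in the block length and the rate parameters---this is what the Techniques remark about ``well-behaved error dependence'' and continuous mutual information is meant to supply, but the two limits must be scheduled carefully so that both approximation errors vanish together.
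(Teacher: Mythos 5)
Your converse coincides with the paper's (\cref{lem:IS converse}): restrict the informed-sender code to any fixed $i_0$ and invoke the single-channel converse of~\cite{bennett02}. Your achievability, however, takes a genuinely different route. The paper keeps the receiver truly uninformed at the coding level: it fixes channel-dependent random-unitary encoders~\eqref{eq:IS ansatz} and proves directly, via new one-shot decoupling bounds for the average encoder-and-channel map (\cref{lem:average decoupling non-smooth,lem:average decoupling smooth}, whose main labor is controlling the cross terms between branches $i\neq j$ by a Haar-average/swap-trick computation), that a single universal decoder exists (\cref{thm:IS one-shot}); it then smooths, applies the asymptotic equipartition property, and discretizes at scale $\nu=1/n^2$. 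You instead let the informed sender forward its index over a sublinear prefix of channel uses with a universal classical code -- whose positive rate follows from \cref{thm:uninformed}, superdense coding, and the fact that $\inf_i Q_E(\mathcal N^i)>0$ forces $\inf_{i} I(A':B)_{\mathcal N^i(\Phi^+)}>0$ (for infinite $I$ this last step needs a compactness argument on the diamond-norm closure of the compound, not just ``finitely many positive numbers'') -- after which both parties run the optimal single-channel code. This is the forward-direction analogue of the paper's own feedback proof (\cref{feedback:feedback direct}) and is conceptually lighter; the two-stage decoder (measure the prefix, then apply $\mathcal D^{\hat i}$ to the suffix) is a single $i$-independent quantum operation, so the code is legitimate. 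What your route gives up is the one-shot statement \cref{thm:IS one-shot} itself, which the paper reuses elsewhere.

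Two gaps need attention. First, the universal classical code of~\eqref{eq:classical_coding} only controls the error averaged over messages, whereas you send the specific message $m=i$ through the correlated channel $\mathcal N^i$; you need a maximal-error guarantee, obtained by standard expurgation (for each of the $N$ channels discard the worst $M_0/(2N)$ messages), which costs nothing here since the prefix rate is irrelevant. Second, and more seriously, your infinite-$I$ argument is wrong as written: running a code designed for $\mathcal N^{\tilde i}$ against $(\mathcal N^i)^{\otimes n_2}$ incurs an error of order $\sqrt{n_2\eta}$ by~\eqref{eq:fidelity channel continuity}, because the diamond distance between the $n_2$-fold tensor powers is $n_2\eta$, not $\eta$; hence ``letting $\eta\to0$ after $n\to\infty$'' fails -- for any fixed $\eta$ the substitution error diverges with $n$. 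The repair is the paper's: set $\eta=1/n^2$ before taking $n\to\infty$, note the net size $N(\eta)\leq(6/\eta)^{2d_{AB}^2}$ then grows only polynomially in $n$ (so the prefix message is $O(\log n)$ bits, still transmissible in $\sqrt n$ uses at positive rate, and the expurgation loss remains controlled), and check that the rate loss $\lvert Q_E(\mathcal N^i)-Q_E(\mathcal N^{\tilde i})\rvert$, bounded via~\eqref{eq:fannes mutual info}, and the substitution error $O(\sqrt{n\eta})=O(1/\sqrt n)$ both vanish. With that scheduling your argument closes.
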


The converse direction follows directly from the converse for a fixed channel~\cite{bennett02}, since the capacity can not be higher than the capacity of any single channel in the compound. For the achievability, we derive new one-shot decoupling bounds that then imply the existence of a universal decoder by a standard argument via Uhlmann's theorem. As before we reduce from general to finite compounds by using a discretization argument via net techniques.

\smallskip

\paragraph{Feedback} We also solve the scenario with free feedback from receiver to sender.\footnote{Since we assume free entanglement-assistance, classical and quantum feedback are equivalent by quantum teleportation~\cite{PhysRevLett.70.1895} and superdense coding~\cite{PhysRevLett.69.2881}.} The first round of a feedback-assisted entanglement transmission code starts with a quantum or classical feedback message $X_{A}^{(0)}$ that is correlated with some $X_{B}^{(0)}$ at the receiver, followed by an encoder $\mathcal{E}_{X_{A}^{(0)}A_0A_1\to A}$, the channel $\mathcal{N}^i_{A\rightarrow B}$ from the compound, and a decoder $\mathcal{D}_{X_{B}^{(0)}BB_1\to A_0}$. Now, for $n$ channel uses the procedure repeats $n$-times.  The corresponding capacity $Q_{E,F}$ is then defined as in~\eqref{eq:capacity}, where we have an application of $n$-rounds of encoders and decoders and the fidelity bound in~\eqref{eq:coding} holds for these $n$-rounds~\cite{bowen02}. We find that the corresponding feedback capacity is the same as in the case of informed senders (at least for finite compounds). Therefore, in analogy to the classical case, feedback in general increases the entanglement-assisted quantum capacity of compound channels.

\begin{theorem}\label{thm:feedback}
Let $\Pi_{A\to B}=\{\mathcal{N}^i_{A\to B}\}_{i\in I}$ be a compound channel with finite index set $|I|<\infty$. Then, we have
\begin{equation*}
Q_{E,F}(\Pi_{A\to B})=Q_{E,IS}(\Pi_{A\to B}).
\end{equation*}
\end{theorem}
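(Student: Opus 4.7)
The plan is to prove both inequalities, using \cref{thm:IS} to rewrite $Q_{E,IS}(\Pi_{A\to B}) = \inf_{i\in I} Q_E(\mathcal{N}^i_{A\to B})$. The converse is almost immediate: any feedback-assisted entanglement-transmission code that achieves rate $R$ with error $\delta$ on $\Pi_{A\to B}$ must, by \eqref{eq:coding}, have error at most $\delta$ for \emph{each} channel $\mathcal{N}^i_{A\to B}$, and so is in particular a feedback-assisted code of rate $R$ for the single channel $\mathcal{N}^i$. Invoking Bowen's theorem~\cite{bowen02}, which states that feedback does not enhance the entanglement-assisted capacity of a fixed channel, gives $R \leq Q_E(\mathcal{N}^i)$. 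Taking the infimum over $i$ yields $Q_{E,F}(\Pi_{A\to B}) \leq \inf_{i \in I} Q_E(\mathcal{N}^i) = Q_{E,IS}(\Pi_{A\to B})$.

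The achievability direction, which relies crucially on $|I| < \infty$, calls for a two-phase \emph{train-then-code} protocol. In Phase~1, sender and receiver spend $m_n = o(n)$ channel uses (say $m_n = \lceil \log^2 n \rceil$) feeding the sender's halves of the entanglement-assistance $\Phi^+_{A_1B_1}$ through the channel; the receiver then holds $m_n$ copies of the Choi state $(\mathcal{N}^i \otimes \mathrm{id})(\Phi^+)$, and since we may assume the channels in $\Pi_{A\to B}$ pairwise distinct (otherwise merge duplicates), the receiver identifies $i$ with error $\eps_n = e^{-\Omega(m_n)} \to 0$ by optimal quantum state discrimination among the $|I|$ candidate Choi states. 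The receiver then communicates $i$ to the sender via the feedback link, at a cost of $\log|I|$ classical bits and no additional channel uses. In Phase~2, both parties use the remaining $n - m_n$ channel uses to run the informed-sender code for $\mathcal{N}^i$ provided by \cref{thm:IS}, achieving rate arbitrarily close to $Q_E(\mathcal{N}^i)$. The overall rate is $(1 - m_n/n)(Q_E(\mathcal{N}^i) - \eps) \to Q_E(\mathcal{N}^i) \geq \inf_{j\in I} Q_E(\mathcal{N}^j) = Q_{E,IS}(\Pi_{A\to B})$, and the overall infidelity is controlled by a union bound: conditional on correct identification the Phase-2 fidelity is at least $1-\delta$, while a misidentification event contributes at most $\eps_n \to 0$.

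The main obstacle will be the clean interfacing of the two phases. The Phase-2 code from \cref{thm:IS} is designed assuming that the sender knows $i$ from the start, whereas here the sender learns $i$ only after Phase~1 and only with probability $1 - \eps_n$; the analysis must carefully invoke monotonicity of trace distance (equivalently, of Uhlmann fidelity) to absorb the misidentification event into the overall error. Bookkeeping of the entanglement resource $\Phi^+_{A_1B_1}$ across the two phases is harmless since entanglement is assumed to be free, and the rate of Phase~2 need only be slightly below $Q_E(\mathcal{N}^i)$ to absorb the $m_n/n$ training overhead. Finiteness of $|I|$ is essential throughout: it keeps the feedback overhead $\log|I|$ constant and yields uniform exponential convergence of channel discrimination, features that fail for general index sets and that motivate the restriction $|I| < \infty$ in the statement.
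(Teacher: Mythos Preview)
Your proposal is correct and follows essentially the same two-phase ``estimate-then-code'' strategy as the paper: both use a sublinear number of channel uses for channel identification, transmit the estimated index via feedback, and then run an informed-sender code on the remaining uses; the converse via Bowen's theorem is identical. The only noteworthy differences are technical: the paper invokes a specific channel-estimation result (\cref{lem:channel_estimation}, with a tailored input state $\omega_{A^{mL}}$ and projectors, using $\lfloor\sqrt n\rfloor\binom{|I|}{2}$ estimation rounds), whereas you use Choi-state discrimination with $\lceil\log^2 n\rceil$ rounds---a more elementary and tighter choice that works equally well once the channels are assumed pairwise distinct. One small imprecision: after Phase~1 \emph{both} parties know $i$, so you may simply run a single-channel code for $\mathcal N^i$ (achieving rate near $Q_E(\mathcal N^i)$), or the IS code from \cref{thm:IS} (achieving rate near $\inf_j Q_E(\mathcal N^j)$); either suffices for the desired lower bound, but your sentence conflates the two.
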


The converse direction again follows from the converse of the corresponding result for a single channel~\cite{bowen02}. For the achievability we make use of the channel estimation techniques from~\cite{1751-8121-40-28-S20}. In particular, we use the first $\sqrt{n}$ instances of the channel to estimate the channel on the receiver's side and then use feedback to transfer the channel index $i\in I$ to the sender. This allows to use the informed sender protocol for the remaining $n-\sqrt{n}$ channel uses and leads to the same capacity as in the informed sender case. Unfortunately, it is unclear how to apply this technique for general compounds since in this case the trade-off between the discretization and the channel estimation parameters might not scale well enough. We note that even for classical compound channels this issue has only been resolved relatively recently~\cite{Shrader09}, and we leave it as an open problem in the quantum case.

\smallskip

\paragraph{Application} We consider a situation where some channel $\mathcal{N}_{A\to B}$ used for information transmission has only been characterized up to some precision $\eps>0$.\footnote{More precisely, we assume that for some fixed input system we know the channel $\mathcal{N}_{A\to B}$ up to $\eps>0$ in 
the diamond norm (see \cref{sec:notation} for its definition).} This corresponds to uninformed users (\cref{thm:uninformed}) and by the continuity of the conditional entropy~\cite{Alicki04,winter15} we find that we can still transmit quantum information at a rate of at least
\begin{align}\label{eq:example_continuity}
Q_E(\mathcal{N}_{A\to B})-\eps\log d_A+\left(1+\frac{\eps}{2}\right)h\left(\frac{\eps}{2+\eps}\right),
\end{align}
where $d_A$ denotes the input dimension of the quantum channel, and $h(p)$ denotes the binary entropy function. Hence, we can conclude that not knowing the quantum channel perfectly only sightly impacts our ability to transmit quantum information. (This does not just follow from the continuity of the entanglement-assisted channel capacity since we need a universal coding scheme that works for all channels in the $\eps$-neighborhood.)
\smallskip

\paragraph{Classical communication}
Lastly, we discuss the entanglement-assisted transmission of classical information through the compound quantum channel $\Pi_{A\to B}$. A classical code for $\Pi_{A \rightarrow B}$ is given by a quadruple $\left\{M_0,M_1,\{\mathcal{E}^m\},\{\Lambda^m\}\right\}$, where $M_0$ is the number of classical messages to be transmitted, $M_1$ denotes the local dimension of the entanglement-assistance $\Phi^+_{A_1B_1}$, $\{\mathcal{E}^m_{A_1\to A}\}_{m=1}^{M_0}$ are the encoding operations for the classical messages, and $\{\Lambda^m_{BB_1} \}_{m=1}^{M_0}$ the POVM elements of the decoding measurement. Then, for example for the uninformed users setting like in~\eqref{eq:coding}, we say that a triple $(R,n,\delta)$ is achievable for $\Pi_{A \rightarrow B}$ if there exists a classical code with
\begin{equation}\label{eq:classical_coding}
\begin{aligned}
&\frac{1}{n}\log M_0\geq R\quad\mathrm{and}\quad\inf_{i\in I} \frac{1}{M_0}\sum_{m=1}^{M_0}\tr\bigl[ \Lambda_{BB_1}^m \\
&\qquad\qquad\left(\mathcal{N}^i_{A\to B}\circ \mathcal{E}^m_{A_1\to A}(\Phi^+_{A_1 B_1})\right)\bigr]\geq1-\delta.
\end{aligned}
\end{equation}
This means that for every possible choice $\mathcal{N}^i_{A\to B}$ in the compound, the probability of success to retrieve the message should be high, averaged uniformly over all messages $m\in M_0$. The corresponding entanglement-assisted classical capacity $C_E$ is then defined in the exact same way as in~\eqref{eq:capacity}. Now, in the presence of free entanglement-assistance we have quantum teleportation~\cite{PhysRevLett.70.1895} and superdense coding~\cite{PhysRevLett.69.2881} available: this allows to transform entanglement transmission codes into classical codes and vice versa. More precisely, following~\cite[(46) \& Appendix B]{matthews14} we get with superdense coding that,
\begin{align*}
&\text{$\{M_0,M_1,\mathcal{E},\mathcal{D}\}$ with fidelity $1-\delta$ } \;\Rightarrow\; \\
&\text{$\left\{M_0^2,M_1M_0,\{\mathcal{E}^m\},\{\Lambda^m\}\right\}$ with success probability $1-\delta$},
\end{align*}
and vice versa with quantum teleportation that,
\begin{align*}
&\text{$\left\{M_0,M_1,\{\mathcal{E}^m\},\{\Lambda^m\}\right\}$ with success probability $1-\delta$}\;\Rightarrow\; \\
&\text{$\{\sqrt{M_0},M_1M_0,\mathcal{E},\mathcal{D}\}$ with fidelity $1-\delta$}.
\end{align*}
Asymptotically this leads to the following channel capacity identities:
\begin{equation}\label{eq:classical vs quantum}
2=\frac{C_E}{Q_E} = \frac{C_{E,IR}}{Q_{E,IR}}=\frac{C_{E,IS}}{Q_{E,IS}}=\frac{C_{E,F}}{Q_{E,F}}.
\end{equation}
\smallskip

\paragraph{Extensions} Our proofs bring up new tools that can deployed for simplifying previous works about capacities of compound quantum channels~\cite{boche09,hayashi09b,Boche13,1751-8121-40-28-S20,bjelakovic2008quantum,bjelakovic09,bjelakovic2009entanglement}. 
In \cref{app:plain IS} we present such an argument for the plain quantum capacity of compound quantum channels.

\smallskip

\paragraph{Related work} After completion of our work, we have learned about the concurrent work~\cite{boche16paper,boche16} that also determines the entanglement-assisted quantum capacity of compound channels for uninformed users (amongst various other results). However, the key techniques used for the proofs differ significantly. The authors of \cite{boche16paper, boche16} use Weyl unitaries as special encoding channels to convert the problem of entanglement-assisted compound channels to a question about classical-quantum compound channels. Then, they deploy established universal codes \cite{boche09} for classical-quantum channels.

\section{Notation}\label{sec:notation}

Here, we introduce our notation and give some standard definitions and lemmas that we will throughout this paper.
For more about quantum information theory we point to the excellent textbooks~\cite{wildebook13,tomamichel2015quantum}.

Let $A,B,C,\ldots$ denote finite-dimensional Hilbert spaces, let $d_A$ denote the dimension of $A$, and let $d_{AB}=d_Ad_B$ denote the dimension of $AB=A\otimes B$. We write $\mathcal P(A)$ for the set of positive semidefinite operators, $\mathcal S(A)$ for the set of normalized quantum states, $\mathcal S_{\leq}(A)$ for the set of sub-normalized quantum states (i.e., positive semidefinite operators of trace no larger than one). We use $\tau_A = I_A / d_A$ for the maximally mixed state on $A$ and $\Phi^+_{AA'} = \frac 1 {d_A} \sum_{i,j} \ket{ii}\!\!\bra{jj}_{AA'}$ for a maximally entangled state, where $A' \cong A$.


\subsection{Distance measures}

\paragraph{For quantum states} We write $\lVert X \rVert_1:=\tr\left[\sqrt{X^\dagger X}\right]$ for the trace norm. The fidelity is defined for arbitrary positive semidefinite operators $\rho, \sigma \in \mathcal P(A)$ by $F(\rho,\sigma):=\lVert\sqrt\rho \sqrt\sigma\rVert_1^2$. Note that for pure states $\rho$ and $\sigma$ we get $F(\rho, \sigma) = \lvert\braket{\rho, \sigma}\rvert^2$. A particular version of Uhlmann's theorem is~\cite[Theorem 3.1]{dupuis2009decoupling},
\begin{equation}\label{eq:uhlmann fred}
\begin{aligned}
  F(\rho_A, \sigma_A)
  &= \max_{V_{B\to C}} F\left(V_{B\to C} \rho_{AB} V_{B\to C}^\dagger, \sigma_{AC}\right) \\
  &= \max_{W_{C\to B}} F\left(\rho_{AB}, W_{C\to B} \sigma_{AC} W_{C\to B}^\dagger\right),
\end{aligned}
\end{equation}
where $\rho_{AB}$ is an arbitrary purification of $\rho_A$, $\sigma_{AC}$ an arbitrary purification of $\sigma_A$, and where the maximizations are over partial isometries. For normalized quantum states $\rho, \sigma \in \mathcal S(A)$, fidelity and trace norm are related by the Fuchs-van de Graaf inequalities:
\begin{align}\label{eq:fuchs-van de graaf}
	\Bigl( 1 - \frac {\lVert \rho - \sigma \rVert_1} 2 \Bigr)^2 \leq F(\rho, \sigma) \leq 1 - \frac14 \lVert \rho - \sigma \rVert_1^2.
\end{align}
As a consequence, one obtains the following lemma, which is central to the decoupling approach to quantum information theory.

\begin{lemma}[{cf.~\cite[Corollary 3.2]{dupuis2009decoupling}}]\label{lem:uhlmann one is normalized}
  Let $\rho_{AB} \in \mathcal P(AB)$ and $\sigma_{AC} \in \mathcal S(AB)$ be pure states with $\lVert \rho_A - \sigma_A \rVert_1 \leq \delta$.
  Then, there exist partial isometries $V_{B\to C}$ and $W_{C\to B}$ such that
  \begin{align*}
    \lVert V_{B\to C} \rho_{AB} V_{B\to C}^\dagger - \sigma_{AC} \rVert_1 &\leq \delta + 2\sqrt{2\delta}
  \intertext{as well as}
    \lVert \rho_{AB} - W_{C\to B} \sigma_{AC} W_{C\to B}^\dagger \rVert_1 &\leq \delta + 2\sqrt{2\delta}.
  \end{align*}
\end{lemma}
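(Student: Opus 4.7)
The plan is to reduce to the case of two normalized pure states (where Uhlmann's theorem together with the Fuchs-van de Graaf inequalities applies directly), handling the subnormalization of $\rho_{AB}$ by a separate triangle-inequality step at the end.

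Since $\rho_{AB}$ is a pure positive semidefinite operator we may write $\rho_{AB} = t \ket{\psi}\!\bra{\psi}_{AB}$ for some unit vector $\ket{\psi}_{AB}$ and $t = \tr \rho_{AB} \in [0,1]$. First I would bound $t$ from below: from $\lVert \rho_A - \sigma_A \rVert_1 \geq \lvert \tr \rho_A - \tr \sigma_A \rvert = 1 - t$ we get $1 - t \leq \delta$. Then a triangle inequality
\begin{align*}
  \lVert \ket{\psi}\!\bra{\psi}_A - \sigma_A \rVert_1
  \leq \lVert \ket{\psi}\!\bra{\psi}_A - t\ket{\psi}\!\bra{\psi}_A \rVert_1 + \lVert \rho_A - \sigma_A \rVert_1
  \leq 2 \delta
\end{align*}
puts us in the normalized setting.

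Next I would apply the lower Fuchs-van de Graaf bound from~\eqref{eq:fuchs-van de graaf} to get $F(\ket{\psi}\!\bra{\psi}_A, \sigma_A) \geq (1-\delta)^2 \geq 1 - 2\delta$, and then invoke Uhlmann's theorem in the form~\eqref{eq:uhlmann fred} to produce a partial isometry $V_{B\to C}$ (resp. $W_{C\to B}$) with
\begin{align*}
  F\bigl(V_{B\to C}\ket{\psi}\!\bra{\psi}_{AB}V_{B\to C}^\dagger, \sigma_{AC}\bigr) \geq 1 - 2\delta.
\end{align*}
Since the two states inside the fidelity are now both \emph{normalized} pure states, the upper Fuchs-van de Graaf inequality gives
\begin{align*}
  \lVert V_{B\to C}\ket{\psi}\!\bra{\psi}_{AB}V_{B\to C}^\dagger - \sigma_{AC} \rVert_1 \leq 2\sqrt{2\delta - \delta^2} \leq 2\sqrt{2\delta}.
\end{align*}

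Finally, reintroducing the factor $t$ and estimating with a triangle inequality,
\begin{align*}
  \lVert V_{B\to C}\rho_{AB}V_{B\to C}^\dagger - \sigma_{AC} \rVert_1
  &\leq t\lVert V_{B\to C}\ket{\psi}\!\bra{\psi}_{AB}V_{B\to C}^\dagger - \sigma_{AC} \rVert_1 + (1-t)\lVert \sigma_{AC}\rVert_1 \\
  &\leq 2\sqrt{2\delta} + \delta,
\end{align*}
which is the claimed inequality. The argument for $W_{C\to B}$ is identical, using the second equality in~\eqref{eq:uhlmann fred}. I do not expect serious obstacles here; the only subtlety is the bookkeeping between normalized and subnormalized states, which the initial decomposition $\rho_{AB}=t\ket{\psi}\!\bra{\psi}_{AB}$ isolates cleanly.
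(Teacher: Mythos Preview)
Your argument is essentially the paper's proof: normalize $\rho_{AB}$ to $\rho'_{AB}$, deduce $\lVert\rho'_A-\sigma_A\rVert_1\le 2\delta$, apply the lower Fuchs--van de Graaf bound, Uhlmann's theorem, the upper Fuchs--van de Graaf bound, and finish with a triangle inequality to reinstate the normalization. The only slip is the unjustified claim $t=\tr\rho_{AB}\in[0,1]$ (the hypothesis $\rho_{AB}\in\mathcal P(AB)$ gives only $|1-t|\le\delta$); the paper's final step writes $\lVert V\rho_{AB}V^\dagger-\sigma_{AC}\rVert_1\le\lVert\rho'_{AB}-\rho_{AB}\rVert_1+2\sqrt{2\delta}=|1-t|+2\sqrt{2\delta}$, which handles both signs of $1-t$ cleanly, whereas your decomposition picks up an extra factor $t$ on the $2\sqrt{2\delta}$ term.
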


A proof can be found in \cref{app:smooth entropies}.
We note that $\delta + 2\sqrt{2\delta} \leq 4\sqrt{\delta}$ if $\delta \leq 1$, which will often be the case.

Fidelity and trace norm can be related for arbitrary positive semidefinite operators $\rho, \sigma \in \mathcal P(A)$~\cite[Lemmas~A.2.4 and A.2.6]{renner2005security},
\begin{align*}
	&\Bigl( \frac {\tr[\rho] + \tr[\sigma]} 2 - \frac {\lVert \rho - \sigma \rVert_1} 2 \Bigr)^2 \leq F(\rho, \sigma) \\
	\leq\;&\frac {\big(\tr[\rho] + \tr[\sigma]\big)^2} 4 - \frac14 \lVert \rho - \sigma \rVert_1^2,
\end{align*}
which generalizes \eqref{eq:fuchs-van de graaf}.
In particular, the lower bound implies that
\begin{equation}\label{eq:fidelity lower bound one is normalized}
  F(\rho, \sigma)\geq 1 - 2 \lVert \rho - \sigma \rVert_1\quad\text{for $\rho \in \mathcal P(A)$, $\sigma \in \mathcal S(A)$.}
\end{equation}
We will also need the following continuity bound for the fidelity, which follows from, e.g., \cite[Lemma B.9]{fawzi2015quantum}: for $\rho, \sigma, \sigma' \in \mathcal S_\leq(A)$ we have
\begin{equation}\label{eq:fidelity_continuity}
\begin{aligned}
  \lvert F(\rho, \sigma) - F(\rho, \sigma') \rvert
  &\leq 2 \lvert \sqrt{F(\rho, \sigma)} - \sqrt{F(\rho, \sigma')} \rvert  \\
  &\leq 2 \sqrt{ \lVert \sigma - \sigma' \rVert_1  }.
\end{aligned}
\end{equation}
Lastly, for $\rho, \sigma \in \mathcal S_\leq(A)$ we need the purified distance $P(\rho,\sigma):= \sqrt{1 - F_*(\rho,\sigma)}$, defined in terms of the generalized fidelity $F_*(\rho,\sigma):= \Big( \sqrt{F(\rho,\sigma)} + \sqrt{(1-\tr[\rho])(1-\tr[\sigma])}\Big)^2$. We have
\begin{align}\label{eq:purified distance and trace norm subnormalized}
\lVert \rho - \sigma \rVert_1 \leq 2 P(\rho, \sigma)\quad\text{for $\rho, \sigma \in \mathcal S_\leq(A)$.}
\end{align}
\smallskip

\paragraph{For quantum operations}
The diamond norm of a super-operator $\mathcal{N}_{A\to B}$ is defined as 
\begin{align*}
\lVert \mathcal N \rVert_\diamond := \max_{\rho_{AA'} \in \mathcal S(AA')} \lVert \mathcal{N}_{A\to B}(\rho_{AA'}) \rVert_1\quad\text{with $A' \cong A$.}
\end{align*}
From the continuity bounds for the fidelity~\eqref{eq:fidelity_continuity} we find for $\mathcal T_{A\to B},\mathcal T'_{A\to B}$ quantum operations (completely positive and trace-nonincreasing maps) and $\rho,\sigma\in\mathcal{S}_\leq(A)$ that,
\begin{equation}\label{eq:fidelity channel continuity}
\begin{aligned}
  \lvert F(\rho, \mathcal T(\sigma)) - F(\rho, \mathcal T'(\sigma)) \rvert
  &\leq 2 \sqrt{ \lVert (\mathcal T - \mathcal T')(\sigma) \rVert_1 } \\
  &\leq 2 \sqrt{ \lVert \mathcal T - \mathcal T' \rVert_\diamond }.
\end{aligned}
\end{equation}

\subsection{Entropies}

\paragraph{Von Neumann entropy} For $\rho_{AB} \in \mathcal S(AB)$ the von Neumann entropy\footnote{All logarithms are to base 2.} of system $A$ is defined by $H(A)_\rho:= -\tr[\rho_A \log \rho_A]$. The conditional entropy of $A$ given $B$ is $H(A|B)_\rho:=H(AB)_\rho-H(B)_\rho$, and the mutual information between $A$ and $B$ is $I(A:B)_\rho:=H(A)_\rho+H(B)_\rho-H(AB)_\rho$. All of these functions are continuous, and we will use the following continuity bounds for $\rho_{AB},\sigma_{AB} \in \mathcal S(AB)$: For the conditional entropy~\cite{Alicki04,winter15}, we have
\begin{equation}\label{eq:fannes cond entropy}
\begin{aligned}
  &\lvert H(A|B)_\rho - H(A|B)_\sigma\rvert\\
  \leq\;&2T\log d_A+(1+T)h\left(\frac{T}{1+T}\right),
\end{aligned}
\end{equation}
while for the mutual information~\cite{shirokov15},
\begin{equation}\label{eq:fannes mutual info}
\begin{aligned}
  &\lvert I(A:B)_\rho - I(A:B)_\sigma\rvert\\
  \leq\;&2T\log\min\{d_A,d_B\}+2(1+T)h\left(\frac{T}{1+T}\right).
\end{aligned}
\end{equation}
In both cases, $T := \frac12 \lVert \rho_{AB} - \sigma_{AB} \rVert_1$, and $h(p)$ denotes the binary entropy function.

\smallskip

\paragraph{Smooth entropies} For $\rho_{AB} \in \mathcal S_\leq(AB)$ the conditional min-entropy of $A$ given $B$ is defined as~\cite{renner2005security}
\begin{equation}\label{eq:min entropy sdp}
\begin{aligned}
&\Hmin(A|B)_\rho\\
:= &-\log \min \{ \tr[\sigma_B] : \sigma_B \in \mathcal P(B), \rho_{AB} \leq I_A \otimes \sigma_B \}.
\end{aligned}
\end{equation}
The $\eps$-smooth conditional min-entropy is defined by the following optimization over nearby states~\cite{tomamichel09}:
\begin{equation}\label{eq:smooth min entropy}
\begin{aligned}
&\Hmin^\eps(A|B)_\rho\\
:=\;&\sup \big\{ \Hmin(A|B)_{\widetilde\rho} : \widetilde\rho \in \mathcal S_\leq(AB), P(\widetilde\rho, \rho) \leq \eps \big\}
\end{aligned}
\end{equation}
Note that we have $\Hmin^0(A|B)_\rho=\Hmin(A|B)_\rho$ since the purified distance defines a metric on the sub-normalized states. The (smooth) conditional max-entropy for $\eps\geq0$ is defined by duality as~\cite{koenig08,tomamichel09},
\begin{align}\label{eq:smooth max entropy duality}
\Hmax^\eps(A|B)_\rho:= -\Hmin^\eps(A|C)_\rho,
\end{align}
where the smooth conditional min-entropy on the right-hand side is evaluated with respect to the reduced density matrix $\rho_{BC}$ of an arbitrary purification $\ket{\rho_{ABC}}$ of $\rho_{AB}$. It holds that~\cite{tomamichel09}
\begin{equation}\label{eq:smooth max entropy}
\begin{aligned}
&\Hmax^\eps(A|B)_\rho \\
=\,&\sup \big\{ \Hmax(A|B)_{\widetilde\rho} : \widetilde\rho \in \mathcal S_\leq(AB), P(\widetilde\rho, \rho) \leq \eps \big\}.
\end{aligned}
\end{equation}
where $\Hmax(A|B) = \Hmax^0(A|B)$.
We will make use of the following lemmas about the smooth conditional min- and max-entropy of convex combination quantum states.

\begin{lemma}\label{lem:min entropy mixture}
  Let $\{ \rho^i_{AB} \}_{i=1}^N \subseteq \mathcal S(AB)$ be quantum states, $\{p_i\}_{i=1}^N$ a probability distribution, and $\rho_{AB} = \sum_{i=1}^N p_i \rho^i_{AB}$.
  Then, we have
	\begin{align*}
		\Hmin^\eps(A|B)_\rho \geq \min_i \Hmin^\eps(A|B)_{\rho^i}.
	\end{align*}
\end{lemma}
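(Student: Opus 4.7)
My plan is to take, for each $i$, an optimizer $\widetilde\rho^i \in \mathcal S_\leq(AB)$ in the defining supremum~\eqref{eq:smooth min entropy}, i.e., with $P(\widetilde\rho^i, \rho^i) \leq \eps$ and $\Hmin(A|B)_{\widetilde\rho^i} = \Hmin^\eps(A|B)_{\rho^i}$ (the sup is attained by compactness of the feasible set), and to show that the mixture $\widetilde\rho_{AB} := \sum_i p_i \widetilde\rho^i_{AB}$ is itself a feasible witness for the smooth min-entropy of $\rho_{AB}$. This requires two subclaims: (i) $P(\widetilde\rho_{AB}, \rho_{AB}) \leq \eps$, and (ii) $\Hmin(A|B)_{\widetilde\rho} \geq \min_i \Hmin(A|B)_{\widetilde\rho^i}$. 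Together with the definition~\eqref{eq:smooth min entropy} these immediately yield the lemma.

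Subclaim (ii) I expect to follow directly by summing SDP feasibility witnesses: if $\sigma^i_B \in \mathcal P(B)$ satisfies $\widetilde\rho^i_{AB} \leq I_A \otimes \sigma^i_B$ with $\tr \sigma^i_B = 2^{-\Hmin(A|B)_{\widetilde\rho^i}}$, then $\sigma_B := \sum_i p_i \sigma^i_B$ witnesses $\widetilde\rho_{AB} \leq I_A \otimes \sigma_B$ and has trace $\sum_i p_i 2^{-\Hmin(A|B)_{\widetilde\rho^i}} \leq 2^{-\min_i \Hmin(A|B)_{\widetilde\rho^i}}$, which is exactly what the SDP~\eqref{eq:min entropy sdp} requires. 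For subclaim (i) I would lift everything to a classical register: define $\widetilde\rho_{XAB} := \sum_i p_i \ket{i}\!\!\bra{i}_X \otimes \widetilde\rho^i_{AB}$ and $\rho_{XAB}$ analogously. Their block-diagonal structure in $X$ makes the fidelity compute linearly in the blocks, $\sqrt{F(\widetilde\rho_{XAB}, \rho_{XAB})} = \sum_i p_i \sqrt{F(\widetilde\rho^i, \rho^i)}$, and a Cauchy--Schwarz on the trace-defect terms yields $\sqrt{(1-\tr\widetilde\rho_{XAB})(1-\tr\rho_{XAB})} \geq \sum_i p_i \sqrt{(1-\tr\widetilde\rho^i)(1-\tr\rho^i)}$. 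Adding these gives $\sqrt{F_*(\widetilde\rho_{XAB}, \rho_{XAB})} \geq \sum_i p_i \sqrt{F_*(\widetilde\rho^i, \rho^i)} \geq \sqrt{1-\eps^2}$, and monotonicity of the purified distance under the partial trace over $X$ transports this to the desired bound $P(\widetilde\rho_{AB}, \rho_{AB}) \leq \eps$.

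The main obstacle I anticipate is getting subclaim (i) right for subnormalized states: the classical-register lifting is what keeps the generalized-fidelity correction manageable, since otherwise I would have to appeal by hand to joint concavity of $\sqrt{F_*}$ on $\mathcal S_\leq(AB)$ (or cite it from~\cite{tomamichel2015quantum}). Everything else is routine manipulation of the SDP characterization of min-entropy together with the standard block-diagonal fidelity formula, so I expect the proof to be short once the purified distance step is in place.
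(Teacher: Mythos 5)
Your proposal is correct and follows essentially the same route as the paper: smooth each $\rho^i$ to an optimizer $\widetilde\rho^i$, take the mixture $\widetilde\rho = \sum_i p_i \widetilde\rho^i$ as the witness, and bound $\Hmin(A|B)_{\widetilde\rho}$ by summing the SDP feasibility witnesses $\sigma_B = \sum_i p_i \sigma_B^i$ exactly as in the paper's proof. The only difference is that where the paper simply cites the joint quasi-convexity of the purified distance (\cite[(3.60)]{tomamichel2015quantum}) to get $P(\widetilde\rho,\rho)\leq\eps$, you re-derive it via the classical-register lifting and monotonicity under partial trace, which is a correct (and standard) proof of that cited fact.
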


\begin{lemma}\label{lem:max entropy mixture}
  Let $\{ \rho^i_{AB} \}_{i=1}^N \subseteq \mathcal S(AB)$ be quantum states, $\{p_i\}_{i=1}^N$ a probability distribution, and $\rho_{AB} = \sum_{i=1}^N p_i \rho^i_{AB}$.
  Then, we have
	\begin{align*}
		\Hmax^\eps(A|B)_\rho \leq \max_{i} \Hmax^\eps(A|B)_{\rho^i} + 2 \log N.
	\end{align*}
\end{lemma}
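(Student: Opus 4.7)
The plan is to reduce the statement to Lemma~\ref{lem:min entropy mixture} via the smooth-entropy duality~\eqref{eq:smooth max entropy duality}, applied to a purification of $\rho_{AB}$ in which the classical mixing label is coherently stored twice.

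First, I would fix purifications $\ket{\rho^i}_{ABC}$ of each $\rho^i_{AB}$ over a common purifying system $C$ and set
\[
\ket{\rho}_{ABCRR'} := \sum_{i=1}^N \sqrt{p_i}\, \ket{\rho^i}_{ABC} \otimes \ket{i}_R \otimes \ket{i}_{R'},
\]
a purification of $\rho_{AB}$ with $R,R'$ both $N$-dimensional. By~\eqref{eq:smooth max entropy duality}, $\Hmax^\eps(A|B)_\rho = -\Hmin^\eps(A|CRR')_\rho$ and $\Hmax^\eps(A|B)_{\rho^i} = -\Hmin^\eps(A|C)_{\psi^i}$ for $\psi^i_{AC} := \tr_B \ketbra{\rho^i}{\rho^i}_{ABC}$, so the lemma is equivalent to
\[
\Hmin^\eps(A|CRR')_\rho \;\geq\; \min_i \Hmin^\eps(A|C)_{\psi^i} - 2\log N.
\]

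To lower bound the left-hand side, I would exhibit an explicit feasible smoothing. Let $\tilde\psi^i_{AC}$ attain $\Hmin^\eps(A|C)_{\psi^i}$ and use Uhlmann's theorem (\cref{lem:uhlmann one is normalized}) to pick purifications $\ket{\tilde\rho^i}_{ABC}$ of $\tilde\psi^i_{AC}$ close to $\ket{\rho^i}_{ABC}$. Assembling these coherently as $\ket{\tilde\rho}_{ABCRR'} := \sum_i \sqrt{p_i}\,\ket{\tilde\rho^i}_{ABC}\otimes\ket{i}_R\otimes\ket{i}_{R'}$, the $R,R'$ overlaps collapse the cross-terms in $\braket{\rho|\tilde\rho}$ and leave only the diagonal $\sum_i p_i\braket{\rho^i|\tilde\rho^i}$, which combined with monotonicity of purified distance under tracing $B$ forces $P(\tilde\rho_{ACRR'}, \rho_{ACRR'}) \leq \eps$. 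The key step is then pinching: let $\hat\rho_{ACRR'}$ denote the pinching of $\tilde\rho_{ACRR'}$ in the basis $\{I_{AC}\otimes\ket{mn}\!\bra{mn}_{RR'}\}_{m,n=1}^N$. Because $\tilde\rho_{ACRR'}$ is supported on $\mathrm{span}\{\ket{ii}_{RR'}\}$, only the $m=n=i=j$ contributions survive and one obtains the classical-quantum state $\hat\rho_{ACRR'} = \sum_i p_i\,\tilde\psi^i_{AC}\otimes\ket{ii}\!\bra{ii}_{RR'}$. The pinching inequality yields $\tilde\rho_{ACRR'} \leq N^2\,\hat\rho_{ACRR'}$, which via~\eqref{eq:min entropy sdp} translates into $\Hmin(A|CRR')_{\tilde\rho} \geq \Hmin(A|CRR')_{\hat\rho} - 2\log N$. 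Applying Lemma~\ref{lem:min entropy mixture} to the mixture $\hat\rho$ (and using that attaching a pure state to the conditioning does not change min-entropy) gives $\Hmin(A|CRR')_{\hat\rho} \geq \min_i \Hmin(A|C)_{\tilde\psi^i} = \min_i \Hmin^\eps(A|C)_{\psi^i}$. Chaining the three bounds and converting back via duality proves the lemma.

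The main obstacle is the smoothing. The naive strategy of taking optimal smoothings of the individual $\rho^i$ and mixing them only yields a lower bound on $\Hmax^\eps(A|B)_\rho$, since $\Hmax^\eps$ is a supremum; this is precisely why the analogous direction works for Lemma~\ref{lem:min entropy mixture} but not here. The duality detour shifts the problem onto $\Hmin^\eps$, where smoothing, the pinching inequality, and Lemma~\ref{lem:min entropy mixture} combine cleanly; the price is the $N^2$-dimensional auxiliary register $RR'$, which is what produces the $2\log N$ slack.
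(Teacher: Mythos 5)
Your proof is correct and shares the paper's overall skeleton -- purify $\rho_{AB}$ with a doubled classical register, pass to the min-entropy via the duality~\eqref{eq:smooth max entropy duality}, and finish with \cref{lem:min entropy mixture} -- but it implements the dimension counting differently. The paper obtains the $2\log N$ by invoking a black-box smooth-entropy bound for adding/removing a classical conditioning register (\cite[Lemma~6.8]{tomamichel2015quantum}) once on the $B$-side (for $\Hmax^\eps$) and once on the $C$-side (for $\Hmin^\eps$), with the duality applied in between; all manipulations stay at the level of smooth entropies. You instead apply the duality immediately and pay the $2\log N$ on the min-entropy side alone, via an explicit coherent smoothing plus the pinching inequality $\widetilde\rho \leq N^2\,\widehat\rho$, which converts directly into a min-entropy loss through the SDP~\eqref{eq:min entropy sdp}. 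Your route is more self-contained (it needs no external chain-rule lemma) at the cost of constructing the smoothing by hand; it also reveals that the constant is not tight: since your pinching acts only on the conditioning side, a single register $R$ (purification $\sum_i \sqrt{p_i}\,\ket{\rho^i}_{ABC}\ket{i}_R$, pinching with $N$ blocks, $\widetilde\rho \leq N\widehat\rho$) would give $\Hmax^\eps(A|B)_\rho \leq \max_i \Hmax^\eps(A|B)_{\rho^i} + \log N$. Two small points to tighten: (i) in the overlap computation you should fix the global phases of the $\ket{\widetilde\rho^i}$ so that $\braket{\rho^i|\widetilde\rho^i} \geq 0$ before concluding $\lvert\sum_i p_i \braket{\rho^i|\widetilde\rho^i}\rvert \geq \sqrt{1-\eps^2}$, and the relevant Uhlmann statement is the fidelity version~\eqref{eq:uhlmann fred} rather than the trace-norm \cref{lem:uhlmann one is normalized}; (ii) the parenthetical ``attaching a pure state to the conditioning'' glosses over the fact that $\ket{ii}_{RR'}$ depends on $i$ -- the clean way to get $\Hmin(A|CRR')_{\widehat\rho} \geq \min_i \Hmin(A|C)_{\widetilde\psi^i}$ is to apply the argument of \cref{lem:min entropy mixture} to the feasible point $\sigma_{CRR'} = \sum_i p_i\, \sigma^i_C \otimes \ket{ii}\!\bra{ii}$, which works verbatim for the sub-normalized $\widetilde\psi^i$.
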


\Cref{lem:min entropy mixture,lem:max entropy mixture} are proved in \cref{app:smooth entropies}. The smooth conditional min- and max-entropy satisfy the following asymptotic equipartition property~\cite{tomamichel2009fully,tomamichel2015quantum}:
\begin{align}
\label{eq:AEP H_min}
  \frac{1}{n}\Hmin^\eps(A^n|B^n)_{\rho^{\otimes n}} \geq H(A|B)_\rho - \frac {\delta(\eps,\rho)} {\sqrt n}, \\
\label{eq:AEP H_max}
  \frac{1}{n}\Hmax^\eps(A^n|B^n)_{\rho^{\otimes n}} \leq H(A|B)_\rho + \frac {\delta(\eps,\rho)} {\sqrt n}.
\end{align}
where $\rho_{AB} \in \mathcal S(AB)$ and $n\in\mathbb{N}$. For $n \geq \frac 8 5 \log \frac 2 {\eps^2}$ the convergence parameter can be bounded as:
\begin{align}\label{eq:AEP convergence}
  \delta(\eps, \rho)
  \leq 4 \log \bigl( 2 \sqrt{d_A} + 1 \bigr) \sqrt{\log \frac 2 {\eps^2}}.
\end{align}
For technical reasons we will also work with the conditional collision entropy defined as~\cite{renner2005security},
\begin{align}\label{eq:renyi 2}
	H_2(A|B)_\rho := -\log \min_{\sigma \in \mathcal S(B)}\tr \left[\left(\sigma_B^{-1/4}\rho_{AB}\sigma_B^{-1/4}\right)^2\right],
\end{align}
where $\rho_{AB} \in \mathcal S_\leq(AB)$. We have $H_2(A|B)_\rho\geq\Hmin(A|B)_\rho.$

\section{Uninformed users and informed receiver}\label{sec:uninformed and IR}

Since the two capacities advertised in \cref{thm:IR,thm:uninformed} match, it suffices to prove a coding theorem for uninformed users together with a converse bound for the informed receiver scenario. We start in \cref{subsec:uninformed one-shot} by establishing a one-shot coding theorem for finite compound channels by a well-known reduction to a single average channel and properties of smooth entropies. In \cref{subsec:uninformed direct} we obtain the direct part of \cref{thm:uninformed} by using our one-shot result in the limit of many channel uses combined with discretization techniques. Lastly, in \cref{subsec:IR converse} we establish the converse part of \cref{thm:IR}.

\subsection{One-shot coding theorem}\label{subsec:uninformed one-shot}

To establish our one-shot coding result, we use an equivalence between finite compound channels and average channels, whose analogue for the plain quantum capacity has been observed previously in~\cite[Lemma II.1]{bjelakovic2008quantum}.\footnote{We must keep in mind that this equivalence is only known to hold true for weak capacities as defined in~\eqref{eq:capacity}, and not for strong converse capacities.} Let $\Pi_{A\to B} = \{ \mathcal N^i_{A\to B}\}_{i=1}^N$ be a finite compound channel and consider the average channel $\overline{\Pi}_{A\to B} := \frac1N \sum_{i=1}^N \mathcal N^i_{A\to B}$. If $\mathcal E$ and $\mathcal D$ denote encoder and decoder for $\overline{\Pi}_{A\to B}$ that achieve an entanglement fidelity of
\begin{align*}
&F(\Phi^+_{A_0R},\mathcal D \circ \overline\Pi \circ \mathcal E(\Phi^+_{A_0R}\otimes\Phi^+_{A_1 B_1})) \\
&=\frac1N \sum_{i=1}^N F(\Phi^+_{A_0R},\mathcal D \circ \mathcal N^i \circ \mathcal E(\Phi^+_{A_0R}\otimes\Phi^+_{A_1 B_1}))\\
&\geq 1-\delta
\end{align*}
then it is immediate that, for all $i=1,\dots,N$,
\begin{align}\label{eq:entanglement fidelity union bound}
  F(\Phi^+_{A_0R},\mathcal D \circ \mathcal N^i \circ \mathcal E(\Phi^+_{A_0R}\otimes\Phi^+_{A_1 B_1})) \geq 1-N\delta.
\end{align}
Thus, the entanglement infidelity does increase by no more than a constant factor $N$ when we apply a code to the compound channel instead of the average channel. Conversely, any code for the compound channel achieves the same entanglement fidelity for the average channel. 


We now prove the following one-shot coding theorem for finite compound channels:

\begin{theorem}\label{thm:uninformed one-shot}
  For any finite compound channel $\Pi_{A\to B} = \{\mathcal N^i_{A\to B}\}_{i=1}^N$, pure state $\rho_{AA'}$, integers $M_0$ and $M_1$, and $\eps > 0$, there exist quantum operations $\mathcal E_{A_0A_1\to A}$ and $\mathcal D_{BB_1\to A_0}$, where $d_{A_0} = M_0$ and $d_{A_1} = d_{B_1} = M_1$, such that
  \begin{align*}
  &\min_i F(\Phi^+_{A_0R},\mathcal D \circ \mathcal N^i \circ \mathcal E(\Phi^+_{A_0R}\otimes\Phi^+_{A_1 B_1}))  \\
  \geq\;&1 - 4 N \sqrt{2 \sqrt{\delta_1} + \delta_2},
  \end{align*}
  where
  \begin{align*}
    \delta_1 &= 3 \cdot 2^{-\frac12 (\Hmin^\eps(A)_\rho - \log M_0 - \log M_1)} + 24 \eps, \\
    \delta_2 &= 3 \cdot 2^{-\frac12 (-\max_i \Hmax^\eps(A'|B)_{\mathcal N^i(\rho)} - 2 \log N - \log M_0 + \log M_1)}\\
    &\quad + 24 \eps.
  \end{align*}
\end{theorem}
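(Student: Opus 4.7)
The plan is to exploit the averaging reduction described just before the theorem statement, which converts the compound coding problem into single-channel coding for the average channel $\overline\Pi_{A\to B} = \frac{1}{N}\sum_{i=1}^N \mathcal N^i_{A\to B}$, and then to import smooth-entropy control of that mixture back to the individual channels via \cref{lem:max entropy mixture}.

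First I would invoke the one-shot entanglement-assisted quantum coding theorem for a single channel established in \cite{dupuis2009decoupling}. Applied to $\overline\Pi_{A\to B}$ with pure input state $\rho_{AA'}$ and code parameters $M_0, M_1$, this produces encoder $\mathcal E_{A_0A_1\to A}$ and decoder $\mathcal D_{BB_1\to A_0}$ with entanglement fidelity
\[
F\bigl(\Phi^+_{A_0R}, \mathcal D \circ \overline\Pi \circ \mathcal E(\Phi^+_{A_0R}\otimes\Phi^+_{A_1B_1})\bigr) \geq 1 - 4\sqrt{2\sqrt{\delta_1} + \overline\delta_2},
\]
where $\delta_1$ is already in the target form since it only involves the input state $\rho$, while $\overline\delta_2 = 3\cdot 2^{-\frac12(-\Hmax^\eps(A'|B)_{\overline\Pi(\rho)} - \log M_0 + \log M_1)} + 24\eps$ carries the max-entropy of the \emph{average} output state. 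Next I would apply \cref{lem:max entropy mixture} to the convex combination $\overline\Pi(\rho_{AA'}) = \frac{1}{N}\sum_i \mathcal N^i(\rho_{AA'})$, obtaining $\Hmax^\eps(A'|B)_{\overline\Pi(\rho)} \leq \max_i \Hmax^\eps(A'|B)_{\mathcal N^i(\rho)} + 2\log N$; substituting this into the negative exponent upgrades $\overline\delta_2$ to the $\delta_2$ of the theorem statement (with the extra additive $-2\log N$ absorbed inside the smooth-entropy expression).

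Finally, I would transfer the bound from the average channel to each individual $\mathcal N^i$ by invoking the union-bound argument \eqref{eq:entanglement fidelity union bound} already spelled out in the paper. Since $F(\Phi^+_{A_0R},\cdot)$ is linear in its second argument (the target is pure) and hence linear in the channel, an average fidelity of at least $1-\delta$ forces each individual $F_i \geq 1 - N\delta$. Applied with $\delta = 4\sqrt{2\sqrt{\delta_1}+\delta_2}$, this yields the claimed worst-case bound $1 - 4N\sqrt{2\sqrt{\delta_1}+\delta_2}$.

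The main obstacle is matching the exact form of the one-shot single-channel coding bound imported from \cite{dupuis2009decoupling} against the specific constants $3$ and $24\eps$ and the square-root-in-square-root structure stated in the theorem. If the cited result is packaged as a pure decoupling theorem rather than a coding theorem, one first decouples the reference $R$ from the channel environment (using $\Hmin^\eps(A)_\rho - \log M_0 - \log M_1$ as the decoupling budget, giving the $\delta_1$ piece) and then constructs the decoder by invoking \cref{lem:uhlmann one is normalized}, with the inner square root then tracing back to the Fuchs-van de Graaf inequality \eqref{eq:fuchs-van de graaf} and the outer one to \eqref{eq:fidelity lower bound one is normalized}. Beyond this bookkeeping around constants and distance measures, the averaging reduction and the max-entropy mixture lemma do all of the conceptual work.
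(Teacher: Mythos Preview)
Your proposal is correct and matches the paper's proof essentially line for line: invoke the single-channel one-shot coding theorem of \cite{dupuis2009decoupling} for the average channel $\overline\Pi$, convert the resulting trace-distance bound to fidelity via \eqref{eq:fidelity lower bound one is normalized} (this is where the factor $4$ arises), pass to each $\mathcal N^i$ by the union bound \eqref{eq:entanglement fidelity union bound}, and control $\Hmax^\eps(A'|B)_{\overline\Pi(\rho)}$ through \cref{lem:max entropy mixture}. The only cosmetic difference is that the paper applies the union bound before the max-entropy mixture lemma, whereas you do it in the opposite order; the two steps commute.
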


Here, $\Hmin^\eps(A)_\rho$ stands for the smooth min-entropy, defined as in~\eqref{eq:smooth min entropy} with $B$ a one-dimensional system.

\begin{proof}
  We apply the one-shot coding theorem~\cite[Theorem 3.14, cf.~Theorem 3.15]{dupuis2009decoupling} to the average channel $\overline\Pi_{A\to B}$ to obtain quantum operations $\mathcal E$ and $\mathcal D$ such that
  \begin{align*}
    \lVert \mathcal D \circ \overline\Pi \circ \mathcal E (\Phi^+_{A_0R} \otimes \Phi^+_{A_1B_1}) - \Phi^+_{A_0R} \rVert_1 \leq 2 \sqrt{2 \sqrt{\delta_1} + \delta_2}
  \end{align*}
  where
  \begin{align*}
    \delta_1 &= 3 \cdot 2^{-\frac12 (\Hmin^\eps(A)_\rho - \log M_0 - \log M_1)} + 24 \eps, \\
    \delta_2 &= 3 \cdot 2^{-\frac12 (-\Hmax^\eps(A'|B)_{\overline\Pi(\rho)} - \log M_0 + \log M_1)} + 24 \eps.
  \end{align*}
  Using the relation between the fidelity and the trace distance in~\eqref{eq:fidelity lower bound one is normalized}, we obtain
  \begin{align*}
    F(\Phi^+_{A_0R},\mathcal D \circ \overline\Pi \circ \mathcal E(\Phi^+_{A_0R}\otimes\Phi^+_{A_1B_1}))
    \geq 1 - 4 \sqrt{2 \sqrt{\delta_1} + \delta_2}.
  \end{align*}
  and thus, from~\eqref{eq:entanglement fidelity union bound},
  \begin{align*}
    &F(\Phi^+_{A_0R},\mathcal D \circ N^i \circ \mathcal E(\Phi^+_{A_0R}\otimes\Phi^+_{A_1B_1}))  \\
    &\geq 1 - 4 N \sqrt{2 \sqrt{\delta_1} + \delta_2}
  \end{align*}
  for all $i=1,\dots,N$.
  On the other hand, we have that $\overline\Pi(\rho) = \frac1N \sum_{i=1}^N \mathcal N_i(\rho)$.
  Thus, we can apply \cref{lem:max entropy mixture} in \cref{app:smooth entropies}, which asserts that
  \begin{align*}
  \Hmax^\eps(A'|B)_{\overline\Pi(\rho)} \leq \max_i \Hmax^\eps(A'|B)_{\mathcal N_i(\rho)} + 2 \log N,
  \end{align*}
  and conclude that
  \begin{align*}
  \delta_2\;&\leq 3 \cdot 2^{-\frac12 (-\max_i \Hmax^\eps(A'|B)_{\mathcal N_i(\rho)} - 2 \log N - \log M_0 + \log M_1)} \\
  &\quad + 24 \eps. \qedhere
  \end{align*}
\end{proof}

\subsection{Achievability}\label{subsec:uninformed direct}

We now establish the direct part of \cref{thm:uninformed}:

\begin{lemma}\label{lem:uninformed direct}
Let $\Pi_{A \rightarrow B}  = \{ \mathcal N^i_{A\rightarrow B} \}_{i \in I}$ be a compound channel with arbitrary index set $I$. Then, we have
\begin{align*}
Q_E(\Pi_{A \rightarrow B}) \geq \frac12 \sup_{\rho} \inf_{i\in I}I(A':B)_{\mathcal{N}^i(\rho)},
\end{align*}
where the supremum is over all purified input distributions $\rho_{A'A}$ with $A \cong A'$.
\end{lemma}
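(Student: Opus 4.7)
The plan is to combine \cref{thm:uninformed one-shot} with the asymptotic equipartition property to obtain the claim for finite compounds, and then to lift this to arbitrary index sets by discretizing $\Pi$ with a finite diamond-norm net of polynomial size. Throughout I fix a pure state $\rho_{A'A}$ and a target rate $R < \tfrac12 \inf_{i\in I} I(A':B)_{\mathcal{N}^i(\rho)}$; taking the supremum over $\rho$ at the end yields the lemma.

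Assume first that $|I| = N < \infty$, and apply \cref{thm:uninformed one-shot} to the $n$-fold compound $\{(\mathcal{N}^i)^{\otimes n}\}_{i=1}^{N}$ with input $\rho^{\otimes n}$. Choose $\log M_0 = nR$ and $\log M_1$ so that $\log M_0 + \log M_1 = n H(A)_\rho - K\sqrt{n}$ and $\log M_1 - \log M_0 = n\max_i H(A'|B)_{\mathcal{N}^i(\rho)} + 2\log N + K\sqrt{n}$; these are simultaneously solvable as long as $R \leq \tfrac12 \min_i I(A':B)_{\mathcal{N}^i(\rho)} - O(1/\sqrt n) - O(\log N / n)$, using that $H(A)_\rho = H(A')_\rho$ for a pure $\rho_{A'A}$. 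The AEP bounds \eqref{eq:AEP H_min}--\eqref{eq:AEP H_max}, applied with $K$ larger than the convergence constant $\delta(\eps_n,\rho)$ from \eqref{eq:AEP convergence} and a smoothing parameter $\eps_n \to 0$ slowly, then guarantee that the exponents controlling $\delta_1$ and $\delta_2$ in \cref{thm:uninformed one-shot} both grow like $\Omega(\sqrt n)$, so $\delta_1, \delta_2 = 2^{-\Omega(\sqrt n)}$ and the fidelity bound $1 - 4N\sqrt{2\sqrt{\delta_1}+\delta_2}$ tends to $1$. This settles the lemma for finite compounds.

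For an arbitrary index set $I$ I would discretize as follows. The set of CPTP maps $A \to B$ lies in a finite-dimensional normed space, so for each $\eta > 0$ it admits a diamond-norm $\eta$-net of cardinality at most $(C/\eta)^{O(d_A^2 d_B^2)}$. Set $\eta_n = n^{-2}$ and let $\Pi_n' = \{\mathcal{N}^{i_k}\}_{k=1}^{N_n} \subseteq \Pi$ be a sub-compound of polynomial size $N_n = \poly(n)$ such that every $\mathcal{N}^i \in \Pi$ lies within diamond distance $\eta_n$ of some $\mathcal{N}^{i_k}$. Applying the finite-case analysis to $\Pi_n'$ yields an $(R,n,\delta_n)$-code whose fidelity bound $1 - 4 N_n \sqrt{2\sqrt{\delta_1}+\delta_2}$ still tends to $1$, because the polynomial factor $N_n$ is dominated by the $2^{-\Omega(\sqrt n)}$ decay of $\sqrt{\delta_i}$. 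For any $\mathcal{N}^i \in \Pi$ there is a $k$ with $\lVert (\mathcal{N}^i)^{\otimes n} - (\mathcal{N}^{i_k})^{\otimes n} \rVert_\diamond \leq n\eta_n = 1/n$, so \eqref{eq:fidelity channel continuity} implies that the decoding fidelity under $\mathcal{N}^i$ differs from that under $\mathcal{N}^{i_k}$ by $O(n^{-1/2}) \to 0$. Finally, the continuity bound \eqref{eq:fannes mutual info} gives $\min_k I(A':B)_{\mathcal{N}^{i_k}(\rho)} \to \inf_{i\in I} I(A':B)_{\mathcal{N}^i(\rho)}$ as $n \to \infty$, so every $R$ below the infimum is achievable for $\Pi$, and the lemma follows by taking $R$ arbitrarily close to $\tfrac12 \inf_i I(A':B)_{\mathcal{N}^i(\rho)}$ and supremizing over $\rho$.

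The main obstacle is the tension in the choice of $\eta_n$: it must decay fast enough that the tensor-product blow-up $n\eta_n$ still vanishes (so that codes designed for $\Pi_n'$ transfer to all of $\Pi$), but not so fast that the covering number $N_n$ grows exponentially, since a union-bound factor $N_n$ appears in the fidelity estimate of \cref{thm:uninformed one-shot}. The polynomial covering number available in finite dimension, together with the $2^{-\Omega(\sqrt n)}$ AEP decay, provides exactly the slack needed to close the argument.
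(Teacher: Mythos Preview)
Your proposal is correct and follows essentially the same approach as the paper: apply the one-shot coding theorem to a polynomial-size diamond-norm net (the paper uses $\nu=1/n^2$ and cites \cite[Lemma V.2]{bjelakovic2008quantum}), control $\delta_1,\delta_2$ via the AEP, and transfer the code to all of $\Pi$ via \eqref{eq:fidelity channel continuity}. One small imprecision: $\delta_1$ and $\delta_2$ each carry an additive $24\eps$ term, so they are not literally $2^{-\Omega(\sqrt n)}$; you need to choose $\eps_n$ polynomially small in $nN_n$ (the paper takes $\eps=1/(nN)^4$) so that $N_n\,\eps_n^{1/4}\to 0$, which is compatible with the $O(\sqrt{\log(1/\eps_n)})$ growth of the AEP convergence constant.
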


\begin{proof}
  Let $\rho_{AA'}$ be a pure state, $\Delta > 0$, and
  \begin{align*}
  R = \frac12 \inf_{i \in I} I(A':B)_{\mathcal N^i(\rho)} - \Delta.
  \end{align*}
  We will show that for any $\delta > 0$ there exists $n_0$ such that, for all $n \geq n_0$, the triple $(R,n,\delta)$ is achievable for $\Pi_{A\to B}$.
  If $R \leq 0$ then there is nothing to show, thus we may assume that $R > 0$.

  To reduce to finite compound channels, we use the discretization result~\cite[Lemma V.2]{bjelakovic2008quantum}, which asserts that for any $\nu > 0$ there exists a finite compound channel $\widetilde\Pi_{A\to B} = \{ \widetilde{\mathcal N}^j_{A\to B} \}_{j=1}^N$ of cardinality
  $N \leq (6 / \nu)^{2 d_{AB}^2}$
  with the property that for any $\mathcal N^i \in \Pi$ there exists some $\widetilde{\mathcal N}^j \in \widetilde\Pi$ such that
  $\lVert (\mathcal N^i)^{\otimes k} - (\widetilde{\mathcal N}^j)^{\otimes k} \rVert_\diamond \leq k \nu$ for all $k$,
  and vice versa (that is,  for any $\widetilde{\mathcal N}^j \in \widetilde\Pi$ there exists some $\mathcal N^i \in \Pi$ such that $\lVert (\mathcal N^i)^{\otimes k} - (\widetilde{\mathcal N}^j)^{\otimes k} \rVert_\diamond \leq k \nu$ for all $k$).
  We shall choose $\nu = 1/n^2$.
  Then, the discretization has cardinality $N \leq (3 n)^{4 d_{AB}^2}$.
  Moreover, $\lVert \mathcal N^i - \widetilde{\mathcal N}^j \rVert_\diamond \leq 1 / {n^2}$ for any pair of channels as above, and thus~\eqref{eq:fannes mutual info} implies that
  \begin{align*}
  &\bigl| \inf_{i \in I} I(A':B)_{\mathcal N^i(\rho)} - \min_{j=1,\dots,N} I(A':B)_{\widetilde{\mathcal N}^j(\rho)} \bigr|  \\
  \leq\;&\frac 1 {n^2} \log d_A + \left(1+\frac{1}{2n^2}\right)h\biggl( \frac 1 {2n^2+1}\biggr).
  \end{align*}
  As a consequence,
  \begin{align}\label{eq:R vs discretization}
    R \leq \frac12 \min_{j=1,\dots,N} I(A':B)_{\widetilde{\mathcal N}^j(\rho)} - \frac\Delta2
  \end{align}
  for $n$ sufficiently large (depending only on $d_A$ and $\Delta$). Let us assume that this is the case.

  We now apply our one-shot coding result, \cref{thm:uninformed one-shot}, to $\widetilde\Pi^{\otimes n}_{A\to B} = \{ (\widetilde{\mathcal N}^j_{A\to B})^{\otimes n} \}_{j=1}^N$,  $\rho_{AA'}^{\otimes n}$, $M_0 = \lceil 2^{nR} \rceil$, $M_1 = \lceil 2^{n(H(A)_\rho - R - \Delta/2)} \rceil$.
  For all $\eps > 0$, we obtain an encoder $\mathcal E_{A_0A_1\to A^n}$ and a decoder $\mathcal D_{B^nB_1\to A_0}$ such that
  \begin{equation}\label{eq:uninformed F discretized}
  \begin{aligned}
    &\min_{j=1,\dots,N} F(\Phi^+_{A_0R},\mathcal D \circ (\widetilde{\mathcal N}^j)^{\otimes n} \circ \mathcal E(\Phi^+_{A_0R}\otimes\Phi^+_{A_1 B_1}))   \\
    &\geq 1 - 4 N \sqrt{2 \sqrt{\delta_1} + \delta_2},
  \end{aligned}
  \end{equation}
  where
 \begin{align*}
\delta_1\;&= 3 \cdot 2^{-\frac12 (\Hmin^\eps(A^n)_{\rho^{\otimes n}} - \log M_0 - \log M_1)} + 24 \eps, \\
\delta_2\;&= 3 \cdot 2^{-\frac12 (-\max_{j} \Hmax^\eps({A'}^n|B^n)_{(\widetilde{\mathcal N}^j(\rho))^{\otimes n}} - 2 \log N)} \\
&\qquad\cdot 2^{-\frac12 (- \log M_0 + \log M_1)} + 24 \eps.
  \end{align*}
  We now choose\footnote{We make this choice for convenience and note that the use of the asymptotic equipartition property~\eqref{eq:AEP H_min} would also allow to choose $\varepsilon$ super-polynomially small in $n$.} $\eps = 1/(n N)^4$. Since $\eps$ decays only polynomially with $n$, the asymptotic equipartition property \eqref{eq:AEP H_min} together with~\eqref{eq:AEP convergence} and the estimate~\eqref{eq:R vs discretization} implies that
  \begin{align*}
    \delta_1
    \leq 3 \cdot 2^{-\frac n 2 \bigl( \frac \Delta 2 - \frac 2 n - \frac {\delta(\eps,\rho)} {\sqrt n} \bigr)} + 24 \eps
    \leq 3 \cdot 2^{-n \frac \Delta 8} + 24 \eps
    \leq 25 \eps
  \end{align*}
  for sufficiently large $n$ (depending only $d_{AB}$, $\Delta$ and $R$).
  Likewise, using \eqref{eq:AEP H_max} instead of \eqref{eq:AEP H_min} we obtain that
  \begin{align*}
    \delta_2\;&\leq 3N \cdot 2^{-\frac n 2 \bigl(\frac \Delta 2 - \frac 1 n - \frac {\delta(\eps,\widetilde{\mathcal N}^j(\rho))} {\sqrt n} \bigr)} + 24 \eps  \\
    &\leq 3  N \cdot 2^{-n \frac \Delta 8} + 24 \eps\\
    &\leq 25 \eps,
  \end{align*}
  where we use that $N$ grows only polynomially with $n$.
  By inserting the two bounds into~\eqref{eq:uninformed F discretized}, we obtain that
  \begin{align*}
    &\quad F(\Phi^+_{A_0R},\mathcal D \circ (\widetilde{\mathcal N}^j)^{\otimes n} \circ \mathcal E(\Phi^+_{A_0R}\otimes\Phi^+_{A_1 B_1}))  \\
    &\geq  1 - 4 N \sqrt{2 \sqrt{25 \eps} + 25 \eps}
    \geq 1 - 24 N \eps^{1/4}\\
    &= 1 - \frac {24} n.
  \end{align*}
  At last, we relate this to the entanglement fidelity for the original compound channel. Using~\eqref{eq:fidelity channel continuity} and recalling that $\nu = 1/n^2$, we find that
  \begin{align*}
    &\quad \inf_{i \in I} F(\Phi^+_{A_0R},\mathcal D \circ (\mathcal N^i)^{\otimes n} \circ \mathcal E(\Phi^+_{A_0R}\otimes\Phi^+_{A_1 B_1}))  \\
    &\geq  1 - \frac {24} n - \frac 2 {\sqrt n},
  \end{align*}
  since for any $\mathcal N^i \in \Pi$ there exist $\widetilde{\mathcal N}^j \in \widetilde\Pi$ such that $\lVert (\mathcal N^i)^{\otimes n} - (\widetilde{\mathcal N}^j)^{\otimes n} \rVert_\diamond \leq n\nu = 1/n$. We conclude that, for any $\delta > 0$ and sufficiently large $n$, $(R,\delta,n)$ is a valid triple for the compound channel $\Pi_{A\to B}$.
\end{proof}

\subsection{Converse}\label{subsec:IR converse}

The following lemma establishes the converse direction of \cref{thm:IR}. The proof uses ideas from~\cite{adami97,matthews2014finite,Datta2008}.

\begin{lemma}\label{lem:IR converse}
	Let $\Pi_{A \rightarrow B}  = \{ \mathcal N^i_{A\rightarrow B} \}_{i \in I}$ be a compound channel with arbitrary index set $I$.
	Then, we have
	\begin{align*}
		Q_{E,IR}(\Pi_{A \rightarrow B}) \leq \frac12 \sup_{\rho} \inf_{i\in I}I(A':B)_{\mathcal{N}^i(\rho)},
	\end{align*}
	where the supremum is over all purified input distributions $\rho_{A'A}$ with $A \cong A'$.
\end{lemma}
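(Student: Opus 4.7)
Fix any $(R,n,\delta)$ achievable for $\Pi_{A\to B}^{\otimes n}$ with informed receiver, and let $(M_0,M_1,\mathcal E,\{\mathcal D^i\}_{i\in I})$ be a corresponding code. Dilating $\mathcal E$ to an isometry $V_{\mathcal E}:A_0A_1\to A^n E$ and applying it to $\Phi^+_{A_0R}\otimes\Phi^+_{A_1B_1}$ yields a pure state on $A^n R B_1 E$; this defines a channel input $\rho^n_{A'^n A^n}$ with purifying reference $A'^n = R B_1 E$ that, crucially, is independent of $i\in I$ since $\mathcal E$ is common to all channels. The per-channel output $\tau^i_{\hat A_0 R}:=\mathcal D^i \circ (\mathcal N^i)^{\otimes n} \circ \mathcal E(\Phi^+_{A_0R}\otimes\Phi^+_{A_1B_1})$ has fidelity at least $1-\delta$ with $\Phi^+_{A_0R}$, so \eqref{eq:fuchs-van de graaf} gives $\lVert\tau^i-\Phi^+_{A_0R}\rVert_1\leq 2\sqrt\delta$ and the continuity bound \eqref{eq:fannes mutual info} yields
\begin{align*}
I(\hat A_0 : R)_{\tau^i} \geq 2(1-\sqrt\delta)\log M_0 - 2(1+\sqrt\delta)\,h\!\left(\tfrac{\sqrt\delta}{1+\sqrt\delta}\right).
\end{align*}

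I would then chain data-processing inequalities to bound this by a quantity depending only on the channel and the input $\rho^n$. DPI through $\mathcal D^i$ gives $I(\hat A_0:R)_{\tau^i}\leq I(B^nB_1:R)$ on the post-channel state. To replace $B^nB_1$ by $B^n$ on the left of the mutual information, I use that $B_1$ and $R$ come from disjoint maximally entangled pairs and are never touched by either the encoder or the channel, so they remain in a product state with $H(B_1|R)=\log M_1\geq H(B_1|B^n)$; this is equivalent to $I(B^nB_1:R)\leq I(B^n:RB_1)$, and one more DPI step enlarging the reference from $RB_1$ to $RB_1E=A'^n$ yields $I(B^nB_1:R)\leq I(A'^n:B^n)_{(\mathcal N^i)^{\otimes n}(\rho^n)}$. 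Since $\rho^n$ is common to all $i$, taking the infimum over $i\in I$ is free, so after dividing by $2n$ we obtain
\begin{align*}
(1-\sqrt\delta)R \leq \frac{1}{2n}\inf_{i\in I} I(A'^n:B^n)_{(\mathcal N^i)^{\otimes n}(\rho^n)} + \frac{(1+\sqrt\delta)\,h\!\left(\tfrac{\sqrt\delta}{1+\sqrt\delta}\right)}{n}.
\end{align*}

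The crucial step is single-letterizing the right-hand side in a way that commutes with $\inf_i$. For each fixed $i$, the chain rule combined with $I(X:Y|Z)\leq I(XZ:Y)$ gives $I(A'^n:B^n)_{(\mathcal N^i)^{\otimes n}(\rho^n)}\leq\sum_{k=1}^n I(A'^n B_{<k}:B_k)$, and data processing on the reference side (since the quantum mutual information is maximized on pure inputs) then bounds each term by $I(A':B)_{\mathcal N^i(\xi^{(k)})}$, where $\xi^{(k)}_{A'A}$ is an arbitrary purification of the $k$-th marginal $\rho^n_{A_k}$. The essential observation is that the $\xi^{(k)}$ are determined by $\rho^n$ alone and do not depend on $i$. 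Introducing the time-sharing state $\bar\rho_{A'KA}=\frac{1}{n}\sum_{k=1}^n\ket{k}\!\bra{k}_K\otimes\xi^{(k)}_{A'A}$ and using concavity of the output von Neumann entropy gives $\frac{1}{n}\sum_k I(A':B)_{\mathcal N^i(\xi^{(k)})}\leq I(A'K:B)_{\mathcal N^i(\bar\rho)}$, and a last DPI step replaces $\bar\rho$ by a pure purification $\eta_{A''A}$ of $\bar\rho_A$ with $A''\cong A$. Since $\eta$ is common to all $i$, we conclude
\begin{align*}
\frac{1}{n}\inf_{i\in I} I(A'^n:B^n)_{(\mathcal N^i)^{\otimes n}(\rho^n)}\leq\sup_{\sigma}\inf_{i\in I} I(A':B)_{\mathcal N^i(\sigma)}.
\end{align*}

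Combining this with the previous display and sending first $n\to\infty$ and then $\delta\downarrow 0$ yields the claim. The main obstacle is precisely the single-letterization: a naive argument that treats each $i$ separately would produce $i$-dependent auxiliary states and only give the strictly weaker bound $\inf_i\sup_\sigma$, whereas the construction above exploits the fact that the marginals of $\rho^n$ (and their purifications) are intrinsic to $\rho^n$ alone, which is what allows the infimum to pass outside the single-letterization and yield $\sup_\sigma\inf_i$.
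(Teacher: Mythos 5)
Your proof is correct, and it takes a genuinely different route from the paper's. The paper first converts the quantum converse into a classical one via \eqref{eq:classical vs quantum} and then invokes the meta-converse for entanglement-assisted classical coding, $\log M_0 \leq \frac{1}{1-\delta}\bigl(I(\rho_{A^n},(\mathcal N^i)^{\otimes n})+h(\delta)\bigr)$ from \cite{matthews2014finite}, followed by the Adami--Cerf subadditivity \cite{adami97}. You stay entirely on the quantum side: you extract $2(1-\sqrt\delta)\log M_0$ from the fidelity condition via \eqref{eq:fuchs-van de graaf} and the continuity bound \eqref{eq:fannes mutual info}, push it through the decoder, reposition $B_1$ using $I(B^nB_1:R)\le I(B^n:RB_1)$ (which, as you say, reduces to $H(B_1|R)=\log M_1\ge H(B_1|B^n)$ because $\rho_{RB_1}=\tau_R\otimes\tau_{B_1}$ is never acted upon), and re-derive the subadditivity inline by the chain rule and data processing on the reference side. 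The decisive idea --- single-letterizing with purifications $\xi^{(k)}$ of the marginals $\rho^n_{A_k}$ and a time-sharing state that depend only on the code and hence are channel-independent, so that $\inf_{i\in I}$ passes outside the single-letterization --- is exactly the paper's construction in \eqref{eq:indep of channel}--\eqref{eq:chain rule stuff}. Your route buys self-containedness (no appeal to teleportation/superdense coding or to the meta-converse) at the price of a slightly weaker finite-$n$ error dependence ($1-\sqrt{\delta}$ in place of $1-\delta$), which is immaterial after $n\to\infty$ and $\delta\downarrow 0$. Two small remarks: the step $\frac1n\sum_k I(A':B)_{\mathcal N^i(\xi^{(k)})}\le I(A'K:B)_{\mathcal N^i(\bar\rho)}$ is most cleanly justified by the chain rule $I(A'K:B)=I(K:B)+I(A':B\,|\,K)$ with $K$ classical, rather than by concavity of the output entropy (mutual information alone is neither concave nor convex); and, like the paper's argument, yours never uses that the decoder is uninformed, which is precisely why the bound applies to $Q_{E,IR}$.
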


\begin{proof}
  In view of~\eqref{eq:classical vs quantum}, it suffices to argue that
  \begin{align}\label{eq:classical todo}
  	C_{E,IR}(\Pi_{A\to B}) \leq \sup_\rho \inf_{i \in I} I(A':B)_{\mathcal N^i(\rho)},
  \end{align}
  where $C_{E,IR}$ stands for the entanglement-assisted classical capacity with informed receiver as discussed in \cref{sec:results}.
  Thus, let $(R,n,\delta)$ be an achievable triple for entanglement-assisted classical communication in the informed receiver scenario, with corresponding code $(M_0,M_1,\{\mathcal E^m\}, \{ \Lambda^{m,i} \}_{m,i})$, where $\{ \Lambda^{m,i} \}_{m=1}^{M_0}$ is a POVM for each fixed $i \in I$. 
  Let $\rho_{A^n} = \frac 1 {M_0} \sum_{m=1}^{M_0} \mathcal E^m(\tau_{M_1})$ denote the average channel input.
  For each fixed $i \in I$, $(M_0,M_1,\{\mathcal E^m\}, \{ \Lambda^{m,i} \}_m)$ is a code for entanglement-assisted classical communication through $(\mathcal N^i)^{\otimes n}$ with message size $M_0$ and error probability $\delta$. Thus, we may apply the converse from~\cite[Theorem 18, (43) \& Lemma 30]{matthews2014finite},
	\begin{align}\label{eq:matthews_wehner}
	\log M_0 \leq \frac 1 {1-\delta} \Big( I(\rho_{A^n}, (\mathcal N^i)^{\otimes n}) + h(\delta) \Big),
	\end{align}
  where we have introduced the notation $I(\sigma_A, \mathcal T) := I(A':B)_{\mathcal T[\sigma_{AA'}]}$, with $\sigma_{AA'}$ an arbitrary purification of $\sigma_A$.
  We now use the sub-additivity property~\cite[(3.24)]{adami97},
  \begin{align*}
  I(\rho_{A^n}, (\mathcal N^i)^{\otimes n}) \leq \sum_{k=1}^n I(\rho^k_A, \mathcal N^i),
  \end{align*}
  where we have defined $\rho^k_A := \rho_{A_k}$.
  By definition, the right-hand side is equal to
  \begin{align*}
  \sum_{k=1}^n I(\rho^k_A, \mathcal N^i) = \sum_{k=1}^n I(A':B)_{\mathcal N^i(\rho^k_{AA'})},
  \end{align*}
  where the $\ket{\rho^k_{AA'}}$ denote purifications of the $\rho^k_A$. To further upper bound this expression, we introduce the pure state
  \begin{align}\label{eq:indep of channel}
  	\sigma_{AA'XY} := \frac 1 {\sqrt n} \sum_{k=1}^n \ket{\rho_{AA'}^k} \otimes \ket{kk}_{XY}.
  \end{align}
  Then, $\sigma_{AA'X} = \frac 1 n \sum_{k=1}^n \rho_{AA'}^k \otimes \ket k\!\!\bra k_X$, and therefore
  \begin{equation}\label{eq:chain rule stuff}
  \begin{aligned}
  	&\quad\frac 1 n \sum_{k=1}^n I(A':B)_{\mathcal N^i(\rho_{AA'}^k)}
  	= I(A':B|X)_{\mathcal N^i(\sigma)} \\
  	&= I(A'X:B)_{\mathcal N^i(\sigma)} - I(B:X)_{\mathcal N^i(\sigma)} \\
  	&\leq I(A'X:B)_{\mathcal N^i(\sigma)}
  	\leq I(A'XY:B)_{\mathcal N^i(\sigma)} \\
  	&= I(\sigma_A, \mathcal N^i),
  \end{aligned}
  \end{equation}
  where the first equation uses that $X$ is classical, the second is the chain rule for the conditional mutual information, the third step is the non-negativity of the mutual information, the fourth the monotonicity of mutual information under local quantum operations, and the last equation is again by definition. If we plug~\eqref{eq:chain rule stuff} into~\eqref{eq:matthews_wehner} then we obtain the upper bound
  \begin{align*}
  \frac 1 n \log M_0 \leq \frac 1 {1-\delta}\left(I(\sigma_A, \mathcal N^i)+\frac{h(\delta)}{n}\right).
  \end{align*}
  Crucially, the state $\sigma_A = \frac 1 n \sum_{k=1}^n \rho_{A}^k$ does not depend on the channel $\mathcal N^i$ under consideration (see~\eqref{eq:indep of channel}). Thus, it follows that the above inequality holds for all $i \in I$,
  \begin{align*}
  \frac 1 n \log M_0 \leq \frac 1 {1-\delta}\left(\inf_{i \in I} I(\sigma_A, \mathcal N^i)+\frac{h(\delta)}{n}\right),
  \end{align*}
  and therefore
  \begin{align*}
  R &\leq \frac 1 n \log M_0\\
  &\leq \frac 1 {1-\delta}\left(\sup_\rho \inf_{i \in I} I(\rho_A, \mathcal N^i)+\frac{h(\delta)}{n}\right) \\
  &=  \frac 1 {1-\delta}\left(\sup_\rho \inf_{i \in I} I(A':B)_{\mathcal N^i(\rho)}+\frac{h(\delta)}{n}\right).
  \end{align*}
  This establishes~\eqref{eq:classical todo} and thus the claim of the theorem.
\end{proof}

As explained at the beginning of \cref{sec:uninformed and IR}, \cref{lem:uninformed direct,lem:IR converse} together establish \cref{thm:uninformed,thm:IR}.

\section{Informed sender}\label{sec:informed_sender}

\subsection{One-shot coding theorem}

Our one-shot coding theorem in the uninformed scenario can be understood a direct consequence of a corresponding result in~\cite{dupuis2009decoupling} for a single fixed channel, applied to the average channel induced by the compound. In the informed sender scenario, such a reduction is complicated by the fact that now the encoders depend on the individual channels in the compound. In the case of the plain quantum capacity, we show how these challenges can in fact be overcome by a suitable reduction, which leads to a pleasant new proof of the corresponding result in~\cite{bjelakovic09} (see \cref{app:plain IS}). In the presence of entanglement assistance, however, we need to develop some new technical tools.

Following the decoupling approach, we start with the following ansatz for the encoders~\cite{dupuis2009decoupling}: Given integers $M_0$ and $\{M_1^i\}_{i=1}^N$, let $M_1$ denote the least common multiple of the $M_1^i$. Let $A_0$ and $A_1$ denote quantum systems of dimensions $M_0$ and $M_1$, respectively, and fix for each value of $i$ a tensor product decomposition $A_1 \cong A_1^i \otimes (A_1^i)^c$ such that $d_{A_1^i} = M_1^i$. Given states $\{\rho^i_A\}_{i=1}^N$, we now define completely positive maps
\begin{equation}\label{eq:IS ansatz}
\begin{aligned}
  &\mathcal E^i_{A_0A_1\to A}(\sigma_{A_0A_1}) := d_A O_A(\rho^i) U^i_A J^i_{A_0A_1^i\to A}  \\
  &\quad\sigma_{A_0A_1^i} (J^i_{A_0A_1^i\to A})^\dagger (U^i_A)^\dagger \bigl(O_A(\rho^i)\bigr)^\dagger,
\end{aligned}
\end{equation}
where the $J^i_{A_0A^i_1\to A}$ are fixed full-rank partial isometries, the $U^i_A$ denote unitaries that will later be chosen at random, and where we use the notation
\begin{equation}\label{eq:OA definition}
  O_A(\rho) := \sum_{a,a'} \rho_{a,a'} \ket a\!\!\bra{a'}_A
\end{equation}
with $\rho_{a,a'}$ the coefficients obtained by expanding the pure state $\ket{\rho_{AA'}}$ in the same computational basis as our maximally entangled states, i.e., $\ket{\rho_{AA'}}= \sum_{a,a'} \rho_{a,a'} \ket{a_A}\ket{a'_{A'}}$.
Then we have $\ket{\rho_{AA'}}=\sqrt{d_A} O_A(\rho) \ket{\Phi^+_{AA'}}$. We caution that $O_A(\rho)$ is not in general Hermitian. 

To assess the performance of the encoders $\{\mathcal E^i\}_{i=1}^N$, we will consider the average encoder-and-channel
\begin{align}\label{eq:average motivation}
\overline{\mathcal T}_{A_0A_1\to B} :=  \frac1N \sum_{i=1}^N \mathcal N^i_{A\to B} \circ \mathcal E^i_{A_0A_1\to A}
\end{align}
and show that the complementary map decouples the reference $R$ from the environment. Following the decoupling approach, this will guarantee the existence of an uninformed decoder $\mathcal D_{B_1B\to A}$ for the map $\overline{\mathcal T}_{A_0A_1\to B}$ and therefore, as in \cref{subsec:uninformed one-shot} above, for each of its branches $\mathcal N^i \circ \mathcal E^i$.

For the purposes of obtaining a decoupling result in terms of smooth entropies it will in fact be useful to consider more general maps of the form
\begin{equation}\label{eq:average general}
\begin{aligned}
&\overline{\mathcal T}_{A_0A_1\to B}(\sigma_{A_0A_1}) := \frac1N \sum_{i=1}^N \mathcal T^i_{A\to B} \bigl( U^i_A J^i_{A_0A_1^i\to A} \\
&\quad\sigma_{A_0A_1^i} (J^i_{A_0A_1^i\to A})^\dagger (U^i_A)^\dagger \bigr),
\end{aligned}
\end{equation}
where the $\mathcal T^i_{A\to B}$ are arbitrary completely positive maps; we recover~\eqref{eq:average motivation} for the choice
\begin{align*}
\mathcal T^i_{A\to B}(\sigma_A) = \mathcal N^i_{A\to B}\left(d_A O_A(\rho^i) \sigma_A \left( O_A(\rho^i) \right)^\dagger\right).
\end{align*}
We now obtain an explicit complementary map. For this, let $K^i_{(A_1^i)^c\to A^c}$ denote isometries, where $A^c$ is an auxiliary system of sufficiently large dimension (e.g., $M_1$), and let $W^i_{A\to BE}$ denote dilations of the maps $\mathcal T^i_{A\to B}$.
Then, the maps
\begin{align}\label{eq:single complement}
\mathcal T^{i,c}_{A\to E}(\sigma_A) := \tr_B\left[W^i_{A\to BE} \sigma_A (W^i_{A\to BE})^\dagger\right]
\end{align}
define complementary maps of the channels $\mathcal T^i_{A\to B}(\sigma_A) = \tr_E\left[W^i_{A\to BE} \sigma_A (W^i_{A\to BE})^\dagger\right]$, and it is not hard to verify that the completely positive map
\begin{equation}\label{eq:average general complement}
\begin{aligned}
& \overline{\mathcal T}^c_{A_0A_1\to A^cEI}(\sigma_{A_0A_1}):=\frac1N \sum_{i,j}\tr_B \big[ W^i_{A\to BE} U_A^i \\
&\quad J^i_{A_0A_1^i\to A} K^i_{(A_1^i)^c\to A^c} \sigma_{A_0A_1} (K^j_{(A_1^j)^c\to A^c})^\dagger \\
&\quad (J^j_{A_0A_1^j\to A})^\dagger (U_A^j)^\dagger (W^j_{A\to BE})^\dagger\big] \otimes \ket i\!\!\bra j_I
\end{aligned}
\end{equation}
is complementary to the map~\eqref{eq:average general}.

\begin{lemma}\label{lem:average decoupling non-smooth}
Let $\{\mathcal T^i_{A\to B}\}_{i=1}^N$ be completely positive maps, and $M_0$, $\{M_1^i\}_{i=1}^N$ integers such that $M_0 M_1^i \leq d_A$ for all $i$. Let $\{ \mathcal T^{i,c}_{A\to E} \}_{i=1}^N$ and $\overline{\mathcal T}^c_{A_0A_1\to A^cEI}$ denote the complementary maps as defined in~\eqref{eq:single complement} and~\eqref{eq:average general complement}, respectively. Then, we have
\begin{align*}
&\quad\EE \left\lVert \overline{\mathcal T}^c_{A_0A_1\to A^cEI}(\Phi^+_{A_0R} \otimes \tau_{A_1}) - \omega_{A^cEI} \otimes \tau_R \right\rVert_1  \\
&\leq 2^{-\frac12 \bigl( \min_i H_2(A'|E)_{\mathcal T^{i,c}(\Phi^+)} -\log M_0 + \log M_1^i - 2 \log N - 2 \bigr)},
\end{align*}
where $\EE$ denotes the average over independent Haar-random unitaries $\{U^i_A\}$, and
\begin{align*}
\omega_{A^cEI} := \frac1N \sum_{i=1}^N \, &K^i_{(A_1^i)^c\to A^c} \, \tau_{(A_1^i)^c} (K^i_{(A_1^i)^c\to A^c})^\dagger \\
\otimes\;&\mathcal T^{i,c}_{A\to E} (\tau_A) \otimes \ket i\!\!\bra i_I.
\end{align*}
\end{lemma}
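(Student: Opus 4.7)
The plan is to follow the one-shot $L^2$-decoupling paradigm (cf.~\cite{dupuis2014one,dupuis2009decoupling}), adapted to the presence of $N$ independent Haar-random unitaries and the induced classical register $I$. Set $\Delta := \overline{\mathcal T}^c(\Phi^+_{A_0R}\otimes\tau_{A_1}) - \omega_{A^cEI}\otimes\tau_R$, the Hermitian operator whose trace norm we wish to bound. First I would reduce to a weighted Hilbert--Schmidt norm via the standard inequality
\[
  \lVert X_{CR}\rVert_1 \leq \sqrt{d_R\,\tr[\sigma_C]}\,\bigl\lVert (\sigma_C^{-1/4}\otimes I_R)\,X\,(\sigma_C^{-1/4}\otimes I_R) \bigr\rVert_2
\]
(with generalized inverses on $\mathrm{supp}\,\sigma_C$), applied with $C=A^cEI$ and $\sigma_C := \omega_{A^cEI}$. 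Jensen's inequality then gives $\EE\lVert\Delta\rVert_1 \leq \sqrt{M_0\,\EE\lVert(\omega^{-1/4})\Delta(\omega^{-1/4})\rVert_2^2}$, reducing the task to a moment calculation.

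Since $\omega_{A^cEI}$ is classical on $I$ with blocks $\omega^i_{A^cE} = (K^i\tau_{(A_1^i)^c}K^{i\dagger})\otimes \mathcal T^{i,c}(\tau_A)$, the weighted $L^2$-norm squared splits as a double sum over pairs $(i,j)$. For the diagonal $i=j$, the inner operator reduces to the single-channel decoupling error $\mathcal T^{i,c}(U^i\rho^i_{AR}U^{i\dagger}) - \mathcal T^{i,c}(\tau_A)\otimes\tau_R$ with $\rho^i_{AR}:=(J^i\otimes I)(\Phi^+_{A_0R}\otimes\tau_{A_1^i})(J^{i\dagger}\otimes I)$. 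The Haar second moment, together with the swap-trick identity $\lVert\tr_B X\rVert_2^2 = \tr[(X\otimes X^\dagger)(I_{BB'}\otimes F)]$ and the direct computation $\tr[(\rho^i_{AR})^2]=1/(M_0M_1^i)$ (which yields $H_2(A|R)_{\rho^i}\ge \log M_1^i$ via the choice $\sigma_R = \tau_R$), produces a per-block diagonal contribution of order $\tfrac{1}{N M_0 M_1^i}\,2^{-H_2(A'|E)_{\mathcal T^{i,c}(\Phi^+)}}$.

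For the off-diagonal blocks $i \neq j$, the operator $Y^{ij} := \tr_B[W^iU^iJ^iK^i(\Phi^+_{A_0R}\otimes\tau_{A_1})(J^jK^j)^\dagger U^{j\dagger}W^{j\dagger}]$ is bilinear in the independent Haar unitaries $U^i$ and $U^j$, so its weighted $L^2$-norm squared factors under the expectation into two single-unitary Haar first moments, each governed by $\EE_U[U\otimes U^\dagger] = F_A/d_A$ (proportional to the swap on $A\otimes A$). A direct computation bounds the off-diagonal blocks collectively by $O(N^2)$ times the per-block diagonal bound, which is exactly the origin of the $-2\log N$ in the stated exponent. Combining, the weighted $L^2$-norm squared is at most $\tfrac{O(N^2)}{M_0 M_1^i}\,2^{-\min_i H_2(A'|E)_{\mathcal T^{i,c}(\Phi^+)}}$, and the $\sqrt{M_0}$ from the Cauchy--Schwarz step yields the claimed exponent after taking the square root.

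The main technical hurdle will be the off-diagonal computation: one has to track the different partial isometries $J^iK^i$ and $J^jK^j$ acting on the two sides of the input, verify that the $\omega^{-1/4}$-weighting on $A^c$ absorbs the factor $K^i\tau_{(A_1^i)^c}K^{i\dagger}$ without producing unwanted dimensions (using the identity $(K^i\tau K^{i\dagger})^{-1/4}(K^i\tau K^{i\dagger})(K^i\tau K^{i\dagger})^{-1/4} = K^iK^{i\dagger}$ on the support), and ultimately confirm that the net dimensional prefactor is $N^2/(M_0M_1^i)$ rather than a quantity depending on $d_A$ or $d_E$.
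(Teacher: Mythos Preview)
Your overall strategy---reducing to a weighted Hilbert--Schmidt norm, splitting into diagonal and off-diagonal $I$-blocks, and handling the off-diagonal terms via the first-moment identity $\EE[U\otimes U^\dagger]=d_A^{-1}F_A$ for independent Haar unitaries---is sound and matches the paper's logic. However, there is a genuine gap in your choice of weight: taking $\sigma_C=\omega_{A^cEI}$ forces the $E$-part of block $i$ to be $\mathcal T^{i,c}(\tau_A)$, so the diagonal contribution you compute is
\[
\tr\Bigl[\bigl((\mathcal T^{i,c}(\tau_A))^{-1/4}\,\mathcal T^{i,c}(\Phi^+)\,(\mathcal T^{i,c}(\tau_A))^{-1/4}\bigr)^2\Bigr],
\]
which is only an \emph{upper} bound on $2^{-H_2(A'|E)_{\mathcal T^{i,c}(\Phi^+)}}$, since $H_2$ in~\eqref{eq:renyi 2} is defined via a minimization over all $\sigma_E\in\mathcal S(E)$. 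You would therefore prove a weaker inequality than the one stated. The fix is to replace $\omega$ by the block-diagonal weight $\sigma_C=\tfrac1N\sum_i (K^i\tau_{(A_1^i)^c}K^{i\dagger})\otimes\alpha_E^i\otimes\ket i\!\!\bra i_I$ with \emph{arbitrary} $\alpha_E^i\in\mathcal S(E)$, carry the $\alpha_E^i$ through both the diagonal and off-diagonal computations, and only optimize at the very end to recover $H_2$. (A secondary issue with $\sigma_C=\omega$ is that the $\mathcal T^i$ are merely completely positive, so $\tr[\omega]$ is not controlled and your prefactor $\sqrt{M_0}$ is not justified; this also disappears with normalized $\alpha_E^i$.)

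This is precisely how the paper proceeds, though organized differently: rather than a single global Cauchy--Schwarz step, the paper first applies the triangle inequality in trace norm to isolate each $\ket i\!\!\bra j_I$ block, then quotes the one-shot decoupling theorem of~\cite{dupuis2014one} for the diagonal terms and applies a H\"older inequality with free left/right weights $(\alpha_E^i)^{1/4}$, $(\alpha_E^j)^{1/4}$ (and the projectors $K^iK^{i\dagger}$ on $A^c$) for the off-diagonal terms. The cross terms $x_{ij}$ are then controlled by $\sqrt{x_{ii}x_{jj}}$ via Cauchy--Schwarz, yielding the $2\log N+2$ loss. Once you allow the $E$-weights to vary, your global-$L^2$ route and the paper's per-block route lead to the same place; the paper's organization has the minor advantage that the diagonal bound comes for free from the existing single-channel result.
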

\begin{proof}
  We start by bounding the trace norm deviation from the average state by using the triangle inequality:
\begin{align}
\label{eq:deviation to bound}
    &\quad \EE \left\lVert \overline{\mathcal T}^c_{A_0A_1\to A^cEI}(\Phi^+_{A_0R} \otimes \tau_{A_1}) - \omega_{A^c E I} \otimes \tau_R \right\rVert_1 \\
\nonumber
    &\leq \frac1N \sum_{i,j} \EE \Big\lVert \Bigl(
			\tr_B\Big[( W^i U^i J^i K^i (\Phi^+_{A_0R} \otimes \tau_{A_1}) \\
\nonumber
			&\qquad\qquad\qquad\qquad \cdot (K^j)^\dagger (J^j)^\dagger (U^j)^\dagger (W^j)^\dagger\Big] \\
\nonumber
			&\quad - \delta_{i,j} K^i \, \tau_{(A_1^i)^c} (K^i)^\dagger \otimes \mathcal T^{i,c} (\tau_A) \otimes \tau_R \Bigr) \otimes \ket i\!\!\bra j_I \Big\rVert_1 \\
  \label{eq:diagonal}
    &\leq \frac1N \sum_{i=1}^N \EE \Big\lVert \mathcal T^{i,c} \bigl( U^i J^i (\Phi^+_{A_0R} \otimes \tau_{A_1^i}) (J^i)^\dagger (U^i)^\dagger \bigr) \\
\nonumber
     &\qquad\qquad\quad  - \mathcal T^{i,c}(\tau_A^i) \otimes \tau_R \Big\rVert_1 \\
  \label{eq:offdiagonal}
    &\quad + \frac1N \sum_{i\neq j} \EE \Big\lVert \tr_B\Big[W^i U^i J^i K^i (\Phi^+_{A_0R} \otimes \tau_{A_1}) \\
\nonumber
    &\qquad\qquad\qquad \cdot (K^j)^\dagger (J^j)^\dagger (U^j)^\dagger (W^j)^\dagger\Big] \Big\rVert_1.
 \end{align}
 To bound the averages in \eqref{eq:diagonal}, we invoke the one-shot decoupling theorem~\cite[Theorem 3.3]{dupuis2014one} to obtain the first inequality in
  \begin{align}
  \nonumber
  	&\EE \left\lVert \mathcal T^{i,c} \bigl( U^i J^i (\Phi^+_{A_0R} \otimes \tau_{A_1^i}) (J^i)^\dagger (U^i)^\dagger \bigr)
      - \mathcal T^{i,c}(\tau_A^i) \otimes \tau_R \right\rVert_1 \\
  \nonumber
    & \leq 2^{-\frac12 \bigl( H_2(A'|E)_{\mathcal T^{i,c}(\Phi^+)} - \log M_0 + \log M_1^i \bigr)}\\
  \nonumber
    &\leq \sqrt{\tr\left[\left({\widetilde\rho}^i_{A'E}\right)^2\right]} \cdot 2^{-\frac12( - \log M_0 + \log M_1^i )}\\
    &= \sqrt{\tr\left[\left({\widetilde\rho}^i_{B}\right)^2\right]} \cdot 2^{-\frac12( - \log M_0 + \log M_1^i )}
    =: x_{ii}.\label{eq:diagonal bound}
  \end{align}
  For the second inequality we have defined $\rho_{A'BE}^i := W_{A\to BE}^i \Phi^+_{AA'} (W_{A\to BE}^i)^\dagger$ and ${\widetilde\rho}_{A'BE}^i := (\alpha_E^i)^{-1/4} \rho_{A'BE}^i (\alpha_E^i)^{-1/4}$ for an arbitrary choice of state $\alpha_E^i \in \mathcal S(E)$, and used the definition of the conditional collision entropy in~\eqref{eq:renyi 2}.

  Bounding the averages in \eqref{eq:offdiagonal} is somewhat more involved because we cannot directly rely on previous results.
  We start with the H\"older inequality~\cite[Corollary IV.2.6]{bhatia2013matrix} and obtain the upper bound $\EE \lVert \kappa^{ij}_{A^cER} \rVert_2$, with $\kappa^{ij}_{A^cBER}$ defined as in~\eqref{eq:hoelder};
  \begin{figure*}[!ht]
\begin{equation}\label{eq:hoelder}
  \begin{aligned}
  	&\quad \EE \left\lVert \tr_B\left[W^i U^i J^i K^i (\Phi^+_{A_0R} \otimes \tau_{A_1}) (K^j)^\dagger (J^j)^\dagger (U^j)^\dagger (W^j)^\dagger\right]\right\rVert_1 \\
    &\leq \bigl\lVert \left(\alpha^i_E \otimes I_R \otimes K^i_{(A_1^i)^c\to A^c} (K^i_{(A_1^i)^c\to A^c})^\dagger\right)^{1/4} \bigr\rVert_4
    \cdot \bigl\lVert \left(\alpha^j_E \otimes I_R \otimes K^j_{(A_1^j)^c\to A^c} (K^j_{(A_1^j)^c\to A^c})^\dagger\right)^{1/4} \bigr\rVert_4 \\
    &\quad\cdot \EE \left\lVert (\alpha^i_E)^{-1/4} \tr_B\left[W^i U^i J^i K^i (\Phi^+_{A_0R} \otimes \tau_{A_1}) (K^j)^\dagger (J^j)^\dagger (U^j)^\dagger (W^j)^\dagger\right] (\alpha^j_E)^{-1/4} \right\rVert_2 \\
    &= 2^{-\frac12 \bigl( -\log M_0 + \frac12 \log M_1^i + \frac12 \log M_1^j - \log M_1 \bigr)} \\
    &\quad\cdot \EE \Big\lVert \tr_B\Big[\underbrace{( \alpha^i_E )^{-1/4} W^i U^i J^i K^i (\Phi^+_{A_0R} \otimes \tau_{A_1}) (K^j)^\dagger (J^j)^\dagger (U^j)^\dagger (W^j)^\dagger ( \alpha^j_E )^{-1/4}}_{=: \kappa^{ij}_{A^cBER}}\Big]\Big\rVert_2,
  \end{aligned}
\end{equation}
\hrulefill
\end{figure*}
here, we have also used that the $K^i_{(A_1^i)^c\to A^c} (K^i_{(A_1^i)^c\to A^c})^\dagger$ are orthogonal projections onto the ranges of the isometries $K^i_{(A_1^i)^c\to A^c}$.
  We now apply Jensen's inequality and the swap trick,
  \begin{align}\label{eq:jensen swap}
  \begin{aligned}
  	&\quad\EE \lVert \kappa^{ij}_{A^cER} \rVert_2 \leq \sqrt{\EE \tr\left[\kappa^{ij}_{A^cER} (\kappa^{ij}_{A^cER})^\dagger\right]}  \\
	&=\sqrt{\EE \tr\left[\left(\kappa^{ij}_{A^cBER} \otimes (\kappa^{ij}_{A^cBER})^\dagger\right) F_{A^cER}\right]},
  \end{aligned}
  \end{align}
  where we write $F_S$ for the operator that swaps two copies of a subsystem $S$ and acts as the identity otherwise.
  To compute the right-hand side average, it will be useful to introduce the following notation:
{\small \begin{align*}
&{\widetilde W}_{A\to BE}^i := (\alpha^i_E)^{-1/4} W^i_{A\to BE} \\
&\Omega_{AA^cR}^{ij} := J^i K^i (\Phi^+_{A_0R} \otimes \tau_{A_1}) (K^j)^\dagger (J^j)^\dagger,
\end{align*}}
so that
\begin{align*}
\kappa^{ij}_{A^cBER} &= {\widetilde W}_{A\to BE}^i U_A^i \Omega_{AA^cR}^{ij} (U_A^j)^\dagger ({\widetilde W}_{A\to BE}^j)^\dagger \\
&= (\kappa^{ji}_{A^cBER})^\dagger.
\end{align*}
  Then, we get
  \begin{align*}
		&\quad \EE \tr\left[(\kappa^{ij}_{A^cBER} \otimes (\kappa^{ij}_{A^cBER})^\dagger) F_{A^cER}\right] \\
		&= \EE \tr\Big[\bigl( {\widetilde W}_{A\to BE}^i U_A^i \Omega_{AA^cR}^{ij} (U_A^j)^\dagger ({\widetilde W}_{A\to BE}^j)^\dagger \\
		&\qquad  \otimes {\widetilde W}_{A\to BE}^j U_A^j \Omega_{AA^cR}^{ji} (U_A^i)^\dagger ({\widetilde W}_{A\to BE}^i)^\dagger \bigr) F_{A^cER}\Big] \\
		&= d_A^{-2} \sum_{a,b,c,d} \tr\Big[\bigl({\widetilde W}_{A\to BE}^i \ket a\!\!\bra b \Omega_{AA^cR}^{ij} \ket d\!\!\bra c ({\widetilde W}_{A\to BE}^j)^\dagger \\
		&\qquad \otimes {\widetilde W}_{A\to BE}^j \ket c\!\!\bra d \Omega_{AA^cR}^{ji} \ket b\!\!\bra a ({\widetilde W}_{A\to BE}^i)^\dagger \bigr) \, F_{A^cER}\Big],
	\intertext{since $\EE(U_A^i \otimes (U_A^i)^\dagger) = d_A^{-1} F_A = d_A^{-1} \sum_{a,b} \ket a\!\!\bra b \otimes \ket b\!\!\bra a$ and likewise for $U_A^j$. This in turn is equal to}
		&\quad d_A^{-2} \sum_{a,c} \tr\Big[\bigl({\widetilde W}_{A\to BE}^i \ket a\!\!\bra c ({\widetilde W}_{A\to BE}^j)^\dagger \\
		&\qquad\qquad\quad\otimes {\widetilde W}_{A\to BE}^j \ket c\!\!\bra a ({\widetilde W}_{A\to BE}^i)^\dagger \bigr) F_E \Big]\\
		&\qquad\cdot \sum_{b,d} \tr\left[\bigl( \bra b \Omega_{AA^cR}^{ij} \ket d \otimes \bra d \Omega_{AA^cR}^{ji} \ket b \bigr) F_{A^cR}\right]\\
		&= d_A^{-2} \tr\left[ \bigl( {\widetilde W}_{A\to BE}^i ({\widetilde W}_{A\to BE}^i)^\dagger \otimes {\widetilde W}_{A\to BE}^j ({\widetilde W}_{A\to BE}^j)^\dagger \bigr) F_B \right] \\
		&\quad\cdot \tr \left[ \Omega_{AA^cR}^{ij} \Omega_{AA^cR}^{ji} \right]\\
		&= \tr\left[\bigl( {\widetilde\rho}_{BE}^i \otimes {\widetilde\rho}_{BE}^j \bigr) F_B\right]
		\cdot \tr \left[(\Phi^+_{A_0R} \otimes \tau_{A_1})^2\right]\\
		&= \tr \left[{\widetilde\rho}_B^i {\widetilde\rho}_B^j\right] \cdot 2^{-\log M_1}.
  \end{align*}
  If we combine this result with inequalities~\eqref{eq:hoelder} and \eqref{eq:jensen swap}, we obtain the following bound:
  \begin{align}
  	&\quad\EE \left\lVert \tr_B \left[ W^i U^i J^i K^i (\Phi^+_{A_0R} \otimes \tau_{A_1}) (W^j U^j J^j K^j)^\dagger\right]\right\rVert_1 \nonumber\\
  	&\leq \sqrt{\tr\left[{\widetilde\rho}_{B}^i {\widetilde\rho}_{B}^j\right]} \cdot
  	2^{-\frac12 ( -\log M_0 + \frac12 \log M_1^i + \frac12 \log M_1^j )} \nonumber \\
  	&=: x_{ij}.\label{eq:offdiagonal bound}
  \end{align}
  We thus obtain the following bound on~\eqref{eq:deviation to bound},
  \begin{align*}
 		&\quad \EE \lVert \overline{\mathcal T}^c_{A_0A_1\to A^cEI}(\Phi^+_{A_0R} \otimes \tau_{A_1}) - \omega_{A^c E I} \otimes \tau_R \rVert_1\\
 		&\leq \frac1N \sum_{i,j} x_{ij},
  \end{align*}
  where the $x_{ij}$ are defined in~\eqref{eq:diagonal bound} and~\eqref{eq:offdiagonal bound}.
  At last, we note as in~\cite[Lemma III.3]{bjelakovic2008quantum} that $x_{ij} \leq \sqrt{x_{ii} x_{jj}}$ by the Cauchy-Schwarz inequality, therefore $x_{ij} \leq \max(x_{ii}, x_{jj}) \leq x_{ii} + x_{jj}$, and that we can thus upper bound
  \begin{align*}
    &\quad \frac1N \sum_{i,j} x_{ij}\leq   \sum_{i=1}^N x_{ii} \\
    &=  2 \sum_{i=1}^N \sqrt{\tr\left[({\widetilde\rho}^i_{A'E})^2\right]} \cdot 2^{-\frac12( - \log M_0 + \log M_1^i )}  \\
    &\leq \max_i \sqrt{\tr\left[({\widetilde\rho}^i_{A'E})^2\right]} \cdot 2^{-\frac12( - \log M_0 + \log M_1^i - 2 \log N - 2 )}.
  \end{align*}
  This holds for all choices of $\alpha^i_E$ in the states ${\widetilde\rho}^i_{A'E} = (\alpha_E^i)^{-1/4} \mathcal N^{i,c}_{A\to E}(\Phi^+_{AA'}) (\alpha_E^i)^{-1/4}$. The claimed bound thus follows from the definition of the conditional collision entropy~\eqref{eq:renyi 2}.
\end{proof}

We now derive a smoothed version of \cref{lem:average decoupling non-smooth}. Later, this will allow us to treat the asymptotic IID limit using the asymptotic equipartition property in the form of~\eqref{eq:AEP H_min}-\eqref{eq:AEP H_max}. We note that this approach is conceptually different from previous works~\cite{bjelakovic2008quantum,bjelakovic09,bjelakovic2009entanglement}.

\begin{lemma}\label{lem:average decoupling smooth}
	Let $\Pi_{A\to B} = \{\mathcal N^i_{A\to B}\}_{i=1}^N$ be a finite compound channel, $\{\rho^i_{AA'}\}_{i=1}^N$ pure states, $\{\mathcal E^i_{A_0A_1\to A}\}_{i=1}^N$ the corresponding completely positive maps defined in~\eqref{eq:IS ansatz}, $M_0$, $\{M_1^i\}_{i=1}^N$ integers such that $M_0 M_1^i \leq d_A$ for all $i$, and $\eps > 0$.
	Then, there exists a quantum operation $\mathcal D_{BB_1\to A_0}$ which depends measurably on the random unitaries $\{U_A^i\}$ such that
	\begin{align*}
	&\EE \Big\lVert (\mathcal D_{BB_1\to A_0} \circ \frac1N \sum_{i=1}^N \mathcal N^i_{A\to B} \circ \mathcal E^i_{A_0A_1\to A})(\Phi^+_{A_0R} \otimes \Phi^+_{A_1B_1})  \\
	&\quad-\Phi^+_{A_0R} \Big\rVert_1 \leq \delta + 2\sqrt{2\delta} + 2 \eps,
	\end{align*}
	where
	\begin{align*}
	\delta = 2^{-\frac12 \bigl( -\max_i \Hmax^\eps(A'|B)_{\mathcal N^i(\rho^i)} - \log M_0 + \log M_1^i - 2 \log N - 2 \bigr)}.
	\end{align*}
\end{lemma}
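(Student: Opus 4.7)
The plan is to apply the non-smooth decoupling bound of \cref{lem:average decoupling non-smooth} to a family of \emph{smoothed} CP maps whose complements already have conditional min-entropy equal to $-\Hmax^\eps(A'|B)_{\mathcal N^i(\rho^i)}$, and to pay the smoothing cost at the level of the decoded state via a triangle inequality. Fix Stinespring dilations $W^i_{A\to BE}$ of each $\mathcal N^i$ and set $\sigma^i_{A'BE}:=(I_{A'}\otimes W^i)\rho^i_{AA'}(I_{A'}\otimes W^i)^\dagger$, a normalized pure state with marginals $\sigma^i_{A'B}=\mathcal N^i(\rho^i)$ and $\sigma^i_{A'E}=\mathcal N^{i,c}(\rho^i)$. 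Combining the duality~\eqref{eq:smooth max entropy duality}, the variational definition~\eqref{eq:smooth min entropy}, and Uhlmann's theorem (which lifts a smoothing of the $A'E$ marginal to a purified-distance smoothing of the full purification), we obtain for each $i$ a pure subnormalized state $\widetilde\xi^i_{A'BE}$ with $P(\widetilde\xi^i_{A'BE},\sigma^i_{A'BE})\leq\eps$ and $\Hmin(A'|E)_{\widetilde\xi^i_{A'E}}=-\Hmax^\eps(A'|B)_{\mathcal N^i(\rho^i)}$.

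Define operators $\widetilde K^i_{A\to BE}$ in analogy with~\eqref{eq:OA definition} so that $(I_{A'}\otimes\widetilde K^i)\ket{\Phi^+_{AA'}}=\tfrac{1}{\sqrt{d_A}}\ket{\widetilde\xi^i_{A'BE}}$, and introduce the CP maps $\widetilde{\mathcal T}^i(\sigma):=\tr_E[\widetilde K^i\sigma(\widetilde K^i)^\dagger]$ with complementary maps $\widetilde{\mathcal T}^{i,c}$ obtained by tracing out $B$ instead. By construction $\widetilde{\mathcal T}^i(\Phi^+_{AA'})=\widetilde\xi^i_{A'B}$ and $\widetilde{\mathcal T}^{i,c}(\Phi^+_{AA'})=\widetilde\xi^i_{A'E}$. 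Forming the average $\widetilde{\overline{\mathcal T}}$ as in~\eqref{eq:average general} with these $\widetilde{\mathcal T}^i$ in place of the $\mathcal T^i$, applying \cref{lem:average decoupling non-smooth} together with $H_2\geq\Hmin$, yields
\begin{equation*}
\EE\bigl\lVert\widetilde{\overline{\mathcal T}}^c(\Phi^+_{A_0R}\otimes\tau_{A_1})-\widetilde\omega_{A^cEI}\otimes\tau_R\bigr\rVert_1\leq\delta
\end{equation*}
with $\delta$ exactly as stated. Viewing $\tau_{A_1}$ as $\tr_{B_1}\Phi^+_{A_1B_1}$ and invoking \cref{lem:uhlmann one is normalized} (combined with a standard measurable-selection argument for the resulting partial isometry) then produces a decoder $\mathcal D_{BB_1\to A_0}$ depending measurably on $\{U^i\}$ with
\begin{equation*}
\EE\bigl\lVert\mathcal D\circ\widetilde{\overline{\mathcal T}}(\Phi^+_{A_0R}\otimes\Phi^+_{A_1B_1})-\Phi^+_{A_0R}\bigr\rVert_1\leq\delta+2\sqrt{2\delta}.
\end{equation*}

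It remains to control the smoothing cost $\EE\bigl\lVert(\overline{\mathcal T}-\widetilde{\overline{\mathcal T}})(\Phi^+_{A_0R}\otimes\Phi^+_{A_1B_1})\bigr\rVert_1\leq 2\eps$, which I expect to be the main technical hurdle: a naive Choi-state bound would give the dimension-dependent estimate $d_A\cdot 2\eps$, too weak for the claim. To avoid this we work at the level of purifications. Writing $K^i:=W^i\sqrt{d_A}\,O_A(\rho^i)$ (so that $\mathcal T^i(\sigma)=\tr_E[K^i\sigma(K^i)^\dagger]$) and setting $\ket{\Psi^i}:=(I\otimes K^iU^iJ^i)\ket{\Phi^+_{A_0R}\otimes\Phi^+_{A_1B_1}}$, with $\ket{\widetilde\Psi^i}$ defined analogously using $\widetilde K^i$ in place of $K^i$, we have $(\mathcal T^i-\widetilde{\mathcal T}^i)(\Omega^i)=\tr_E[\Psi^i-\widetilde\Psi^i]$ for the relevant encoded input $\Omega^i$. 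A direct computation using $\EE_{U^i}[U^i\Pi^i(U^i)^\dagger]\propto I_A$ for the range projection $\Pi^i$ of $J^i$ shows $\EE_{U^i}\braket{\Psi^i|\widetilde\Psi^i}=\braket{\sigma^i|\widetilde\xi^i}$, so Jensen's inequality for $|\cdot|^2$ together with the normalization of $\sigma^i$ gives
\begin{equation*}
\EE F(\Psi^i,\widetilde\Psi^i)=\EE|\braket{\Psi^i|\widetilde\Psi^i}|^2\geq|\braket{\sigma^i|\widetilde\xi^i}|^2=F(\sigma^i,\widetilde\xi^i)\geq 1-\eps^2.
\end{equation*}
Using $F_*=F$ here (again by normalization of $\sigma^i$), the subnormalized bound~\eqref{eq:purified distance and trace norm subnormalized}, concavity of $\sqrt{\,\cdot\,}$, and partial-trace monotonicity of the trace norm yields $\EE\lVert(\mathcal T^i-\widetilde{\mathcal T}^i)(\Omega^i)\rVert_1\leq 2\eps$ for each $i$; averaging over $i$ and applying the triangle inequality twice then gives the desired $\delta+2\sqrt{2\delta}+2\eps$ bound.
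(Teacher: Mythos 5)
Your overall architecture coincides with the paper's: replace each branch by a completely positive map whose Choi state is an $\eps$-smoothing of $\mathcal N^i(\rho^i)$ and whose complement therefore has $\Hmin(A'|E) = -\Hmax^\eps(A'|B)_{\mathcal N^i(\rho^i)}$; apply \cref{lem:average decoupling non-smooth} together with $H_2 \geq \Hmin$ to get the stated $\delta$; extract the decoder via \cref{lem:uhlmann one is normalized} and Jensen; and pay the smoothing cost by a triangle inequality. You also correctly identify the smoothing cost as the delicate step, and your identity $\EE_{U^i}\braket{\Psi^i|\widetilde\Psi^i} = \braket{\sigma^i|\widetilde\xi^i}$ is right (it follows from $\EE[U^i\Pi^i (U^i)^\dagger] = \tfrac{M_0M_1^i}{d_A}I_A$ for the rank-$M_0M_1^i$ range projection $\Pi^i$ of $J^i$, together with $\bra{\Phi^+}(I\otimes X_A)\ket{\Phi^+} = \tfrac1{d_A}\tr[X_A]$), as is the consequence $\EE\, F(\Psi^i,\widetilde\Psi^i)\geq F(\sigma^i,\widetilde\xi^i)\geq 1-\eps^2$.

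The gap is in converting this average fidelity into an average trace distance. The vectors $\ket{\Psi^i}$ are \emph{not} subnormalized for every realization of $U^i$: one has $\tr[\Psi^i] = \tfrac{1}{M_0M_1^i}\tr\bigl[(U^i)^\dagger \,d_A O_A(\rho^i)^\dagger O_A(\rho^i)\, U^i\Pi^i\bigr]$, which equals $1$ only on average and exceeds $1$ whenever $U^i$ rotates the code subspace toward the large eigenvalues of $\rho^i_A$. Hence the bound~\eqref{eq:purified distance and trace norm subnormalized}, and more generally the inequality $\lVert\Psi^i-\widetilde\Psi^i\rVert_1\leq 2\sqrt{1-F(\Psi^i,\widetilde\Psi^i)}$, cannot be applied pointwise in $U^i$ (for unnormalized pure states the right-hand side can even become imaginary). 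One can patch this using the exact rank-two formula $\lVert\psi-\phi\rVert_1=\sqrt{(\tr\psi+\tr\phi)^2-4\lvert\braket{\psi|\phi}\rvert^2}$ and Cauchy--Schwarz over the Haar expectation, but that route yields $2\sqrt2\,\eps$ per term rather than the claimed $2\eps$. The paper sidesteps the issue entirely: writing the difference of Choi states as $\delta^{i,+}_{A'B}-\delta^{i,-}_{A'B}$ with $\tr[\delta^{i,+}]+\tr[\delta^{i,-}]\leq 2\eps$, it uses that the trace norm of the image of a positive input under a CP map equals its trace, which is linear and hence commutes with $\EE$, whereupon $\EE[U^i\sigma(U^i)^\dagger]=\tau_A\otimes\tr_A[\sigma]$ gives exactly $2\eps$ with no dimension factor. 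A secondary, fixable point: smoothing the $A'E$ marginal and lifting to a purification on the \emph{original} $A'BE$ requires the duality machinery to guarantee a purifying system of dimension $d_B$; smoothing directly on $A'B$ via~\eqref{eq:smooth max entropy}, as the paper does, avoids this.
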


\begin{proof}
  According to~\eqref{eq:smooth max entropy}, there exist $\widetilde\rho_{A'B}^i \in S_\leq(AB)$ such that
  \begin{equation}\label{eq:desmoothing}
  \begin{aligned}
  	&\Hmax^\eps(A'|B)_{\mathcal N^i(\rho^i)} = \Hmax(A'|B)_{\widetilde\rho^i} \quad\text{and} \\
	&P(\mathcal N^i_{A\to B}(\rho_{AA'}^i), \widetilde\rho_{A'B}^i) \leq \eps.
  \end{aligned}
  \end{equation}
  Let $\mathcal T^i_{A\to B}$ denote completely positive maps with $\mathcal T^i_{A\to B}(\Phi^+_{AA'}) = \widetilde\rho_{A'B}^i$ as their Choi-Jamiolkowski states, and define $\{ \mathcal T^{i,c}_{A\to E} \}_{i=1}^N$, $\overline{\mathcal T}_{A_0A_1\to A}$ and $\overline{\mathcal T}^c_{A\to A^cEI}$ as in~\eqref{eq:average general}--\eqref{eq:average general complement}.
  Then, we have
  \begin{equation}\label{eq:desmoothing complete}
  \begin{aligned}
    \Hmax^\eps(A'|B)_{\mathcal N^i(\rho^i)}
    &= \Hmax(A'|B)_{\mathcal T^i(\Phi^+)} \\
    &= -\Hmin(A'|E)_{\mathcal T^{i,c}(\Phi^+)}
  \end{aligned}
  \end{equation}
  by~\eqref{eq:smooth max entropy duality}. For each realization of the random unitaries $\{U_A^i\}$, \cref{lem:uhlmann one is normalized}
  implies that there exists a quantum operation $\mathcal D_{BB_1\to A_0}$ (partial isometry) such that
  \begin{align}\label{eq:uhlmann decoder bound}
  	&\left\lVert \mathcal D_{BB_1\to A_0} \circ \overline{\mathcal T}_{A_0A_1\to B}(\Phi^+_{A_0R} \otimes \Phi^+_{A_1B_1}) - \Phi^+_{A_0R} \right\rVert_1  \nonumber \\
  	&\leq \Delta + 2\sqrt{2\Delta},
  \end{align}
  where
  \begin{align*}
  \Delta := \left\lVert \overline{\mathcal T}_{A_0A_1\to A^cEI}^c(\Phi^+_{A_0R} \otimes \tau_{A_1}) - \omega_{A^cEI} \otimes \tau_R \right\rVert_1,
  \end{align*}
  with $\omega_{A^cEI}$ as defined in \cref{lem:average decoupling non-smooth}.
  In fact, $\mathcal D_{BB_1\to A_0}$ can be chosen as a measurable function of the $\{U_A^i\}$, so that it can itself be regarded as a random variable. Thus, it makes sense to bound the following expression:
  \begin{align}
  \label{eq:smoothing todo}
		&\EE \Big\lVert \bigl( \mathcal D \circ \frac1N \sum_{i=1}^N \mathcal N^i \circ \mathcal E^i \bigr)(\Phi^+_{A_0R} \otimes \Phi^+_{A_1B_1})
		- \Phi^+_{A_0R} \Big\rVert_1 \\
	\label{eq:smoothing first}
		&\leq \frac1N \sum_{i=1}^N \EE \Big\lVert \bigl( \widehat{\mathcal T}^i - \mathcal T^i \bigr)(U^i J^i (\Phi^+_{A_0R} \otimes \Phi^+_{A_1^iB_1^i}) \\
    &\qquad\qquad\qquad\qquad\qquad\qquad (J^i)^\dagger (U_A^i)^\dagger) \Big\rVert_1 \nonumber \\
	\label{eq:smoothing second}
		&+ \EE \bigl\lVert \mathcal D_{BB_1\to A_0} \circ \overline{\mathcal T}_{A_0A_1\to B}(\Phi^+_{A_0R} \otimes \Phi^+_{A_1B_1}) \\
    &\qquad - \Phi^+_{A_0R} \bigr\rVert_1 \nonumber,
  \end{align}
  where $\widehat{\mathcal T}^i_{A\to B}(\sigma_A) := \mathcal N^i_{A\to B}\left(d_A O_A\left(\rho^i\right)\sigma_AO_A\left(\rho^i\right)^\dagger\right)$. In the first inequality, we have inserted~\eqref{eq:IS ansatz}--\eqref{eq:average general} and used the triangle inequality as well as that $\mathcal D_{BB_1\to A_0}$ is trace-nonincreasing.

  To bound the averages in~\eqref{eq:smoothing first}, we now follow the smoothing ideas from~\cite{dupuis2014one}. Let $\widehat\rho_{A'B}^i := \widehat{\mathcal T}_{A\to B}^i(\Phi^+_{AA'}) = \mathcal N^i_{A\to B}(\rho_{AA'}^i)$ and write $\widehat\rho_{A'B}^i - \widetilde\rho_{A'B}^i = \delta^{i,+}_{A'B} - \delta^{i,-}_{A'B}$ as a difference of positive semidefinite operators, so that
  \begin{equation}\label{eq:delta pm bound}
  \begin{aligned}
    &\quad\tr\left[\delta^{i,+}_{A'B}\right] + \tr\left[\delta^{i,-}_{A'B}\right] = \lVert \widehat\rho_{A'B}^i - \widetilde\rho_{A'B}^i \rVert_1 \\
    &\leq 2 P(\widehat\rho_{A'B}^i, \widetilde\rho_{A'B}^i) \leq 2 \eps
  \end{aligned}
  \end{equation}
  by~\eqref{eq:purified distance and trace norm subnormalized} together with~\eqref{eq:desmoothing}.
  Let $\mathcal D^{i,\pm}_{A\to B}$ denote completely positive maps whose Choi-Jamiolkowski states are $\delta^{i,\pm}_{A'B}$, respectively. Then, we have $\widehat{\mathcal T}_{A\to B}^i - \mathcal T^i_{A\to B} = \mathcal D^{i,+}_{A\to B} - \mathcal D^{i,-}_{A\to B}$, and hence
  \begin{align*}
	&\quad\EE \left\lVert \bigl( \widehat{\mathcal T}^i_{A\to B} - \mathcal T^i_{A\to B} \bigr)(U^i J^i (\Phi^+_{A_0R} \otimes \Phi^+_{A_1^iB_1^i}) (J^i)^\dagger (U^i)^\dagger) \right\rVert_1 \\
	&\leq \EE \tr \left[\mathcal D^{i,+}_{A\to B}(U_A^i J^i (\Phi^+_{A_0R} \otimes \Phi^+_{A_1^iB_1^i}) (J^i)^\dagger (U_A^i)^\dagger)\right] \\
	&\quad+ \EE \tr\left[\mathcal D^{i,-}_{A\to B}(U_A^i J^i (\Phi^+_{A_0R} \otimes \Phi^+_{A_1^iB_1^i}) (J^i)^\dagger (U_A^i)^\dagger)\right] \\
	&\leq \tr\left[\delta_B^{i,+} \otimes \tau_{RB_1^i}\right] + \tr\left[\delta_B^{i,-} \otimes \tau_{RB_1^i}\right]\\
	&\leq 2 \eps,
  \end{align*}
  by the triangle inequality, the fact that $\EE(U_A^i \sigma_{AB} (U_A^i)^\dagger) = \tau_A \otimes \tr_A[\sigma_{AB}]$ for all $\sigma_{AB}$, and~\eqref{eq:delta pm bound}.

  To bound the average in~\eqref{eq:smoothing second}, we use~\eqref{eq:uhlmann decoder bound}, Jensen's inequality and \cref{lem:average decoupling non-smooth} to obtain
  \begin{align*}
		&\quad \EE \left\lVert \mathcal D_{BB_1\to A_0} \circ \overline{\mathcal T}_{A_0A_1\to B}(\Phi^+_{A_0R} \otimes \Phi^+_{A_1B_1}) - \Phi^+_{A_0R} \right\rVert_1 \\
		&\leq (\EE\Delta) + 2\sqrt{2\EE\Delta} \leq \delta + 2\sqrt{2\delta}
  \end{align*}
  with
  \begin{align*}
    \delta
    = 2^{-\frac12 \bigl( -\max_i \Hmax^\eps(A'|B)_{\mathcal N^i(\rho^i)} - \log M_0 + \log M_1^i - 2 \log N - 2 \bigr)},
  \end{align*}
  where we used that $H_2(A'|E)_{\mathcal T^{i,c}(\Phi^+)} \geq \Hmin(A'|E)_{\mathcal T^{i,c}(\Phi^+)}$ and~\eqref{eq:desmoothing complete}. Plugging both bounds into~\eqref{eq:smoothing todo}, we obtain the desired estimate.
\end{proof}

The decoding maps $\mathcal E^i_{A_0A_1\to A}$ as defined in~\eqref{eq:IS ansatz} are completely positive but not in general trace-preserving, and therefore not valid quantum operations. However, the following lemma, whose proof is entailed in~\cite[Theorem 3.14]{dupuis2009decoupling} and which can be deduced directly from the one-shot decoupling theorem~\cite[Theorem 3.3]{dupuis2014one}, will later allow us to replace the $\mathcal E^i$ by valid quantum operations.

\begin{lemma}\label{lem:complement unchanged}
  Let $\mathcal E^i_{A_0A_1\to A}$ be one of the completely positive maps defined in~\eqref{eq:IS ansatz} for some state $\rho_A^i$.
  Then, we have
  \begin{align*}
    &\quad\EE \left\lVert \tr_A\left[\mathcal E^i_{A_0A_1\to A}(\Phi^+_{A_0R} \otimes \Phi^+_{A_1B_1})\right] - \tau_{RB_1} \right\rVert_1 \nonumber \\
    &\leq 2^{-\frac12(\Hmin^\eps(A)_{\rho^i} - \log M_0 - \log M_1^i)} + 12 \eps
  \end{align*}
\end{lemma}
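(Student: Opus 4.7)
My plan is to apply the one-shot decoupling theorem~\cite[Theorem 3.3]{dupuis2014one} to a completely positive map with trivial output, after putting $\tr_A[\mathcal{E}^i(\cdot)]$ into the required form. First, I would decompose $A_1 \cong A_1^i \otimes (A_1^i)^c$ (and likewise $B_1$) so that $\Phi^+_{A_1 B_1}=\Phi^+_{A_1^i B_1^i}\otimes\Phi^+_{(A_1^i)^c(B_1^i)^c}$. Since $\mathcal{E}^i$ uses only the reduced state on $A_0 A_1^i$, the $(A_1^i)^c$ factor gets traced out and leaves $\tau_{(B_1^i)^c}$ as a spectator on the reference side. Introducing the pure state $\sigma_{ARB_1^i}:=J^i(\Phi^+_{A_0R}\otimes\Phi^+_{A_1^iB_1^i})(J^i)^\dagger$ on $A\otimes RB_1^i$ and the trivial-output CP map $\mathcal{T}_{A\to\mathbb{C}}(X_A):=\tr_A[d_A\, O_A(\rho^i)\,X_A\,O_A(\rho^i)^\dagger]$, a direct computation then gives
\begin{align*}
  &\tr_A[\mathcal{E}^i(\Phi^+_{A_0R}\otimes\Phi^+_{A_1B_1})]\\
  &\quad=\mathcal{T}\bigl(U^i_A\,\sigma_{ARB_1^i}\,(U^i_A)^\dagger\bigr)\otimes\tau_{(B_1^i)^c}.
\end{align*}

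Next I would identify the entropic quantities appearing in the decoupling bound. Using the identity $\sqrt{d_A}\,O_A(\rho^i)\ket{\Phi^+_{AA'}}=\ket{\rho^i_{AA'}}$ implicit in~\eqref{eq:OA definition}, one checks that the Choi state of $\mathcal{T}$ on $A'$ equals $\rho^i_{A'}$ exactly, so the ``channel-side'' smooth min-entropy in the decoupling bound equals $\Hmin^\eps(A)_{\rho^i}$. Moreover, $\mathcal{T}(\tau_A)=\tr[O_A(\rho^i)^\dagger O_A(\rho^i)]=\tr[\rho^i]=1$ and $\sigma_{RB_1^i}=\tau_{RB_1^i}$, so the ``decoupled'' target is $\tau_{RB_1^i}\otimes\tau_{(B_1^i)^c}=\tau_{RB_1}$. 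Finally, $\sigma_{ARB_1^i}$ is pure with $\sigma_A=\frac{1}{M_0M_1^i}J^i(J^i)^\dagger$ maximally mixed on the rank-$M_0M_1^i$ image of $J^i$, so pure-state duality yields $\Hmin(A|RB_1^i)_\sigma=-\log(M_0M_1^i)$ exactly, with no smoothing needed on this side.

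Applying the one-shot decoupling theorem with these ingredients then yields
\begin{align*}
  &\EE\bigl\lVert\mathcal{T}(U^i_A\sigma_{ARB_1^i}(U^i_A)^\dagger)-\tau_{RB_1^i}\bigr\rVert_1 \\
  &\quad\leq 2^{-\frac12(\Hmin^\eps(A)_{\rho^i}-\log M_0-\log M_1^i)}+12\eps,
\end{align*}
and tensoring with the spectator factor $\tau_{(B_1^i)^c}$ preserves the trace norm, giving the claimed estimate. The main point to verify carefully is the identification of $\mathcal{T}$'s Choi state with $\rho^i_{A'}$ via the $O_A$-identity, since this is what converts the abstract Choi-state entropy in the decoupling theorem into the desired input-state entropy $\Hmin^\eps(A)_{\rho^i}$; everything else is bookkeeping.
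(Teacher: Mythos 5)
Your proof is correct and follows exactly the route the paper indicates (the paper does not spell out a proof but states that the lemma "can be deduced directly from the one-shot decoupling theorem" of Dupuis et al., which is precisely what you do). Your key identifications — the Choi state of the trivial-output map being $\rho^i_{A'}$ via $\ket{\rho^i_{AA'}}=\sqrt{d_A}\,O_A(\rho^i)\ket{\Phi^+_{AA'}}$, the exact value $\Hmin(A|RB_1^i)_\sigma=-\log(M_0M_1^i)$ from purity of $\sigma$ and the full-rank partial isometry $J^i$, and the spectator factor $\tau_{(B_1^i)^c}$ — all check out and correctly reproduce the claimed bound.
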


By combining this with \cref{lem:average decoupling smooth}, we obtain our one-shot coding theorem for compound channels in the informed sender scenario.

\begin{theorem}\label{thm:IS one-shot}
  Let $\Pi_{A\to B} = \{\mathcal N^i_{A\to B}\}_{i=1}^N$ be a finite compound channel, $\{\rho^i_{AA'}\}_{i=1}^N$ pure states, and $M_0$, $\{M_1^i\}_{i=1}^N$ integers such that $M_0 M_1^i \leq d_A$ for all $i$, and $\eps \in (0,1]$.
  Then, there exist quantum operations $\mathcal E^i_{A_0A_1\to A}$ and $\mathcal D_{BB_1\to A_0}$, where $d_{A_0} = M_0$ and $d_{A_1} = d_{B_1} \geq \max_i M_1^i$, such that
  \begin{equation}\label{eq:claim of theorem}
  \begin{aligned}
  &\quad\min_i F(\Phi^+_{A_0R},\mathcal D \circ \mathcal N^i \circ \mathcal E^i(\Phi^+_{A_0R}\otimes\Phi^+_{A_1 B_1})) \\
  &\geq 1 - 8 N (N+2) \bigl(\sqrt{\delta_1} + \sqrt{\delta_2} + 6 \sqrt{\eps} \bigr),
  \end{aligned}
  \end{equation}
  where
  \begin{align*}
    \delta_1 &= \max_i 2^{-\frac12(\Hmin^\eps(A)_{\rho^i} - \log M_0 - \log M_1^i)}, \\
  	\delta_2 &= \max_i 2^{-\frac12 \bigl( -\Hmax^\eps(A'|B)_{\mathcal N^i(\rho^i)} - \log M_0 + \log M_1^i - 2 \log N - 2 \bigr)}.
  \end{align*}
\end{theorem}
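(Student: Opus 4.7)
The plan is to combine \cref{lem:average decoupling smooth} with \cref{lem:complement unchanged}, both applied to the completely positive ansatz from~\eqref{eq:IS ansatz}, and then to use the probabilistic method to realize both estimates simultaneously on a single draw of the random unitaries $\{U_A^i\}$. A final union-bound-style argument transfers the resulting error estimate on the averaged encoder-and-channel composition to each individual branch in the compound.

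Concretely, I would choose $d_{A_1} = d_{B_1}$ to be a common multiple of the $\{M_1^i\}$ so that the decomposition $A_1 \cong A_1^i \otimes (A_1^i)^c$ used in~\eqref{eq:IS ansatz} is available, and invoke \cref{lem:average decoupling smooth} with the given input states $\rho^i_{AA'}$. This produces a single decoder $\mathcal{D}_{BB_1\to A_0}$ (measurable in the $\{U_A^i\}$) together with the bound $\EE\lVert \mathcal{D}\circ\tfrac1N\sum_i \mathcal{N}^i\circ\mathcal{E}^i(\Phi^+_{A_0R}\otimes\Phi^+_{A_1B_1}) - \Phi^+_{A_0R}\rVert_1 \leq \delta_2 + 2\sqrt{2\delta_2} + 2\eps$, with $\delta_2$ precisely the quantity appearing in the theorem. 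In parallel, \cref{lem:complement unchanged} applied once for each $i$ yields $\EE\lVert \tr_A[\mathcal{E}^i(\Phi^+_{A_0R}\otimes\Phi^+_{A_1B_1})] - \tau_{RB_1}\rVert_1 \leq \delta_1 + 12\eps$, so that each encoder is on average close to trace preserving on the inputs of interest.

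With these $N+1$ nonnegative random variables each of bounded expectation, Markov's inequality combined with a union bound guarantees a single realization of $\{U_A^i\}$ on which every one of the quantities exceeds its expectation by at most a factor $N+2$. Fix such a realization. Since each $\mathcal{E}^i$ is only completely positive (the operator $O_A(\rho^i)$ in~\eqref{eq:OA definition} is not in general an isometry), I would then upgrade $\mathcal{E}^i$ to a genuine quantum operation $\widetilde{\mathcal{E}}^i$ by routing the trace defect into a fresh garbage flag on $A_0$; the deterministic complement-unchanged estimate controls how much $\widetilde{\mathcal{E}}^i$ deviates from $\mathcal{E}^i$ on the uniform input $\Phi^+_{A_0R}\otimes\Phi^+_{A_1B_1}$, so the decoupling conclusion with the fixed decoder $\mathcal D$ is preserved up to the same order.

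The final steps are clerical: converting trace distance into fidelity via~\eqref{eq:fidelity lower bound one is normalized} costs a factor of two, and linearity of fidelity in the pure state $\Phi^+_{A_0R}$ promotes the averaged-branch estimate to $\min_i F(\Phi^+_{A_0R}, \mathcal{D}\circ\mathcal{N}^i\circ\widetilde{\mathcal{E}}^i(\cdot)) \geq 1 - N\eta$. Collecting the derandomization factor $N+2$, the averaging factor $N$, the fidelity-conversion factor $2$, and the simplification $\delta+2\sqrt{2\delta}\leq 4\sqrt\delta$ for $\delta\leq 1$ reproduces the advertised prefactor $8N(N+2)$ and the combined $\sqrt{\delta_1}+\sqrt{\delta_2}+6\sqrt\eps$ error term. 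The main obstacle I anticipate is the clean CP-to-TP conversion, because the perturbation it introduces must be bounded using only the already-derandomized complement-unchanged estimate (and not a fresh expectation over $\{U_A^i\}$, which would undo the derandomization) while remaining compatible with the same universal decoder $\mathcal D$.
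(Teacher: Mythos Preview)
Your overall architecture matches the paper's proof almost exactly: combine \cref{lem:average decoupling smooth} and \cref{lem:complement unchanged}, derandomize via Markov's inequality with $k = N+2$ together with a union bound over the $N+1$ events, replace the CP ansatz encoders by genuine quantum operations, apply the triangle inequality, convert to fidelity with~\eqref{eq:fidelity lower bound one is normalized}, and finally pass from the averaged branch to the worst branch at the cost of a factor $N$. The bookkeeping you sketch for the prefactor $8N(N+2)$ and the error term $\sqrt{\delta_1}+\sqrt{\delta_2}+6\sqrt\eps$ is also correct.

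The one place where you diverge from the paper is precisely the step you flag as the obstacle, and your proposed mechanism there does not quite work. The ansatz $\mathcal E^i$ from~\eqref{eq:IS ansatz} is conjugation by the single operator $\sqrt{d_A}\,O_A(\rho^i)U^i J^i$; since $d_A\,O_A(\rho^i)^\dagger O_A(\rho^i)$ has trace $d_A$ but not in general operator norm at most one, $\mathcal E^i$ is typically \emph{not} trace-nonincreasing, so there is no nonnegative ``trace defect'' to route into a garbage flag. What the paper does instead is exploit that $\mathcal E^i$ has a single Kraus operator, so that $\mathcal E^i(\Phi^+_{A_0R}\otimes\Phi^+_{A_1B_1})$ is a (possibly unnormalized) \emph{pure} state on $ARB_1$ whose marginal on $RB_1$ is, by the derandomized instance of \cref{lem:complement unchanged}, within $(N+2)(\delta_1+12\eps)$ of $\tau_{RB_1}$ in trace norm. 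The latter has the obvious purification $\Phi^+_{A_0R}\otimes\Phi^+_{A_1B_1}$, and \cref{lem:uhlmann one is normalized} then supplies a partial isometry $V^i_{A_0A_1\to A}$ --- this conjugation is the bona fide quantum operation $\widetilde{\mathcal E}^i$ --- with
\[
\bigl\lVert(\widetilde{\mathcal E}^i - \mathcal E^i)(\Phi^+_{A_0R}\otimes\Phi^+_{A_1B_1})\bigr\rVert_1 \leq (N+2)(\delta_1+12\eps) + 2\sqrt{2(N+2)(\delta_1+12\eps)}.
\]
From this point on your remaining steps go through verbatim.
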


\begin{proof}
  We choose $M_1$ as the least common multiple of the $M_1^i$. By \cref{lem:average decoupling smooth,lem:complement unchanged}, Markov's inequality $\Prob(Z > k \EE(Z)) \leq 1/k$ (for $k=N+2$), and the union bound, there exist unitaries $\{U_A^i\}_{i=1}^N$ and a quantum operation $\mathcal D_{BB_1\to A_0}$ such that
  \begin{align*}
    &\quad \left\lVert \tr_A\left[\mathcal E^i_{A_0A_1\to A}(\Phi^+_{A_0R} \otimes \Phi^+_{A_1B_1})\right] - \tau_{RB_1} \right\rVert_1 \\
  	&\leq (N+2) (\delta_1 + 12\eps) \quad (\forall i=1,\dots,N)
  \end{align*}
  with
  \begin{align*}
		&\Big\lVert (\mathcal D_{BB_1\to A_0} \circ \frac1N \sum_{i=1}^N \mathcal N^i_{A\to B} \circ \mathcal E^i_{A_0A_1\to A})(\Phi^+_{A_0R} \otimes \Phi^+_{A_1B_1})  \nonumber \\
		&-\Phi^+_{A_0R} \Big\rVert_1
		\leq (N+2)(\delta_2 + 2\sqrt{2\delta_2} + 2 \eps).
  \end{align*}
  As a consequence of the first bound and \cref{lem:uhlmann one is normalized}, we can find quantum operations $\{ \widetilde{\mathcal E}^i_{A_0A_1\to A} \}_{i=1}^N$ such that
  \begin{align*}
  &\quad\left\lVert (\widetilde{\mathcal E}^i_{A_0A_1\to A} - \mathcal E^i_{A_0A_1\to A})(\Phi^+_{A_0R} \otimes \Phi^+_{A_1B_1}) \right\rVert_1 \nonumber \\
   & \leq (N+2) (\delta_1 + 12\eps) + 2 \sqrt{2 (N+2) (\delta_1 + 12\eps)}.
  \end{align*}
  Now, using the triangle inequality as well as the fact that $\mathcal D$ and the $\mathcal N^i$ are completely positive and trace-nonincreasing, we obtain that
   \begin{align*}
  	&\quad\Big\lVert \frac1N \sum_{i=1}^N (\mathcal D \circ \mathcal N^i \circ \widetilde{\mathcal E}^i)(\Phi^+_{A_0R} \otimes \Phi^+_{A_1B_1}) - \Phi^+_{A_0R}\Big\rVert_1 \\
  	&\leq \frac1N \sum_{i=1}^N \left\lVert (\widetilde{\mathcal E}^i_{A_0A_1\to A} - \mathcal E^i_{A_0A_1\to A})(\Phi^+_{A_0R} \otimes \Phi^+_{A_1B_1}) \right\rVert_1 \\
  	&\quad+ \Big\lVert (\mathcal D \circ \frac1N \sum_{i=1}^N \mathcal N^i \circ \mathcal E^i)(\Phi^+_{A_0R} \otimes \Phi^+_{A_1B_1})- \Phi^+_{A_0R} \Big\rVert_1 \\
  	&\leq(N+2) (\delta_1 + 12\eps) + 2 \sqrt{2 (N+2) (\delta_1 + 12\eps)} \\
	&\quad + (N+2) (\delta_2 + 2\sqrt{2\delta_2} + 2 \eps) \\
	&\leq 4 (N+2) (\sqrt{\delta_1} + \sqrt{\delta_2} + 6\sqrt{\eps}).
	\end{align*}
  To arrive at the last inequality, we have assumed that $\delta_{1,2} \leq 1$ (without loss of generality, since the bound~\eqref{eq:claim of theorem} is otherwise vacuous). At last, we use~\eqref{eq:fidelity lower bound one is normalized} to turn this into a lower bound on the average entanglement fidelity:
  \begin{align*}
  &\quad F(\Phi^+_{A_0R}, (\frac1N \sum_{i=1}^N \mathcal D \circ \mathcal N^i \circ \widetilde{\mathcal E}^i)(\Phi^+_{A_0R} \otimes \Phi^+_{A_1B_1})) \\
  &\geq 1 - 8 (N+2) \bigl(\sqrt{\delta_1} + \sqrt{\delta_2} + 6 \sqrt{\eps} \bigr).
  \end{align*}
  Using the same argument that we used to derive~\eqref{eq:entanglement fidelity union bound}, this implies that
  \begin{align*}
  &\quad\min_i F(\Phi^+_{A_0R}, (\mathcal D \circ \mathcal N^i \circ \widetilde{\mathcal E}^i)(\Phi^+_{A_0R} \otimes \Phi^+_{A_1B_1})) \\
  &\geq 1 - 8 N (N+2) \bigl(\sqrt{\delta_1} + \sqrt{\delta_2} + 6 \sqrt{\eps} \bigr). \qedhere
  \end{align*}
\end{proof}

\subsection{Achievability}

Given the one-shot coding theorem phrased in terms of smooth entropies (\cref{thm:IS one-shot}), we can now prove the direct part of \cref{thm:IS} in a similar fashion as for \cref{thm:uninformed}:

\begin{lemma}\label{lem:IS direct}
	Let $\Pi_{A \rightarrow B}  = \{ \mathcal N^i_{A\rightarrow B} \}_{i \in I}$ be a compound channel with arbitrary index set $I$.
	Then, we have
	\begin{align*}
		Q_{E,IS}(\Pi_{A \rightarrow B}) \geq \inf_{i \in I} Q_E(\mathcal N^i_{A\to B}).
	\end{align*}
\end{lemma}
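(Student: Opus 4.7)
The plan is to mirror the proof of \cref{lem:uninformed direct}, replacing \cref{thm:uninformed one-shot} by its informed-sender counterpart \cref{thm:IS one-shot} and crucially exploiting the fact that the encoder (and therefore the input state) may depend on the channel index. Fix $\Delta > 0$ and set $R = \inf_{i \in I} Q_E(\mathcal N^i_{A\to B}) - \Delta$; if $R \leq 0$ there is nothing to show. First reduce to a finite discretization via \cite[Lemma V.2]{bjelakovic2008quantum} with $\nu = 1/n^2$, yielding $\widetilde\Pi = \{\widetilde{\mathcal N}^j\}_{j=1}^N$ with $N \leq (3n)^{4 d_{AB}^2}$, such that every $\mathcal N^i \in \Pi$ has some $\widetilde{\mathcal N}^j \in \widetilde\Pi$ with $\lVert (\mathcal N^i)^{\otimes n} - (\widetilde{\mathcal N}^j)^{\otimes n}\rVert_\diamond \leq 1/n$. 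The continuity bound \eqref{eq:fannes mutual info} then guarantees $R \leq \min_j Q_E(\widetilde{\mathcal N}^j) - \Delta/2$ for $n$ sufficiently large.

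For the discretized compound, for each $j = 1, \ldots, N$ pick a pure input $\rho^j_{AA'}$ that (nearly) attains $Q_E(\widetilde{\mathcal N}^j) = \tfrac12 I(A':B)_{\widetilde{\mathcal N}^j(\rho^j)}$. Choose $M_0 = \lceil 2^{nR}\rceil$ and $\log M_1^j = \lfloor n ( H(A)_{\rho^j} - R - \Delta/4 )\rfloor$, and apply \cref{thm:IS one-shot} to $\widetilde\Pi^{\otimes n}$ with product inputs $(\rho^j_{AA'})^{\otimes n}$ and smoothing parameter $\eps = 1/(nN)^4$. Using the identity $-H(A'|B)_{\mathcal N(\rho)} = I(A':B)_{\mathcal N(\rho)} - H(A)_\rho$ valid for pure $\rho_{AA'}$, the AEP bounds \eqref{eq:AEP H_min}--\eqref{eq:AEP H_max} with \eqref{eq:AEP convergence} make both exponents inside $\delta_1$ and $\delta_2$ at least $n\Delta/8$ for large $n$. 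Because $N$ grows only polynomially in $n$, the $2\log N + 2$ correction inside the $\delta_2$ exponent and the prefactor $N(N+2)$ in front of the one-shot bound are harmless, so $\delta_1,\delta_2 \to 0$ exponentially fast.

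Finally, I would transfer back to the original compound: when the informed sender sees $\mathcal N^i$, it picks an index $j(i)$ with $\lVert (\mathcal N^i)^{\otimes n} - (\widetilde{\mathcal N}^{j(i)})^{\otimes n}\rVert_\diamond \leq 1/n$ and applies the corresponding encoder $\mathcal E^{j(i)}$. The loss of fidelity incurred by this substitution is bounded by $2/\sqrt n$ via \eqref{eq:fidelity channel continuity}, which vanishes. Thus $(R,n,\delta)$ is achievable for every $\delta > 0$ once $n$ is large, and letting $\Delta \to 0$ establishes the claim.

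The main obstacle I anticipate is the simultaneous choice of the $M_1^j$: one must arrange enough slack in both the min-entropy constraint (controlling $\delta_1$) and the max-entropy constraint (controlling $\delta_2$) to absorb the $2\log N + 2$ penalty in $\delta_2$ together with the $O(\sqrt n)$ AEP error. This is precisely where the informed-sender freedom becomes essential --- a channel-independent $M_1$ would force a common input, defeating the interchange of $\inf$ and $\sup$ --- and it is also why the discretization must be kept polynomial in $n$, so that $N(N+2)$ and $\log N$ do not overwhelm the exponential decay coming from the AEP.
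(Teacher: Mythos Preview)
Your proposal is correct and follows essentially the same route as the paper's proof: discretize via \cite[Lemma V.2]{bjelakovic2008quantum} with $\nu=1/n^2$, pick channel-dependent optimal inputs $\rho^j$, apply \cref{thm:IS one-shot} with $M_0=\lceil 2^{nR}\rceil$ and channel-dependent $M_1^j$ chosen to leave $\Theta(n\Delta)$ slack in both smooth-entropy exponents, set $\eps=1/(nN)^4$, invoke the AEP, and finally transfer back by setting $\mathcal E^i:=\widetilde{\mathcal E}^{j(i)}$ and using \eqref{eq:fidelity channel continuity}. The only cosmetic differences are your choice of $\Delta/4$ (the paper uses $\Delta/2$) in $M_1^j$ and that you do not explicitly verify the hypothesis $M_0 M_1^j\leq d_{A^n}$ of \cref{thm:IS one-shot}, which is immediate from your choice once $n$ is large.
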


\begin{proof}
  Let $\Delta > 0$ and
  \begin{align*}
  R = \inf_{i \in I} Q_E(\mathcal N^i_{A\to B}) - \Delta.
  \end{align*}
  We will show that for any $\delta > 0$ there exists $n_0$ such that, for all $n \geq n_0$, the triple $(R,n,\delta)$ is achievable for $\Pi_{A\to B}$.
  If $R \leq 0$ then there is nothing to show, thus we may assume that $R > 0$.

  As in the proof of \cref{lem:uninformed direct}, we will use the discretization result~\cite[Lemma V.2]{bjelakovic2008quantum} to reduce to finite compound channels.
  It shows that for any $n$ there exists a finite compound channel $\widetilde\Pi_{A\to B} = \{ \widetilde{\mathcal N}^j_{A\to B} \}_{j=1}^N$ of cardinality
  $N \leq (3 n)^{4 d_{AB}^2}$
  with the property that for any $\mathcal N^i \in \Pi$ there exists some $\widetilde{\mathcal N}^{j} \in \widetilde\Pi$ such that
  $\lVert (\mathcal N^i)^{\otimes k} - (\widetilde{\mathcal N}^{j})^{\otimes k} \rVert_\diamond \leq k/n^2$ for all $k$, and vice versa. In particular, for $k=1$ this bound together with~\eqref{eq:fannes mutual info} implies that
  \begin{align*}
  &\quad\bigl| \inf_{i \in I} Q_E(\mathcal N^i) - \min_{j=1,\dots,N} Q_E(\widetilde{\mathcal N}^j) \bigr| \nonumber \\
  &\leq \frac 1 {n^2} \log\min\{d_A,d_B\} + 2\left(1+\frac{1}{2n^2}\right)h\biggl( \frac 1 {1+2n^2}\biggr).
  \end{align*}
  As a consequence,
  \begin{align}\label{eq:R vs discretization IS}
    R \leq \min_{j=1,\dots,N} Q_E(\widetilde{\mathcal N}^j) - \frac\Delta2
  \end{align}
  for $n$ sufficiently large (depending only on $d_{AB}$ and $\Delta$). Let us assume that this is the case.

  Let $\{ \rho_{AA'}^j \}$ be pure states such that $Q_E(\widetilde{\mathcal N}^j) = \frac12 I(A':B)_{\widetilde{\mathcal N}^j(\rho^j)}$ for all $j=1,\dots,N$.
  We now apply our one-shot coding result, \cref{thm:IS one-shot}, to $\widetilde\Pi^{\otimes n}_{A\to B} = \{ (\widetilde{\mathcal N}^j_{A\to B})^{\otimes n} \}_{j=1}^N$, $\{ (\rho_{AA'}^j)^{\otimes n} \}_{j=1}^N$, $M_0 = \lceil 2^{nR} \rceil$, $M_1^j = \lceil 2^{n(H(A)_{\rho^j} - R - \Delta/2)} \rceil$.
  We note that
  \begin{align*}
  	M_0 M_1^j&\leq 2^{nR + 1} 2^{n(H(A)_{\rho^j} - R - \Delta/2) + 1} \nonumber \\
	&= 2^{n(H(A)_{\rho^j} - \Delta/2 + 2/n)} \leq d_{A^n}
  \end{align*}
  for $n$ sufficiently large (depending only on $R$ and $\Delta$). Thus, the assumption on the integers $M_0$, $\{M_1^j\}$ is satisfied. For all $\eps \in (0,1]$, we obtain encoders $\widetilde{\mathcal E}_{A_0A_1\to A^n}^j$ and a decoder $\mathcal D_{B^nB_1\to A_0}$ such that
  \begin{equation}\label{eq:IS F discretized}
  \begin{aligned}
    &\quad\min_j F(\Phi^+_{A_0R}, \mathcal D \circ (\widetilde{\mathcal N}^j)^{\otimes n} \circ \widetilde{\mathcal E}^j(\Phi^+_{A_0R}\otimes\Phi^+_{A_1 B_1})) \\
     &\geq 1 - 8 N (N+2) \bigl(\sqrt{\delta_1} + \sqrt{\delta_2} + 6 \sqrt{\eps} \bigr),
  \end{aligned}
  \end{equation}
  where
  \begin{align*}
    \delta_1 &= \max_j 2^{-\frac12(\Hmin^\eps(A)_{(\rho^j)^{\otimes n}} - \log M_0 - \log M_1^j)}, \\
  	\delta_2 &= \max_j 2^{-\frac12 \bigl( -\Hmax^\eps(A'|B)_{(\widetilde{\mathcal N}^j(\rho^j))^{\otimes n}} - \log M_0 + \log M_1^j - 2 \log N - 2 \bigr)}\!.
  \end{align*}
  We now choose $\eps = 1/(n N)^4$.
  Since $\eps$ decays only polynomially with $n$, the asymptotic equipartition property \eqref{eq:AEP H_min} together with~\eqref{eq:AEP convergence} and the estimate~\eqref{eq:R vs discretization IS} implies that
  \begin{align*}
    \delta_1
    \leq 2^{-\frac n 2 \bigl( \frac \Delta 2 - \frac 2 n - \frac {\delta(\eps,\rho^j)} {\sqrt n} \bigr)}
    \leq 2^{-n \frac \Delta 8}
    \leq \eps
  \end{align*}
  for sufficiently large $n$ (depending only $d_{AB}$, $\Delta$ and $R$).
  Likewise, using \eqref{eq:AEP H_max} instead of \eqref{eq:AEP H_min} we obtain that
  \begin{align*}
    \delta_2
    \leq 2 N \cdot 2^{-\frac n 2 \bigl(\frac \Delta 2 - \frac 1 n - \frac {\delta(\eps,\widetilde{\mathcal N}^j(\rho^j))} {\sqrt n} \bigr)}
    \leq 2 N \cdot 2^{-n \frac \Delta 8}
    \leq \eps,
  \end{align*}
  where we use that $N$ grows only polynomially with $n$. By inserting the two bounds into~\eqref{eq:IS F discretized}, we obtain that
  \begin{align*}
    &\quad F(\Phi^+_{A_0R}, \mathcal D \circ (\widetilde{\mathcal N}^j)^{\otimes n} \circ \widetilde{\mathcal E}^j(\Phi^+_{A_0R}\otimes\Phi^+_{A_1 B_1})) \nonumber \\
    &\geq 1 - 64 N (N+2) \sqrt{\eps}
    \geq 1 - \frac {192} {n^2}.
  \end{align*}
  At last, we relate this to the entanglement fidelity for the original compound channel. For this, recall that for any $\mathcal N^i \in \Pi$ there exists some $\widetilde{\mathcal N}^{j} \in \widetilde\Pi$ such that $\lVert (\mathcal N^i)^{\otimes n} - (\widetilde{\mathcal N}^{j})^{\otimes n} \rVert_\diamond \leq 1/n$.
  If we choose the encoders correspondingly as $\mathcal E^i := \widetilde{\mathcal E}^j$ then we find using~\eqref{eq:fidelity channel continuity} that
  \begin{align*}
    &\quad\inf_{i \in I} F(\Phi^+_{A_0R},\mathcal D \circ (\mathcal N^i)^{\otimes n} \circ \mathcal E^i(\Phi^+_{A_0R}\otimes\Phi^+_{A_1 B_1})) \nonumber \\
    &\geq 1 - \frac {192} {n^2} - \frac 2 {\sqrt n}
  \end{align*}
  We conclude that, for any $\delta > 0$ and sufficiently large $n$, $(R,\delta,n)$ is a valid triple for the compound channel $\Pi_{A\to B}$.
\end{proof}

\subsection{Converse}

Since a code for the compound quantum channel $\Pi_{A\to B} = \{ \mathcal N^i_{A\to B} \}$ by definition gives rise to codes for each of its constituent channels $\mathcal N^i_{A\to B}$, it is immediate that $Q_{E,IS}(\Pi) \leq Q_E(\mathcal N^i)$ for all $i \in I$. Thus, we immediately obtain the converse bound in \cref{thm:IS}.

\begin{lemma}\label{lem:IS converse}
Let $\Pi_{A \rightarrow B}  = \{ \mathcal N^i_{A\rightarrow B} \}_{i \in I}$ be a compound channel with arbitrary index set $I$.
Then, we have
\begin{align*}
Q_{E,IS}(\Pi_{A \rightarrow B}) \leq \inf_{i\in I}Q_E(\mathcal{N}^i_{A\to B}).
\end{align*}
\end{lemma}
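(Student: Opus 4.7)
The plan is to exploit the basic monotonicity of achievability: any code that works for the entire compound must in particular work for each channel $\mathcal N^i$ in the compound, viewed as a one-element sub-compound. First I would unfold the definition of $Q_{E,IS}(\Pi_{A\to B})$ in terms of achievable triples $(R,n,\delta)$ for the informed-sender scenario, i.e.\ quadruples $(M_0, M_1, \{\mathcal E^i\}_{i \in I}, \mathcal D)$ satisfying the fidelity bound uniformly over $i \in I$. The key observation is that taking the infimum over $i$ of the fidelity is stronger than any single-$i$ fidelity, so fixing $i \in I$ and using the pair $(\mathcal E^i, \mathcal D)$ yields a valid entanglement-assisted quantum transmission code for the single channel $\mathcal N^i_{A\to B}$ achieving the same rate with the same error tolerance.

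Next, I would invoke the converse direction of the single-channel entanglement-assisted capacity theorem of Bennett, Shor, Smolin, and Thapliyal~\cite{bennett02}, which asserts that the entanglement-assisted capacity of $\mathcal N^i_{A\to B}$ equals $\tfrac12 \sup_\rho I(A':B)_{\mathcal N^i(\rho)}$. Applying this to each $i \in I$ yields $R \leq Q_E(\mathcal N^i_{A\to B})$ for every $i \in I$ and every achievable rate $R$ for $Q_{E,IS}(\Pi_{A\to B})$. Taking the infimum over $i$ on the right-hand side and then the supremum over achievable $R$ on the left-hand side establishes
\begin{align*}
Q_{E,IS}(\Pi_{A \rightarrow B}) \leq \inf_{i\in I} Q_E(\mathcal{N}^i_{A\to B}),
\end{align*}
as claimed. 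One should be a bit careful about the order of the $\liminf_{n\to\infty}$ and $\lim_{\delta \to 0}$ in the capacity definition~\eqref{eq:capacity}, but since the bound $R \leq Q_E(\mathcal N^i)$ holds for every fixed $n$ and $\delta$ and the right-hand side is independent of both, the order is irrelevant.

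There is no real obstacle here: the statement is essentially a tautology that expresses the basic fact that a universal code cannot outperform the capacity of the worst individual channel it is required to work on. This is in sharp contrast to the achievability direction (\cref{lem:IS direct}), where the difficulty is to construct a decoder that is agnostic to the channel while still matching each individual optimal performance.
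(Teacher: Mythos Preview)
Your proposal is correct and follows essentially the same approach as the paper. The paper's argument is even more terse: it simply observes in the paragraph preceding the lemma that any informed-sender code for the compound gives rise to codes for each constituent channel, so $Q_{E,IS}(\Pi) \leq Q_E(\mathcal N^i)$ for all $i$, and then states the lemma without a separate proof.
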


\section{Feedback assistance}

It is well-known that feedback does not increase the entanglement-assisted quantum capacity of a quantum channel~\cite{bowen02}. Since any feedback-assisted code for the compound channel gives rise to feedback-assisted codes for each of its constituent channels, we obtain just as in the preceding section the converse bound in \cref{thm:feedback}. In fact, this holds for arbitrary compound channels:

\begin{lemma}\label{feedback:converse}
Let $\Pi_{A \rightarrow B}  = \{ \mathcal N^i_{A\rightarrow B} \}_{i \in I}$ be a compound channel with arbitrary index set. Then, we have
\begin{align*}
Q_{E,F}(\Pi_{A \rightarrow B}) \leq \inf_{i\in I}Q_E(\mathcal{N}^i_{A\to B}).
\end{align*}
\end{lemma}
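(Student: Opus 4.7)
The plan is to reduce the feedback-assisted converse for the compound channel to the feedback-assisted converse for a single channel, which is already known from~\cite{bowen02}. The key observation is that the definition of an achievable triple $(R,n,\delta)$ for $\Pi_{A\to B}$ in the feedback setting requires the fidelity lower bound (the analogue of~\eqref{eq:coding} extended to the $n$-round feedback protocol) to hold \emph{uniformly} in $i \in I$ via the $\inf_{i\in I}$. Hence, any feedback-assisted code for $\Pi_{A\to B}$ realizing rate $R$ at error at most $\delta$ is simultaneously a feedback-assisted code realizing rate $R$ at error at most $\delta$ for each individual channel $\mathcal N^i_{A\to B}$ in the compound. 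No modification of the protocol is needed; the same encoders, decoders, and feedback messages simply fulfill the stronger per-channel condition automatically.

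Passing to the limits in~\eqref{eq:capacity}, this monotonicity immediately yields $Q_{E,F}(\Pi_{A\to B}) \leq Q_{E,F}(\mathcal N^i_{A\to B})$ for every $i \in I$. I would then invoke the single-channel feedback converse of~\cite{bowen02}, which asserts that feedback does not increase the entanglement-assisted quantum capacity, and conclude $Q_{E,F}(\mathcal N^i_{A\to B}) = Q_E(\mathcal N^i_{A\to B})$ for each $i$. Taking the infimum over $i \in I$ yields the claimed inequality.

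I do not anticipate a substantive obstacle: the argument is a direct monotonicity-and-reduction that requires no new technical machinery (no decoupling, no smooth-entropy manipulations, no discretization). The one point worth verifying carefully is that the feedback-assisted compound code really does induce a valid feedback-assisted single-channel code for each $i$, i.e.\ that the classical/quantum feedback registers $X_A^{(k)}, X_B^{(k)}$ and their correlations are preserved under fixing $i$; this is immediate from the definition because the choice of $i$ affects only which channel acts in each round and not the feedback mechanism itself. In particular, unlike the achievability direction of \cref{thm:feedback}, this argument works for arbitrary (possibly infinite) index sets $I$, so only the direct part genuinely requires the restriction $|I|<\infty$.
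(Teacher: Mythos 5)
Your proposal is correct and matches the paper's argument exactly: restrict the compound feedback code to each constituent channel to get $Q_{E,F}(\Pi)\leq Q_{E,F}(\mathcal N^i)$, then invoke the single-channel feedback converse of Bowen to replace $Q_{E,F}(\mathcal N^i)$ by $Q_E(\mathcal N^i)$ and take the infimum. Your observation that this holds for arbitrary index sets is also made in the paper.
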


Next we show that the upper bound in \cref{feedback:converse} is also achievable, at least for finite compound channels (establishing \cref{thm:feedback}):

\begin{lemma}\label{feedback:feedback direct}
Let $\Pi_{A \rightarrow B}  = \{ \mathcal N^i_{A\rightarrow B} \}_{i \in I}$ be a compound channel with finite index set $\lvert I \rvert < \infty$. Then, we have
\begin{align*}
Q_{E,F}(\Pi_{A\to B}) \geq \inf_{i\in I}Q_E(\mathcal{N}^i_{A\to B}).
\end{align*}
\end{lemma}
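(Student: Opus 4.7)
The plan is to achieve this by a two-phase protocol that combines a short channel-estimation phase with the informed-sender construction of \cref{lem:IS direct}, following the strategy of~\cite{1751-8121-40-28-S20}. In view of \cref{feedback:converse}, it is enough to prove achievability. Without loss of generality I assume that the channels $\{\mathcal{N}^i\}_{i\in I}$ are pairwise distinct; coinciding channels can be identified without affecting $\inf_{i\in I}Q_E(\mathcal{N}^i)$ or the IS protocol (whose encoder can be chosen identical on duplicate indices).

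Phase one (estimation). Reserve the first $m:=\lceil\sqrt{n}\rceil$ channel uses as probes. For each probe use the sender feeds the $A$-side of a fresh maximally entangled state $\Phi^+_{AA'}$ into the channel, drawing on the free entanglement assistance, so that the receiver obtains $m$ copies of the Choi--Jamiolkowski state $J^i:=(\mathcal{N}^i_{A\to B}\otimes\mathrm{id}_{A'})(\Phi^+_{AA'})$. Since distinct channels have distinct Choi states and $|I|<\infty$, the multi-hypothesis quantum Chernoff bound provides a POVM on $(J^i)^{\otimes m}$ that identifies $i$ with error
$$\eps_{\mathrm{est}} \leq (|I|-1)\cdot 2^{-c\sqrt{n}}$$
for some $c>0$ depending only on the pairwise distinguishability of the $J^i$.

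Phase two (feedback and informed-sender coding). The receiver sends the estimate $\hat{\imath}$ to the sender through the free classical feedback (at most $\lceil\log|I|\rceil$ bits, borne by teleportation/superdense coding using the free entanglement). For the remaining $n-m$ channel uses both parties run the informed-sender protocol from \cref{lem:IS direct} tied to $\hat{\imath}$: by that lemma, for every $\Delta>0$ and all sufficiently large $n$ there exist encoders $\{\mathcal{E}^i\}_{i\in I}$ and a decoder $\mathcal{D}$ achieving, conditional on $\hat{\imath}=i$, entanglement fidelity at least $1-\delta'(n)$ with $\delta'(n)\to 0$ at a rate $R'\geq\inf_{i\in I}Q_E(\mathcal{N}^i)-\Delta$ per remaining channel use.

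Assembling the two phases: by the union bound, the overall infidelity is at most $\eps_{\mathrm{est}}+\delta'(n)\to 0$, while the overall rate is
$$\frac{n-m}{n}R' \;\longrightarrow\; \inf_{i\in I}Q_E(\mathcal{N}^i)-\Delta,$$
since $m/n=O(1/\sqrt{n})$. Letting $\Delta\downarrow 0$ yields the claim. The delicate step—and the reason for restricting to finite $|I|$—is the estimation bound: the exponent $c$ depends on $\min_{i\neq j}\|J^i-J^j\|_1$, so one cannot push $|I|\to\infty$ while keeping a $\sqrt{n}$-budget. Combining this with the net-based discretization used in \cref{lem:uninformed direct} requires a joint trade-off between net fineness and estimation accuracy that is not known to work out in general, which is precisely the open problem flagged in the introduction.
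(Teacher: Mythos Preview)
Your proposal is correct and follows the same two-phase strategy as the paper: spend $O(\sqrt{n})$ channel uses to estimate the channel index, send it back via feedback, then run the informed-sender code of \cref{lem:IS direct} on the remaining uses. The only substantive difference lies in the estimation step. The paper invokes the ready-made discrimination result \cref{lem:channel_estimation} (from~\cite{1751-8121-40-28-S20,bjelakovic2008quantum}), which supplies a product input state $\omega_{A^{mL}}$ and projectors $P^i$ with success probability $(1-|I|f^m)^{|I|-1}$, and does not require entanglement or distinct channels. You instead exploit the free entanglement assistance to feed in halves of maximally entangled states, obtain i.i.d.\ copies of the Choi state $J^i$, and invoke a multi-hypothesis quantum Chernoff bound---after the harmless WLOG reduction to pairwise-distinct channels. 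Both routes give an estimation error that decays exponentially in $\sqrt{n}$, so the asymptotic fidelity and rate analysis is identical. Your closing remark on why the argument does not extend to infinite $I$ (the exponent $c$ depends on the minimal pairwise gap) matches the paper's stated open problem.
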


The proof is a generalization of the original proof for classical channels~\cite{wolfowitz78} and based on the following quantum channel estimation technique from~\cite{1751-8121-40-28-S20}:

\begin{proposition}[{\cite[Theorem 4.2]{bjelakovic2008quantum}}]\label{lem:channel_estimation}
Let $\Pi_{A\to B}=\left\{\mathcal{N}_{A\to B}^i\right\}_{i=1}^{N}$ be a finite compound channel and set $L=\binom{N}{2}$. Then, there exists $f\in(0,1)$ such that for each $m\in\mathbb{N}$ there are mutually orthogonal projectors $\left\{P^i_{B^{mL}}\right\}_{i=1}^N$ on $B^{\otimes(mL)}$ with $\sum_{i=1}^NP^i_{B^{mL}}=I_{B^{mL}}$, as well as a pure state $\omega_{A^{mL}}$ on $A^{\otimes(mL)}$ with the property that
\begin{align*}
\tr\left[P^i_{B^{mL}}\left(\mathcal{N}^i_{A\to B}\right)^{\otimes(mL)}\left(\omega_{A^{mL}}\right)\right]\geq\left(1-Nf^m\right)^{N-1}
\end{align*}
for all $i=1,\dots,N$.
\end{proposition}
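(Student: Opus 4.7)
The plan is to realize the decision projectors via a pairwise quantum hypothesis-testing \emph{tournament}, using one block of $m$ channel uses per unordered pair $\{i,j\}$. Since the total number of pairs is $L=\binom{N}{2}$, this exactly exhausts the promised $mL$ channel uses. After reducing to the case of pairwise distinct channels (otherwise identical channels cannot be separated and the index set collapses), I would first fix, for every pair $i<j$, a pure input state $\phi^{ij}_A$ with $\mathcal{N}^i_{A\to B}(\phi^{ij}_A)\neq\mathcal{N}^j_{A\to B}(\phi^{ij}_A)$; such a pure state exists because distinct CP maps must disagree on some pure input. I would then take the global input to be the product state
\begin{equation*}
\omega_{A^{mL}}:=\bigotimes_{\{i,j\}:\,i<j}\bigl(\phi^{ij}_A\bigr)^{\otimes m},
\end{equation*}
which is manifestly pure on $A^{\otimes mL}$ after identifying $B^{\otimes mL}\cong\bigotimes_{\{i,j\}}B^{\otimes m}$.

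For each pair $\{i,j\}$, let $\{T^{\{i,j\}}_i,T^{\{i,j\}}_j\}$ be the Holevo--Helstrom binary test on $B^{\otimes m}$ discriminating $(\mathcal{N}^i(\phi^{ij}))^{\otimes m}$ from $(\mathcal{N}^j(\phi^{ij}))^{\otimes m}$. These are orthogonal projectors with $T^{\{i,j\}}_i+T^{\{i,j\}}_j=I_{B^{\otimes m}}$, and the quantum Chernoff bound yields a uniform exponent $f:=\max_{i<j}Q(\mathcal{N}^i(\phi^{ij}),\mathcal{N}^j(\phi^{ij}))<1$ such that both type-I and type-II errors of every pairwise test are bounded by $f^m$. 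Finiteness of $I$ is essential here to extract a single $f<1$ from the finitely many Chernoff exponents.

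Next I would build a tournament projector for each candidate index: for $k=1,\dots,N$ set
\begin{equation*}
\tilde P^k:=\Bigl(\bigotimes_{j\neq k}T^{\{j,k\}}_k\Bigr)\otimes\Bigl(\bigotimes_{\{i,j\}:\,k\notin\{i,j\}}I_{B^{\otimes m}}\Bigr),
\end{equation*}
so that $\tilde P^k$ accepts exactly when $k$ ``wins'' every pairwise test in which it participates. For $k\neq l$, the operators $\tilde P^k$ and $\tilde P^l$ contain orthogonal Helstrom projectors on the $\{k,l\}$-block, so $\tilde P^k\tilde P^l=0$ automatically. To obtain a resolution of the identity, I would set $P^k:=\tilde P^k$ for $k=1,\dots,N-1$ and $P^N:=I-\sum_{k<N}\tilde P^k$. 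Mutual orthogonality makes $P^N$ an orthogonal projector, and one checks that $\tilde P^N\leq P^N$ because $\tilde P^N$ is orthogonal to every $\tilde P^k$ with $k<N$.

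Finally, since both $\omega$ and each $\tilde P^k$ factorize across the $L$ blocks,
\begin{equation*}
\tr\bigl[P^k(\mathcal{N}^k)^{\otimes mL}(\omega)\bigr]\geq\tr\bigl[\tilde P^k(\mathcal{N}^k)^{\otimes mL}(\omega)\bigr]=\prod_{j\neq k}\tr\bigl[T^{\{j,k\}}_k(\mathcal{N}^k(\phi^{jk}))^{\otimes m}\bigr]\geq(1-f^m)^{N-1},
\end{equation*}
and the stated bound $(1-Nf^m)^{N-1}$ follows since $Nf^m\geq f^m$. I expect the main obstacle is not any single estimate but the discipline of organizing the construction so that all three requirements hold simultaneously: that $\omega$ remains a pure product state (forcing us to use non-entangled distinguishing inputs on each block and invoke linearity of the channel to argue they exist), that the pairwise tests be chosen as genuine projectors (forcing us to use Helstrom rather than a general POVM), and that the Gram--Schmidt-style definition of $P^N$ preserve rather than weaken the acceptance bound on the last channel (which requires the range inclusion $\tilde P^N\leq P^N$, not just orthogonality).
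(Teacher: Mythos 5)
The paper does not actually prove \cref{lem:channel_estimation}: it is imported as a black box from \cite{bjelakovic2008quantum}, so there is no in-paper argument to compare against. That said, your tournament construction is, in substance, exactly the mechanism behind the cited result and behind the choice $L=\binom{N}{2}$: one block of $m$ uses per unordered pair, a Helstrom projector pair on each block whose type-I and type-II errors are at most $f^m$ by the quantum Chernoff bound (with finiteness of the compound used to extract a single $f<1$), tensor-product ``win all your matches'' projectors $\tilde P^k$ that are mutually orthogonal for free, completion to a resolution of the identity by absorbing the remainder into $P^N$, and factorization of the acceptance probability into $N-1$ per-block successes. Each of these steps checks out, including the inclusion $\tilde P^N\leq P^N$ and the existence of a pure distinguishing input $\phi^{ij}$ for each pair of distinct channels (linearity plus the fact that pure states span the state space). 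Your explicit reduction to pairwise distinct channels is also warranted: if two channels of the compound coincide, the asserted bound cannot hold for large $m$ (the two acceptance probabilities for the same output state would have to sum to more than one), so some such convention is implicit in the statement.

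The one loose point is the final comparison ``$(1-f^m)^{N-1}\geq(1-Nf^m)^{N-1}$ since $Nf^m\geq f^m$.'' This implication is valid when $1-Nf^m\geq 0$, and trivially when $N-1$ is odd (the right-hand side is then nonpositive), but for small $m$ with $N-1$ even and $Nf^m>1$ it can fail. This is really an artifact of the algebraic form of the quoted bound rather than a defect of your construction: what you actually establish is the cleaner guarantee $(1-f^m)^{N-1}$, which implies the quoted bound in the regime where that bound is meaningful, and in particular for all sufficiently large $m$ --- which is all that the paper uses, since \cref{lem:channel_estimation} is invoked in the proof of \cref{feedback:feedback direct} with $m=\lfloor\sqrt n\rfloor\to\infty$. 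If you want a literally quantified ``for each $m\in\mathbb N$'' statement, keep your bound $(1-f^m)^{N-1}$ (or weaken it to $1-(N-1)f^m$), rather than massaging it into the form $(1-Nf^m)^{N-1}$.
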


For the IID compound $\Pi^{\otimes n}_{A\to B}=\left\{\left(\mathcal{N}_{A\to B}^i\right)^{\otimes n}\right\}_{i\in I}$ we use the first $\sqrt{n}$ channel uses to estimate the channel on the receiver's side with the help of \cref{lem:channel_estimation}. Then, we use feedback to transfer the estimated channel index $i\in I$ to the sender. This allows to use the informed sender protocol for the remaining $n-\sqrt{n}$ channel uses and leads to the same capacity as in the informed sender case.
We formalize this strategy in the following proof, which is inspired by~\cite[Lemma 4.3]{bjelakovic2008quantum}:

\begin{proof}[Proof of \cref{feedback:feedback direct}]
We give a protocol for the IID average channel
\begin{align*}
\overline{\Pi}_{A^{n}\to B^{n}}:=\frac{1}{|I|}\sum_{i\in I}\left(\mathcal N^i_{A\to B}\right)^{\otimes n},
\end{align*}
which will work equally well for the IID compound channel (up to a constant factor of $|I|$ in fidelity, cf.~\eqref{eq:entanglement fidelity union bound}). Take $n=mL+t$ with $L:=\binom{|I|}{2}$ and $m\in\mathbb{N}$. We use the first $mL$ instances for channel estimation and the subsequent $t=n-mL$ instances for the actual entanglement transmission, making use of the informed sender protocol as described in \cref{thm:IS one-shot}.

The encoder $\tilde{\mathcal{E}}_{A^{mL}}$ for the channel estimation inputs $\omega_{A^{mL}}$ from \cref{lem:channel_estimation} to the $(mL)$-fold channel. The decoder $\tilde{\mathcal{D}}_{B^{mL}}$ for the channel estimation measures the channel's output as in \cref{lem:channel_estimation},
\begin{align*}
\tilde{\mathcal{D}}_{B^{mL}}(\cdot):=\sum_{i\in I}\tilde{\mathcal{D}}^i_{B^{mL}}(\cdot)\ket{i}\!\!\bra{i}_{X_{B}}
\end{align*}
with $\tilde{\mathcal{D}}^i_{B^{mL}}(\cdot):=\tr\left[P^i_{B^{mL}}(\cdot)\right]$,
and sends the classical outcome $i\in I$ back to the sender, where it is then labeled by the system $X_A$.
Given this information $i\in I$ we use for the entanglement transmission the informed encoder $\mathcal{E}^i_{A_0A_1\to A^{t}}$ for $\left(\mathcal N^i_{A\to B}\right)^{\otimes t}$, i.e.,
\begin{align*}
\mathcal{E}_{A_0A_1X_A\to A^{t}}(\cdot):=\sum_{i\in I}\mathcal{E}^i_{A_0A_1\to A^{t}}(\cdot)\otimes \braket{i | \cdot | i}_{X_A},
\end{align*}
and the universal decoder $\mathcal{D}_{B^{t}B_1\to A_0}$ from \cref{thm:IS one-shot}.\footnote{Since both the sender and the receiver know the classical outcome $i\in I$ we could alternatively simply use an encoder and decoder obtained by coding theorems for fixed channels (see, e.g., \cite[Theorem 3.14, cf.~Theorem 3.15]{dupuis2009decoupling}).} The total fidelity of the protocol can then be bounded as
\begin{align}
&\quad\sum_{i\in I}F\Big(\Phi^+_{A_0R},\bigl( (\tilde{\mathcal{D}}^i_{B^{mL}}\otimes\mathcal{D}_{B^{t}B_1\to A_0}) \circ \bar{\Pi}_{A^{mL+t}\to B^{mL+t}}  \nonumber \\
&\qquad\qquad\qquad \circ (\tilde{\mathcal{E}}_{A^{mL}}\otimes\mathcal{E}^i_{A_0A_1\to A^{t}})\bigr) (\Phi^+_{A_0R}\otimes\Phi^+_{A_1B_1})\Big)\notag\\
&\geq \frac{1}{|I|}\sum_{i\in I}F\Big(\Phi^+_{A_0R},\bigl( (\tilde{\mathcal{D}}^i_{B^{mL}}\otimes\mathcal{D}_{B^{t}B_1\to A_0}) \circ (\mathcal{N}^{i}_{A\to B})^{\otimes(mL+t)}\nonumber \\
&\qquad\qquad\qquad \circ(\tilde{\mathcal{E}}_{A^{mL}}\otimes\mathcal{E}^i_{A_0A_1\to A^{t}}) \bigr) (\Phi^+_{A_0R}\otimes\Phi^+_{A_1B_1})\Big)\notag\\
&=\frac{1}{|I|}\sum_{i\in I}F\Big(\Phi^+_{A_0R}, \bigl(\mathcal{D}_{B^{t}B_1\to A_0}\circ(\mathcal{N}^{i}_{A\to B})^{\otimes t} \notag\\
&\qquad\qquad\qquad \circ\mathcal{E}^i_{A_0A_1\to A^{t}}\bigr) (\Phi^+_{A_0R}\otimes\Phi^+_{A_1B_1})\Big)\notag\\
&\qquad\quad \cdot \tr\left[P^i_{B^{mL}}\left(\mathcal{N}^i_{A\to B}\right)^{\otimes(mL)}\left(\omega_{A^{mL}}\right)\right]\notag\\
&\geq(1-|I|f^m)^{|I|-1}(1-\delta),\label{eq:error_feedback}
\end{align}
for $f\in(0,1)$ as in \cref{lem:channel_estimation}. In the last inequality we have assumed that the informed encoders $\left\{\mathcal{E}^i_{A_0A_1\to A^{t}}\right\}_{i\in I}$ together with the universal decoder $\mathcal{D}_{B^{t}B_1\to A_0}$ have a fidelity of at least $1-\delta$ for $\delta>0$ (cf.~\cref{thm:IS one-shot}). Now, we choose $m=\lfloor\sqrt{n}\rfloor$ and $t=n-\lfloor\sqrt{n}\rfloor L$. For $n\to\infty$ the total error in~\eqref{eq:error_feedback} then tends to $1-\delta$. Moreover, we find by \cref{lem:IS direct} that for $n\to\infty$ we can transmit entanglement at any rate $R\leq\inf_{i\in I}Q_E(\mathcal{N}^i_{A\to B})$:
\begin{align*}
\frac{1}{n}\log M_0\geq\frac{t\cdot R}{n}=\frac{n\cdot R-\lfloor\sqrt{n}\rfloor LR}{n}
\to R.
\end{align*}
Since $\delta>0$ was arbitrary, this concludes the proof.
\end{proof}

\section{Conclusion}\label{sec:conclusion}

In this article, we have determined the entanglement-assisted quantum capacity of compound quantum channels for various setups. In particular, we have provided closed formulas for the entanglement-assisted capacity for uninformed users, for an informed receiver, for an informed sender, and in the presence of free feedback from the receiver to the sender. All our findings are in complete analogy to the case of classical compound channels and hence strengthen the interpretation of entanglement-assisted communication as the most tractable generalization of classical information theory to the quantum setting.

Our proofs are based on one-shot decoupling theorems, properties of smooth entropies, and make use of some previously developed tools for analyzing compound quantum channels~\cite{boche09,hayashi09b,Boche13,1751-8121-40-28-S20,Datta2008,bjelakovic2008quantum,bjelakovic09,bjelakovic2009entanglement}. We believe that our approach may also lead to an improved understanding of capacities of compound quantum channels in general. As an illustration we present in \cref{app:plain IS} a simplified argument for studying the plain quantum capacity of an arbitrary compound quantum channel with an informed sender (cf.~the original works~\cite{bjelakovic2008quantum,bjelakovic09,bjelakovic2009entanglement}).

We end with a discussion of a few open questions. In the feedback-assisted scenario, we were only able to determine the capacity of finite compound quantum channels. It is not known how to extend this to arbitrary compounds and the corresponding solution for classical compounds might serve as a good starting point~\cite{Shrader09}. Moreover, it would be interesting for all the setups discussed in our work to optimize the amount of entanglement assistance that is needed, and with that to characterize the whole rate region $(M_0,M_1)$. Finally, a variant of compound quantum channels known as arbitrarily varying quantum channels (AVC) have been studied in the literature (see, e.g., \cite{ahlswede10,Ahlswede2012,Boche13,noetzel13}). Here, for a fixed set of channels $\left\{\mathcal{N}^i\right\}_{i\in I}$ the goal is to find protocols for information transmission that work reliable for all channels of the form
\begin{align*}
\mathcal{N}^{(i_1,\ldots,i_n)}:=\otimes_{j=1}^n\mathcal{N}^{i_j}\quad\text{in the limit $n\to\infty$.}
\end{align*}
The resulting entanglement-assisted arbitrarily varying quantum capacity is not known, but by analogy with the classical results~\cite{blackwell60,stiglitz66,ahlswede69} one naturally conjectures the following closed formula:
\begin{align}\label{eq:avqc}
Q_{E}\left(\left\{\mathcal{N}^i\right\}_{i\in I}\right)=\inf_{\mathcal{N}\in\text{conv}\left(\left\{\mathcal{N}^i\right\}_{i\in I}\right)}Q_{E}\left(\mathcal{N}\right)
\end{align}
We expect the methods developed in this article to be useful for attacking the conjecture.

\smallskip

\emph{Note added:} After completion of our work the conjecture in~\eqref{eq:avqc} was proven in~\cite{boche16paper}.


\section*{Acknowledgments}

We thank David Ding, Patrick Hayden, Joseph Renes, and Volkher Scholz for helpful discussions, and Mark Wilde for pointing out  the tight continuity estimate~\eqref{eq:example_continuity} (see also~\cite{wildebook13}).


\appendix

\section{Proofs of technical lemmas}\label{app:smooth entropies}

\begin{proof}[Proof of \cref{lem:uhlmann one is normalized}]
  Let $\rho'_{AB} := \rho_{AB} / \tr[\rho_{AB}]$ denote the normalized version of $\rho_{AB}$. Then, we have
  \begin{align*}
  &\quad \lVert \rho'_A - \rho_A \rVert_1 = \lVert \rho'_{AB} - \rho_{AB} \rVert_1 = \lvert 1 - \tr[\rho_A] \rvert  \nonumber \\
  &= \lvert \tr [\rho_A - \sigma_A] \rvert \leq \lVert \rho_A - \sigma_A \rVert_1 \leq \delta,
  \end{align*}
  and thus $\lVert \rho'_A - \sigma_A \rVert_1 \leq 2 \delta$.
  From the first inequality in~\eqref{eq:fuchs-van de graaf}, we obtain that $F(\rho'_A, \sigma_A) \geq (1 - \delta)^2$.
  Thus, by Uhlmann's theorem~\eqref{eq:uhlmann fred} there exist partial isometries $V_{B\to C}$ and $W_{C\to B}$ such that
  \begin{align*}
  &\quad F(V_{B\to C} \rho'_{AB} V_{B\to C}^\dagger, \sigma_{AC}) = F(\rho'_{AB}, W_{C\to B} \sigma_{AC} W_{C\to B}^\dagger)\\
  &\geq (1 - \delta)^2 \geq 1 - 2\delta.
  \end{align*}
  From the second inequality in~\eqref{eq:fuchs-van de graaf}, we thus obtain that
  $\lVert V_{B\to C} \rho'_{AB} V_{B\to C}^\dagger - \sigma_{AC} \rVert_1 \leq 2 \sqrt{2 \delta}$
  and hence
  \begin{align*}
    &\quad \lVert V_{B\to C} \rho_{AB} V_{B\to C}^\dagger - \sigma_{AC} \rVert_1 \nonumber \\
    &\leq \lVert \rho'_{AB} - \rho_{AB} \rVert_1 + 2\sqrt{2\delta}
    \leq \delta + 2\sqrt{2\delta}
  \end{align*}
  Similarly, $\lVert \rho'_{AB} - W_{C\to B} \sigma_{AC} W_{C\to B}^\dagger \rVert_1 \leq 2 \sqrt{2 \delta}$ and thus
  \begin{equation*}
  \lVert \rho_{AB} - W_{C\to B} \sigma_{AC} W_{C\to B}^\dagger \rVert_1 \leq \delta + 2 \sqrt2\delta. \qedhere
  \end{equation*}
\end{proof}

\medskip

\begin{proof}[Proof of \cref{lem:min entropy mixture}]
  According to \eqref{eq:min entropy sdp}--\eqref{eq:smooth min entropy}, we can find sub-normalized states $\widetilde\rho^i_{AB} \in \mathcal S_\leq(AB)$ as well as $\sigma_B^i \in \mathcal P(B)$ such that $P(\rho^i, \widetilde\rho^i) \leq \eps$, $\widetilde\rho_{AB}^i \leq I_A \otimes \sigma_B^i$, and $2^{-\Hmin^\eps(A|B)_{\rho^i}} = \tr\left[\sigma_B^i\right]$ for all $i=1,\dots,N$. Now, we define the sub-normalized state $\widetilde\rho_{AB} := \sum_{i=1}^N p_i \widetilde\rho_{AB}^i$. The joint quasi-convexity of the purified distance~\cite[(3.60)]{tomamichel2015quantum} implies that $P(\widetilde\rho_{AB}, \rho_{AB}) \leq \eps$, and so~\eqref{eq:smooth min entropy} shows that
  \begin{align*}
    \Hmin^\eps(A|B)_\rho \geq \Hmin(A|B)_{\widetilde\rho}.
  \end{align*}
  In order to lower-bound the right-hand side, we consider $\sigma_B := \sum_{i=1}^N p_i \sigma_B^i$.
  Clearly, $\widetilde\rho_{AB} = \sum_{i=1}^N p_i \widetilde\rho_{AB}^i \leq \sum_{i=1}^N p_i I_A \otimes \sigma_{B}^i = I_A \otimes \sigma_B$. Thus, $\sigma_B$ is a feasible point for the optimization in~\eqref{eq:min entropy sdp}, and so
	\begin{align*}
		&\quad \Hmin(A|B)_{\widetilde\rho} \geq -\log \tr[\sigma_B] = -\log \left( \sum_{i=1}^N p_i \tr\left[\sigma_B^i\right] \right)\\
		&\geq \min_{i} -\log \tr\left[\sigma_B^i\right]= \min_{i} \Hmin^\eps(A|B)_{\rho^i},
	\end{align*}
	using the quasi-concavity of $x \mapsto -\log x$.
\end{proof}

\medskip

\begin{proof}[Proof of \cref{lem:max entropy mixture}]
  For all $i=1,\dots,N$, let $\ket{\rho_{ABC}^i}$ a purification of $\rho_{AB}^i$.
  Then, we have that
  \begin{align*}
    \ket{\rho_{ABCI_1I_2}} := \frac1{\sqrt N} \sum_{i=1}^N \ket{\rho_{ABC}^i} \otimes \ket{i_{I_1}}\ket{i_{I_2}}
  \end{align*}
  is a purification of $\rho_{AB}$. We note that both $\rho_{ABI_1}$ and $\rho_{ACI_2}$ are classical on $I_1$ and $I_2$, respectively. Thus, we can apply~\cite[Lemma 6.8]{tomamichel2015quantum}, which, together with~\eqref{eq:smooth max entropy duality} to switch from max- to min-entropy, shows that
  \begin{align*}
    \Hmax^\eps(A|B)_\rho &\leq \Hmax^\eps(A|BI_1)_\rho + \log N \nonumber \\
    &=-\Hmin^\eps(A|CI_2)_\rho + \log N \\
    &\leq-\big(\Hmin^\eps(A|C)_\rho - \log N\big) + \log N\\
    &=-\Hmin^\eps(A|C)_\rho + 2 \log N.
  \end{align*}
  Now, observe that $\rho_{AC} = \frac1N \sum_i \rho_{AC}^i$. Thus, \cref{lem:min entropy mixture} can be applied, and we obtain
  \begin{align*}
    &\quad -\Hmin^\eps(A|C)_\rho \leq -\min_i \Hmin^\eps(A|C)_{\rho^i} \nonumber \\
    &= \max_{i} -\Hmin^\eps(A|C)_{\rho^i} = \max_{i} \Hmax^\eps(A|B)_{\rho^i}
  \end{align*}
  by another application of~\eqref{eq:smooth max entropy duality}.
\end{proof}

\section{Quantum capacity of compound quantum channels with informed sender}\label{app:plain IS}

\begin{lemma}\label{thm:plain one-shot}
Let $\Pi_{A\to B} = \{\mathcal N^i_{A\to B}\}_{i=1}^N$ be a finite compound channel, $\{\rho^i_{AA'}\}_{i=1}^N$ pure states, $M_0$ an integer such that $M_0 \leq d_A$, and $\eps \in (0,1]$. Then, there exist quantum operations $\mathcal E^i_{A_0\to A}$ and $\mathcal D_{B\to A_0}$ with $d_{A_0} = M_0$, such that
  \begin{align*}
  &\quad \min_i F(\Phi^+_{A_0R},\mathcal D \circ \mathcal N^i \circ \mathcal E^i(\Phi^+_{A_0R})) \nonumber \\
  &\geq 1 - 16 N (N+2) \bigl(\sqrt{\delta} + 6 \sqrt{\eps} \bigr),
  \end{align*}
  where
  \begin{align*}
    	\delta &= \max_i 2^{-\frac12 \bigl( -\Hmax^\eps(A'|B)_{\mathcal N^i(\rho^i)} - \log M_0  - 2 \log N \bigr)}.
  \end{align*}
\end{lemma}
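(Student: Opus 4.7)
The plan is to mirror the proof of \cref{thm:IS one-shot}, with the auxiliary systems $A_1^i$ taken to be trivial throughout (so effectively $M_1^i = 1$ everywhere). Concretely, I would define completely positive encoders via the ansatz \eqref{eq:IS ansatz}, i.e.\ $\mathcal E^i_{A_0\to A}(\sigma) = d_A O_A(\rho^i) U^i_A J^i_{A_0\to A} \sigma (J^i)^\dagger (U^i_A)^\dagger O_A(\rho^i)^\dagger$, with $U^i_A$ Haar-random on $A$ and $J^i_{A_0\to A}$ a fixed partial isometry, and analyze the average map $\overline{\mathcal T}_{A_0\to B} = \frac1N \sum_i \mathcal N^i \circ \mathcal E^i$. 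The four-step strategy of \cref{thm:IS one-shot} (decoupling, Uhlmann, complement estimate, Markov and union bound) then applies essentially verbatim.

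In more detail, I would first invoke \cref{lem:average decoupling smooth} with $M_1^i=1$ to obtain a unitary-dependent decoder $\mathcal D_{B\to A_0}$ satisfying $\EE \lVert \mathcal D \circ \overline{\mathcal T}(\Phi^+_{A_0R}) - \Phi^+_{A_0R}\rVert_1 \leq \delta + 2\sqrt{2\delta} + 2\eps$ with $\delta$ as in the statement of the theorem. To then replace the CP maps $\mathcal E^i$ by honest quantum operations I would apply \cref{lem:complement unchanged} with $M_1^i=1$, which bounds $\EE \lVert \tr_A[\mathcal E^i(\Phi^+_{A_0R})] - \tau_R \rVert_1$ in terms of $\Hmin^\eps(A)_{\rho^i} - \log M_0$. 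Markov's inequality with parameter $N+2$ combined with the union bound over $i=1,\dots,N$ then yields a single realization of the $U^i_A$ for which all $N+1$ expected bounds hold up to a factor of $N+2$. Finally, \cref{lem:uhlmann one is normalized} produces quantum operations $\widetilde{\mathcal E}^i$ close to the CP maps $\mathcal E^i$, and the union-over-$i$ estimate \eqref{eq:entanglement fidelity union bound} converts average-over-$i$ to worst-case fidelity at the cost of another factor of $N$; assembling these via \eqref{eq:fidelity lower bound one is normalized} yields the claim.

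The main obstacle is that \cref{lem:complement unchanged} naturally introduces a summand of the form $\sqrt{\delta_1}$ with $\delta_1 = 2^{-\frac12(\Hmin^\eps(A)_{\rho^i} - \log M_0)}$, which does not appear in the statement of \cref{thm:plain one-shot}. The key observation is that $\delta_1$ can be absorbed into $\delta$: letting $\sigma^i_{A'BE}$ denote the pure state obtained from a Stinespring dilation of $\mathcal N^i$ applied to $\ket{\rho^i_{AA'}}$, one has $\Hmin^\eps(A)_{\rho^i} = \Hmin^\eps(A')_{\rho^i} = \Hmin^\eps(A')_{\sigma^i} \geq \Hmin^\eps(A'|E)_{\sigma^i} = -\Hmax^\eps(A'|B)_{\mathcal N^i(\rho^i)}$, using that $\rho^i_A$ and $\rho^i_{A'}$ share the same spectrum, that the channel acts trivially on $A'$, monotonicity of the smooth min-entropy under conditioning on an additional system, and the duality \eqref{eq:smooth max entropy duality}. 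Consequently $\sqrt{\delta_1} \leq \sqrt{\delta}$ up to a polynomially small factor in $N$, so that $\sqrt{\delta_1} + \sqrt{\delta}$ is bounded by a constant multiple of $\sqrt{\delta}$, accounting for the coefficient $16$ in place of the $8$ from \cref{thm:IS one-shot}. This absorption step is the only genuine subtlety; everything else is a direct transcription of the \cref{thm:IS one-shot} argument to the setting $M_1^i = 1$.
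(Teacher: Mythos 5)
Your proposal is correct, but it reaches the result by a genuinely different route than the paper. You specialize the entanglement-assisted informed-sender machinery of \cref{sec:informed_sender} --- \cref{lem:average decoupling smooth,lem:complement unchanged} with $N$ independent Haar unitaries $U_A^i$ and trivial assistance $M_1^i=1$ --- whereas the paper's \cref{app:plain IS} deliberately avoids those compound decoupling lemmas: it packages the compound into a single index-extended channel $\mathcal T_{AI\to B}(\cdot)=\sum_i \mathcal N^i[\braket{i|\cdot|i}]$ with input $\rho_{AI}=\frac1N\sum_i\rho^i_A\otimes\ket i\!\!\bra i$, draws one Haar unitary $U_{AI}$ on the enlarged system, applies the standard single-channel decoupling theorem of Dupuis et al.\ twice, and reads off the informed encoders as the branches $\sqrt N\bra i U_{AI}\ket 0$. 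Both arguments then rely on the same two ingredients you identified: \cref{lem:max entropy mixture} to control the max-entropy of the mixture (source of the $2\log N$), and the absorption $\Hmin^\eps(A)_{\rho^i}=\Hmin^\eps(A')_{\rho^i}\geq\Hmin^\eps(A'|E)_{\mathcal N^{i,c}(\rho^i)}=-\Hmax^\eps(A'|B)_{\mathcal N^i(\rho^i)}$, which the paper uses in exactly the same way to fold the ``$\delta_1$-type'' term from the complement/normalization estimate into $\delta$ --- so you correctly isolated the one genuine subtlety. What each approach buys: yours is uniform (one set of tools covers both the assisted and plain cases), while the paper's is lighter for this particular lemma, demonstrating that the plain-capacity result needs only the pre-existing single-channel decoupling theorem rather than the new average-channel decoupling bound; this is precisely the ``simplified argument'' the appendix advertises. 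The only caveat is constant bookkeeping: invoking \cref{thm:IS one-shot} verbatim gives a prefactor like $8(1+\sqrt2)N(N+2)$ rather than $16N(N+2)$ because of the $-2$ inside the exponent of \cref{lem:average decoupling smooth}; since nothing downstream depends on the precise constant and the intermediate estimates have slack, this is immaterial, but if you want the stated constant you should redo the final assembly rather than quote the theorem's conclusion.
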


By standard arguments this one-shot coding result can be lifted to an asymptotic IID capacity formula in terms of the regularized coherent information (as first derived and shown optimal for compound quantum channels in~\cite{bjelakovic2008quantum,bjelakovic09,bjelakovic2009entanglement}).

\begin{proof}[Proof of \cref{thm:plain one-shot}]
Denote by $W^i_{ A\rightarrow BE}$ a Stinespring dilation of $\mathcal N^i_{ A\to B}$ and by $\mathcal N^{i, c}_{A\to E}$ the corresponding complementary channel. We define the new channel
\begin{align*}
  \mathcal T_{AI\rightarrow B}(\cdot):= \sum_{i=1}^N \mathcal N^i_{ A\rightarrow B}[\braket{i|\cdot|i}],
\end{align*}
and input $\rho_{AI} := \frac1N \sum_i \rho^i_{A} \otimes \ket i\!\!\bra i$ with purification $\ket{\rho_{AIA'I'}} := \frac1{\sqrt N} \sum_i \ket{\rho^i_{AA'}} \otimes \ket{ii}$. We find $\mathcal T_{AI\rightarrow B}[\rho_{AI}]=\frac1N \sum_i \mathcal N^i_{A\rightarrow B}[\rho^i_{A}]$. We obtain a Stinespring dilation of the channel $\mathcal T_{AI\rightarrow B}$ by
\begin{align*}
 W_{AI \rightarrow BEI}(\cdot):=\sum_i \ket i\!\!\bra i \otimes W^i_{A\rightarrow BE}(\cdot).
\end{align*}
Moreover, we define the completely positive map
\begin{align*}
  &\mathcal E_{A_0I \to AI}(\sigma_{A_0I}) := d_{AI} O_{AI}(\rho) U_{AI} J_{A_0\to A} \\
  &\qquad \sigma_{A_0I} (J_{A_0\to A})^\dagger (U_{AI})^\dagger (O_{AI}(\rho))^\dagger,
\end{align*}
where the $J_{A_0\to A}$ are isometries, the $U_{AI}$ denote unitaries that will later be chosen at random, and $O_{AI}(\cdot)$ is defined as in~\eqref{eq:OA definition}. The main idea is to apply the one-shot decoupling lemma from~\cite[Theorem 3.1]{dupuis2014one},
\begin{align*}
&\quad\EE\Big\lVert \tr_B \Big[W_{AI\to BEI}  \mathcal E_{A_0I\to AI} (\Phi^+_{A_0R} \otimes \ket 0\!\!\bra 0_I ) W_{AI\to BEI}^\dagger \Big]  \nonumber \\
&\quad - \omega_{EI} \otimes \tau_R \Big\rVert_1\notag\\
&\leq2^{-\frac12 \Big(\Hmin^\eps(AI|R)_\sigma - \Hmax^\eps(A'I'|B)_{\mathcal T (\rho)}\Big)}+12\eps,
\end{align*}
where $\omega_{EI}:= \tr_B[W_{AI \rightarrow BEI} \rho_{AI} W_{AI \rightarrow BEI}^\dagger]$ and $\sigma_{AIR}:=J_{A_0\rightarrow A} (\Phi^+_{A_0R} \otimes \ket0\!\!\bra0_I) J_{A_0\rightarrow A}^\dagger$. This can be simplified using the fact that $\Hmin^\eps(AI|R)_\sigma = -\log M_0$. Next, we write $\eta_{A'I'B}:= \frac1N \sum_i \mathcal N_{A\to B}^i(\rho^i_{AA'}) \otimes \ket i\!\!\bra i$ and apply \cref{lem:max entropy mixture} to carry out the following sequence of inequalities:
\begin{align*}
 -\Hmax^\eps(A'I'|B)_\eta&\geq -\Hmax^\eps(A'|B)_\eta \\
 &\geq -\max_i \Hmax^\eps(A'|B)_{\mathcal N^i(\rho^i)} - 2 \log N.
\end{align*}
This implies that
\begin{align*}
&\EE\Big\lVert \tr_B \Big[W_{AI\to BEI}  \mathcal E_{A_0I\to AI} (\Phi^+_{A_0R} \otimes \ket 0\!\!\bra 0_I ) W_{AI\to BEI}^\dagger \Big]  \nonumber \\
&\quad - \omega_{EI} \otimes \tau_R \Big\rVert_1\leq \delta + 12\eps.
\end{align*}
Now \cref{lem:uhlmann one is normalized} implies that there exists a quantum operation $\mathcal D_{B\to A_0}$ (partial isometry) such that
\begin{align}\label{eq:plain decouple 2}
&\EE\Big\lVert \mathcal D_{B\to A_0} \circ \mathcal T_{AI\rightarrow B} \circ \mathcal E_{A_0I \to AI} \Big(\Phi^+_{A_0R} \otimes \ket 0\!\!\bra 0_I \Big) -  \Phi^+_{A_0R} \Big\rVert_1 \nonumber \\
&\leq \delta+12\eps + 2\sqrt{2(\delta+12\eps)}.
\end{align}
On the other hand, we have
\begin{align*}
  &\quad \mathcal T_{AI\rightarrow B}\circ \mathcal E_{A_0I \to AI}(\Phi^+_{A_0R} \otimes \ket0\!\!\bra0_I)\\
  &= \frac1N \sum_i \mathcal N^i_{A\rightarrow B} \circ \widetilde{\mathcal E}^i_{A_0 \to A} (\Phi^+_{A_0R}),
\end{align*}
where we abbreviated
\begin{align*}
&\widetilde {\mathcal E}^i_{A_0\to A}(\sigma_{A_0}):= d_A O_A\left(\rho^i\right) \sqrt{N} \bra{i}U_{AI}\ket{0}  J_{A_0\to A} \sigma_{A_0} \\
&\qquad(J_{A_0\to A})^\dagger \sqrt{N} \bra{0}U_{AI}^\dagger\ket{i}  O_A\left(\rho^i\right)^\dagger.
\end{align*}
Now, the one-shot decoupling lemma from~\cite[Theorem 3.1]{dupuis2014one} gives 
\begin{align*}
&\quad\EE\Big\lVert \tr_A \Big[\widetilde{\mathcal E}^i_{A_0\to A}(\Phi^+_{A_0R})\Big]  - \tau_R \Big\rVert_1 \\
&\leq2^{-\frac12\Big(\Hmin^\eps(A'I')_{\rho^i} - \log M_0 \Big)} +12\eps,
\end{align*}
where we have introduced $\rho^i_{A'I'} = \rho^i_{A'}\otimes \ket i\!\!\bra i_{I'}$. We now note that
\begin{align*}
\Hmin^\eps(A'I')_{\rho^i} &= \Hmin^\eps(A')_{\rho^i} \nonumber \\
&\geq  \Hmin^\eps(A'|E)_{\mathcal N^{i,c}(\rho^i)} =  -\Hmax^\eps(A'|B)_{\mathcal N^i(\rho^i)}
\end{align*}
which implies that
\begin{align}\label{eq:plain decouple 3.1}
\EE\Big\lVert \tr_A \Big[\widetilde{\mathcal E}^i_{A_0\to A}(\Phi^+_{A_0R})\Big]  - \tau_R \Big\rVert_1\leq\delta +12\eps.
\end{align}
Combining~\eqref{eq:plain decouple 2}, \eqref{eq:plain decouple 3.1}, Markov's inequality $\Prob(Z > k \EE(Z)) \leq 1/k$ (for $k=N+2$), and the union bound, we find that there exist a unitary $U_{AI}$ and a quantum operation $\mathcal D_{B\to A_0}$ such that
\begin{align*}
  \left\lVert \tr_A\left[\widetilde{\mathcal E}^i_{A_0\to A}(\Phi^+_{A_0R})\right] - \tau_{R} \right\rVert_1
  \leq (N+2) (\delta + 12\eps)
\end{align*}
for $i=1,\dots,N$,
with
\begin{align*}
		&\quad\left\lVert \mathcal D_{B\to A_0} \circ \frac1N \sum_{i=1}^N \mathcal N^i_{A\to B} \circ \widetilde{\mathcal E}^i_{A_0\to A}(\Phi^+_{A_0R}) - \Phi^+_{A_0R} \right\rVert_1 \\
		&\leq (N+2)(\delta+12\eps + 2\sqrt{2(\delta+12\eps})).
  \end{align*}
As a consequence of the first bound and \cref{lem:uhlmann one is normalized}, we can then find quantum operations $\{ \mathcal E^i_{A_0\to A} \}_{i=1}^N$ (partial isometries) such that
  \begin{align*}
  &\quad\left\lVert (\widetilde{\mathcal E}^i_{A_0\to A} - \mathcal E^i_{A_0\to A})(\Phi^+_{A_0R}) \right\rVert_1 \nonumber \\
  &\leq (N+2) (\delta + 12\eps) + 2 \sqrt{2 (N+2) (\delta + 12\eps)}.
  \end{align*}
Using the triangle inequality as well as the fact that $\mathcal D$ and the $\mathcal N^i$ are completely positive and trace-nonincreasing, we obtain that
  \begin{align*}
  	&\quad\left\lVert\bigl(\mathcal D_{B\to A_0} \circ  \frac1N \sum_{i=1}^N\mathcal N^i_{A\to B} \circ \mathcal E^i_{A_0\to A}\bigr)(\Phi^+_{A_0R}) - \Phi^+_{A_0R} \right\rVert_1 \\
  	&\leq\left\lVert\bigl(\mathcal D_{B\to A_0} \circ \frac1N \sum_{i=1}^N \mathcal N^i_{A\to B} \circ \widetilde{\mathcal E}^i_{A_0A_1\to A}\bigr)(\Phi^+_{A_0R}) - \Phi^+_{A_0R}\right\rVert_1  \\
	&\quad + \frac1N \sum_{i=1}^N\left\lVert(\widetilde{\mathcal E}^i_{A_0\to A} - \mathcal E^i_{A_0\to A})(\Phi^+_{A_0R})\right\rVert_1 \\
  	&\leq(N+2) (\delta + 12\eps) + 2 \sqrt{2 (N+2) (\delta + 12\eps)}  \\
	&\quad +  (N+2)(\delta+12\eps + 2\sqrt{2(\delta+12\eps}))\\
	&\leq 8 (N+2) (\sqrt{\delta} + 6\sqrt{\eps}).
\end{align*}
For the last step we assumed that $\delta \in (0,1]$ (without loss of generality). Lastly, we use~\eqref{eq:fidelity lower bound one is normalized} to turn this into a lower bound on the entanglement fidelity:
  \begin{align*}
  &\quad F(\Phi^+_{A_0R}, \bigl( \frac1N \sum_{i=1}^N \mathcal D_{B\to A_0} \circ \mathcal N^i_{A\to B} \circ \mathcal E^i_{A_0\to A}\bigr)(\Phi^+_{A_0R} )) \\
  &\geq 1 - 16 (N+2) \bigl(\sqrt{\delta} + 6 \sqrt{\eps} \bigr).
  \end{align*}
Using the same argument that was used to derive~\eqref{eq:entanglement fidelity union bound} this implies the claim.
\end{proof}


\bibliographystyle{IEEEtran}
\bibliography{IEEEabrv,literature}





\end{document}